\newcommand{\INDSTATE}[1][1]{\State\hspace{1cm}}
\newlength\myindent
\newtheorem{fact}{Fact}
\newcommand{\size}[1]{\ensuremath{\left|#1\right|}}
\newcommand{\imply}{\supset}
\newcommand{\mil}{\ensuremath{\mathbf{M}_{\imply}}}
\newcommand{\Branch}[1]{\overrightarrow{#1}}
\newcommand{\symbvdash}{'\vdash'}
\newcounter{specline}
\newcommand{\spec}[3][{}]{%
   \setcounter{specline}{0}%
   \ifthenelse{\equal{#1}{numbers}}%
      {\setboolean{numberspec}{true}}%
      {\setboolean{numberspec}{false}}%
   \ensuremath{
      \begin{array}{l}
         \ifthenelse{\equal{#2}{}}
            {\numberedline}
            {#2\nl}
         #3
      \end{array}
   }
}
\begin{document}
  \title{Going from the huge to the small: Efficient succint representation of proofs in Minimal implicational logic}
  \author{Edward Hermann Haeusler$^\dagger$
           \\ Departmento de Inform\'{a}tica\\ PUC-Rio\\
              Rio de Janeiro, Brasil \\
              $^\dagger$Email:hermann@inf.puc-rio.br}

  \maketitle

   \begin{abstract}

     A previous article shows that any linear height bounded normal proof of a tautology in the Natural Deduction for Minimal implicational logic (\mil) is as huge as it is redundant. More precisely, any proof in a family of super-polynomially sized and linearly height bounded
proofs have a sub-derivation that occurs super-polynomially many times in it. In this article, we show that by collapsing all the repeated sub-derivations we obtain a smaller structure, a rooted Directed Acyclic Graph (r-DAG), that is polynomially upper-bounded on the size of $\alpha$ and it is a certificate that $\alpha$ is a tautology that can be verified in polynomial time. In other words, for every huge proof of a tautology in \mil, we obtain a succinct certificate for its validity. Moreover,  we show an algorithm able to check this validity in polynomial time on the certificate's size. Comments on how the results in this article are related to a proof of the conjecture $NP=CoNP$ appears in conclusion.

\end{abstract}

   \section{Introduction}

   In \cite{Exponential} and \cite{SuperPoly} we discuss the correlation between the size of proofs and how redundant they can be. A proof or logical derivation is redundant whenever it has sub-proofs that are repeated many times inside it. The articles \cite{Exponential} and \cite{SuperPoly} focus on Natural Deduction (ND) proofs in the purely implicational minimal logic $\mil$. This logic, here called $\mil$, is PSPACE-complete and simulates polynomially any proof in Intuitionistic Logic and full minimal logic, being hence an adequate representative to study questions regarding computational complexity. The fact that $\mil$ has straightforward syntax and ND system is worthy of notice. Moreover, compressing proofs in $\mil$ can provide very good glues to compress proofs in any one of these mentioned systems, even for the Classical Propositional Logic. In \cite{BoSL} and \cite{Studia2019}, we prove that for every \mil tautology $\alpha$ there is a two-fold certificate for the validity of $\alpha$ in \mil. The certificate is polynomially sized on the length of $\alpha$ and verifiable in polynomial time on this length too. This is the general approach of our proof that $NP=PSPACE$. It is well-known that $NP=PSPACE$ implies $CoNP=NP$, so we can conclude also that $CoNP=NP$. This article, together with the articles \cite{Exponential}, \cite{SuperPoly}, and one of the appendixes of \cite{Exponential},  aim to show an alternative and more intuitive proof that $NP=CoNP$. Recently, we deposited in arxiv a note, see \cite{Addendum},  that explains how to have a third alternative proof of $NP=CoNP$, this time with a double certificate on linear height normal proofs, without to use Hudelmaier result.   This paper aims to show how to use the inherent redundancy in huge proofs to have polynomial and polytime certificates by removing this redundancy from the proofs. This is made by collapsing all the redundant sub-proofs into only one occurrence in the proof, by type. We start with tree-like Natural Deduction proofs and end up with a labeled r-DAG (rooted Directed Acyclic Graph). In the sequel we explain and show what is said in the last two phrases, previously stated. Currently, the use of the redundancy theorem and corollary shown in \cite{SuperPoly}, that is the essence of this proof's approach, is not easily adaptable to a proof of $NP=PSPACE$. In \cite{BoSL} the linearly height  upper-bounded proofs of the tautologies in \mil are do not have have to be normal. 

   In \cite{Exponential}, we identify sets of huge proofs with sets of proofs that, when viewed as strings, have their length lower-bounded by some exponential function. Moreover, we can consider, without loss of generality,  proof/deductions, which are linearly height bounded, as stated in \cite{Studia2019}.  We prove that the exponentially lower-bounded $\mil$ proofs are redundant, in the sense that there is at least one sub-proof for each proof that occurs exponentially many times in it.
In \cite{SuperPoly},  we show that this result extends to super-polynomial proofs, i.e., proofs that are lower-bounded by any polynomial. We consider huge proofs/derivations as super-polynomial sized proofs.  Thus, we prove that, in any set of super-polynomially lower-bounded proofs in $\mil$, of tautologies, all proof are redundant. Redundancy means that there is a sub-proof that occurs super-polynomially many times in it for almost every proof in this set of huge proofs. 

In section~\ref{sec:Background}, we present a brief presentation and explanation on the proof-theoretical terminology used here and the results from our previous articles. For a more detailed and comprehensive reference, see \cite{Exponential}, \cite{SuperPoly}, \cite{BoSL} and \cite{Studia2019}. Section~\ref{sec:RemovingRedundancies} is the section where we show how to remove redundancies using a recursive method adequately. Section~\ref{sec:PolytimeCompression} shows that the certificate obtained in section~\ref{sec:RemovingRedundancies} is of polynomial size and can be checked in polynomial time too. We conclude this article in section~\ref{sec:Conclusion}. Finally, due to have a self-contained article, and the background and previous results that this paper uses, we advise the reader that there is an essential superposition of this article with \cite{SuperPoly} and \cite{Exponential}.

\section{Background on proof-theory and brief presentation of previous results}
\label{sec:Background}

One reason to study redundancy in proofs is to obtain a compressing method based on redundancy removals. According to
the redundancy theorem (theorem 12, page 17 in \cite{SuperPoly}) and its corollary (corollary 13, page 18) proofs belonging to a family of super-polynomial proofs are super-polynomial redundant. The reader can find all the content of this section in \cite{SuperPoly} in more detail.  This section contains excerpts of \cite{SuperPoly}. 

The Natural Deduction system,  defined in \cite{Gentzen1936}),  is taken as a set of inference rules that settle the concept of a logical deduction. Natural Deduction does not have axioms. It implements in the level of the logical calculus the  {\em (meta)theorem of deduction}, $\Gamma, A\vdash A\imply B$, via the discharging mechanism. The $\imply$-introduction rule, for example, uses this discharging mechanism in the logic calculus level. 

\begin{prooftree}
  \AxiomC{[$A$]}
  \noLine
  \UnaryInfC{$\Pi$}
  \noLine
  \UnaryInfC{$B$}
  \RightLabel{$\imply$-Intro}
  \UnaryInfC{$A\imply B$}
\end{prooftree}

We embrace formulas occurrence  $A$ in the derivation $\Pi$ of $B$ from $A$ with a pair of [] to indicate the discharge of them. An embraced formula occurrence [$A$] means that from the $\imply$-Intro rule discharging it down to the conclusion of the derivation, the inferred formulas do not depend anymore on this occurrence $A$.
The choice of formulas to be discharged in an $\imply$-Intro rule is arbitrary and liberal. The range of this choice goes from every occurrence of $A$, the discharged formula until none of them.

The following derivations show two different ways of deriving $A\imply (A\imply A)$. Observe that in both deductions or derivations, we use numbers to indicate the application of the $\imply$-Intro that discharged the marked formula occurrence. For example, in the right derivation, the upper application discharges the marked occurrences of $A$, while in the left derivation, it is the lowest application that discharges the formula occurrences $A$. There is a third derivation that both applications do note discharge any $A$, and the conclusion $A\imply(A\imply A)$ keep depending on $A$. This third alternative appears in figure~\ref{third}. Natural Deduction systems can provide logical calculi without any need to use axioms. In this article, we focus on the system formed only by the $\imply$-Intro rule and the $\imply$-Elim rule, as shown below, also known by {\em modus ponens}. The logic behind this logical calculus is the purely minimal implicational logic, $M_{\imply}$.

\begin{prooftree}
  \AxiomC{$A$}
  \AxiomC{$A\imply B$}
  \RightLabel{$\imply$-Elim}
  \BinaryInfC{$B$}
\end{prooftree}

Without loss of generality, we substitute the liberal discharging mechanism by a greedy one that discharges every possible formula occurrence whenever the $\imply$-Intro is applied. As stated and proved in \cite{SuperPoly}, if there is an N.D. proof of $\alpha$ in \mil, then there is a proof of $\alpha$ in \mil that has all applications of $\imply$-Intro as greedy ones.

\begin{prooftree}
  \AxiomC{$[A]^1$}
  \UnaryInfC{$A\imply A$}
  \RightLabel{$\;^1$}
  \UnaryInfC{$A\imply (A\imply A)$}
  \AxiomC{$[A]^1$}
  \RightLabel{$\;^1$}
  \UnaryInfC{$A\imply A$}
  \UnaryInfC{$A\imply (A\imply A)$}
  \noLine
  \BinaryInfC{}
\end{prooftree}

\begin{figure}[H]
\begin{prooftree}
  \AxiomC{$A$}
  \UnaryInfC{$A\imply A$}
  \UnaryInfC{$A\imply (A\imply A)$}
\end{prooftree}
\caption{Two vacuous $\imply$-Intro applications}\label{third}
\end{figure}

In our previous articles, we consider Natural Deduction as trees with the sake of having simpler proofs of our results. There is a binary tree with nodes labelled by the formulas and edges linking premises to the conclusion for any ND derivation. The tree's root is the conclusion of the derivation, and the leaves are its assumptions. Figure~\ref{ex:derivacao} has the tree in figure~\ref{ex:tree} representing it.  In a proof-tree, the set of formulas that the label of $u$ depends on $v$  labels the edge from  $v$ to  $u$. This set of formulas is called the dependency set of $u$ from $v$. The greedy version of the $\imply$-intro removes the discharged formula from the corresponding dependency sets, as shown in figure~\ref{ex:tree}. We need one more extra edge and the root node. The dependency set of the conclusion labels this edge. That is why the edge links the conclusion to the dot in figure~\ref{ex:tree}.

\begin{figure}[H]
  \begin{prooftree}
  \AxiomC{$[A]^{1}$}
  \AxiomC{$A\imply B$}
  \BinaryInfC{$B$}
  \AxiomC{$B\imply C$}
  \BinaryInfC{$C$}
  \UnaryInfC{$\LeftLabel{1}A\imply C$}
  \end{prooftree}
\caption{A derivation in $M_{\imply}$}\label{ex:derivacao}
\end{figure}

\begin{figure}[H]
      \begin{tikzpicture}[shorten >=1pt,node distance=1cm,auto]
\tikzstyle{state}=[shape=circle,thick,minimum size=1.5cm]
\node[state] (ground) {.};
\node[state, above of=ground] (AC) {$A\imply C$};
\node[state, above of=AC] (C) {$C$};
\node[state, above left of=C] (B) {$B$};
\node[state, above left of=B] (A) {$A$};
\node[state, above right of=B] (AB) {$A\imply B$};
\node[state, above right of=C] (BC) {$B\imply C$};
\path [draw,->] (ground) edge node[midway] {{\tiny $\{A\imply B, B\imply C\}$}} (AC);
\path [draw,->] (AC) edge node[midway] {{\tiny $\{A,A\imply B, B\imply C\}$}} (C);
\path [draw,->] (C) edge node[midway,right] {{\tiny $\{B\imply C\}$}} (BC);
\path [draw,->] (C) edge node[midway, left] {{\tiny $\{A,A\imply B\}$}} (B);
\path [draw,->] (B) edge node[midway, right] {{\tiny $\{A\imply B\}$}} (AB);
\path [draw,->] (B) edge node[midway, left] {{\tiny $\{A\}$}} (A);
    \end{tikzpicture}
\caption{The tree representing the derivation in figure~\ref{ex:derivacao}}\label{ex:tree}
\end{figure}

Finally, we use bitstrings induced by an arbitrary linear ordering of formulas to have a more compact representation of the dependency sets. Considering that only subformulas of the conclusion can be in any dependency set, we only need bitstrings of the size of the conclusion of the proof. Figure~\ref{ex:bitstring} shows this final form of tree representing the derivation in figure~\ref{ex:derivacao} and~\ref{ex:tree}, when the linear order $\prec$ is $A\prec B\prec C\prec A\imply B\prec B\imply C\prec A\imply C$. This explanation is an excerpt from \cite{SuperPoly}. 

\begin{figure}[H]
    \begin{tikzpicture}[shorten >=1pt,node distance=1cm,auto]
\tikzstyle{state}=[shape=circle,thick,minimum size=1.5cm]
\node[state] (ground) {.};
\node[state, above of=ground] (AC) {$A\imply C$};
\node[state, above of=AC] (C) {$C$};
\node[state, above left of=C] (B) {$B$};
\node[state, above left of=B] (A) {$A$};
\node[state, above right of=B] (AB) {$A\imply B$};
\node[state, above right of=C] (BC) {$B\imply C$};
\path [draw] (ground) edge node[midway] {{\tiny $000110$}} (AC);
\path [draw] (AC) edge node[midway] {{\tiny $100110$}} (C);
\path [draw] (C) edge node[midway,right] {{\tiny $000010$}} (BC);
\path [draw] (C) edge node[midway, left] {{\tiny $100100$}} (B);
\path [draw] (B) edge node[midway, right] {{\tiny $000100$}} (AB);
\path [draw] (B) edge node[midway, left] {{\tiny $100000$}} (A);
    \end{tikzpicture}
\caption{Tree with bitstrings representing the derivation in figure~\ref{ex:derivacao}}\label{ex:bitstring}
\end{figure}



Without loss of generality, we consider the additional hypothesis on the linear bound on height of the proof of \mil tautologies. In \cite{Studia2019}, we show that any tautology in \mil has a Natural Deduction normal proof of height bound by the size of this tautology. However, proof of the tautology does not need to be normal. 
On the other hand, if we consider the complexity class $CoNP$ (see the appendix in ~\cite{Exponential}) we are naturally limited to linearly height-bounded proofs. The proofs, in \mil,  of the non-hamiltonianicity of graphs, are linearly height bounded.

We consider the usual definition of the syntax tree
for \mil-formulas. Given a formula $\phi_1\imply \phi_2$ in \mil, we call $\phi_2$ its right-child and $\phi_1$ its left-child. These formulas label the respective right and left child vertexes. A right-ancestral of a vertex $v$ in a syntax-tree $T_{\alpha}$ of a formula $\alpha$ is any vertex $u$, such that, either $v$ is the right-child of $u$, or, there is a vertex $w$, such that $v$ is the right-child of $w$ and $u$ is right-ancestral of $w$.

The left premise of a $\imply$-Elim rule is called a minor premise, and the right premise is called the major premise. We should note that the conclusion of this rule and its minor premise, are sub-formulas of its major premise. A derivation is a tree-like structure built using $\imply$-Intro and $\imply$-Elim rules. We have some examples depicted in the last section. The derivation conclusion is the root of this tree-like structure, and the leaves are what we call top-formulas. A proof is a derivation that has every top-formula discharged by a $\imply$-Intro application in it. The top-formulas are also called assumptions. An assumption that it is not discharged by any $\imply$-Intro rule in a derivation is called an open assumption. If $\Pi$ is a derivation with conclusion $\alpha$ and $\delta_1,\ldots,\delta_n$ as all of its open assumptions then we say that $\Pi$ is a derivation of $\alpha$ from $\delta_1,\ldots,\delta_n$.

\begin{definition}[Branch]\label{def:Branch} A branch in a derivation or proof $\Pi$ is any sequence $\beta_1,\ldots,\beta_k$ of formula occurrences in $\Pi$, such that:
\begin{itemize}
\item $\delta_1$ is a top-formula, and;
\item For every $i=1,k-1$,  either $\beta_i$ is a $\imply$-Elim major premise of $\beta_{i+1}$ or $\beta_i$ is a $\imply$-Intro premise of $\beta_{i+1}$, and;
\item $\delta_k$ either is the conclusion of the derivation or the minor premise of a $\imply$-Elim.
\end{itemize}
\end{definition}

A normal derivation/proof in \mil is any derivation that does not have any formula occurrence simultaneously a major premise of a $\imply$-Elim and the conclusion of a $\imply$-Intro. A formula occurrence that is a conclusion of a $\imply$-Intro and a major premise of $\imply$-Elim is called a maximal formula. 
In \cite{Prawitz}, Dag Prawitz proves the following theorem for the Natural Deduction system for the full\footnote{The full propositional fragment is $\{\lor, \land, \imply, \neg, \bot\}$} propositional fragment of minimal logic.      

\begin{theorem}[Normalization]
  Let $\Pi$ be a derivation of $\alpha$ from $\Delta=\{\delta_1,\ldots, \delta_{n}\}$. There is a normal proof $\Pi^{\prime}$ of $\alpha$ from $\Delta^{\prime}\subseteq\Delta$.
\end{theorem}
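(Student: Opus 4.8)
The plan is to follow the standard detour-elimination strategy specialized to the implicational fragment, where the only reduction needed is the contraction of $\imply$-maximal formulas and there are no maximal segments to handle (segments arise only from the $\lor$- and $\exists$-elimination rules, which are absent in \mil). First I would make precise the single reduction step. Suppose a maximal formula $A\imply B$ occurs in $\Pi$: it is the conclusion of an $\imply$-Intro applied to a subderivation $\Pi_2$ of $B$ (depending on the discharged occurrences $[A]$) and is simultaneously the major premise of an $\imply$-Elim whose minor premise $A$ is the conclusion of a subderivation $\Pi_1$. The detour conversion replaces this configuration by the derivation obtained from $\Pi_2$ by substituting a fresh copy of $\Pi_1$ for every discharged top-formula occurrence $[A]$, yielding a derivation of $B$ in place of the old $\imply$-Elim conclusion. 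I would check that this rewriting is well-defined and can only shrink the open assumptions: the occurrences $[A]$ being replaced are discharged, hence not open, and if they are discharged zero times the whole of $\Pi_1$ is simply deleted. This already gives the inclusion $\D'\subseteq\D$ and preserves the end-formula $\alpha$.

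Second I would set up the terminating measure. Define the degree of a formula as its number of $\imply$ connectives, and assign to each derivation the pair $\langle d,n\rangle$, where $d$ is the maximal degree of a maximal formula occurring in it and $n$ is the number of maximal formulas of that degree, ordered lexicographically. The crucial point is to choose the maximal formula to contract well: I would pick a maximal formula $A\imply B$ of maximal degree $d$ that is topmost, in the sense that neither $\Pi_1$ nor $\Pi_2$ contains another maximal formula of degree $d$. Such a choice always exists because derivations are finite, so moving upward along degree-$d$ maximal formulas must terminate.

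Third I would verify that reducing this chosen formula strictly lowers $\langle d,n\rangle$. The reduction deletes the maximal formula $A\imply B$ and duplicates $\Pi_1$; since $\Pi_1$ contains no degree-$d$ maximal formula, the duplication introduces none. The only genuinely new maximal formulas can appear (i) where a copy of $\Pi_1$ with conclusion $A$ is plugged into a position that is a major premise of an $\imply$-Elim, or (ii) at the root $B$ of $\Pi_2$ if it now becomes maximal; in both cases the new maximal formula has degree $\deg(A)<d$ or $\deg(B)<d$. Hence no maximal formula of degree $d$ is created while exactly one is removed, so $n$ strictly decreases and $d$ cannot increase. By well-foundedness of the lexicographic order the rewriting terminates, and a derivation admitting no maximal formula is precisely a normal derivation of $\alpha$ from some $\D'\subseteq\D$.

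The main obstacle is the termination argument, specifically ruling out that the substitution silently manufactures new maximal formulas of degree $d$ or higher; everything hinges on the degree bookkeeping together with the topmost choice of the contracted formula, which confines all newly created detours to strictly smaller degree. Once that confinement is established, the induction on $\langle d,n\rangle$ is routine, and the preservation of $\alpha$ and the monotone shrinking of open assumptions give the stated conclusion.
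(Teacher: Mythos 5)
Your proposal is correct: it is precisely the classical Prawitz detour-elimination argument specialized to the purely implicational fragment, with the standard lexicographic measure $\langle d,n\rangle$ and the crucial redex choice guaranteeing that the duplicated minor-premise derivation $\Pi_1$ contains no maximal formula of top degree, which is exactly what confines newly created detours to smaller degree. The paper gives no proof of its own---it states the theorem as Prawitz's result and cites his monograph---so your argument coincides with the proof the paper implicitly relies on.
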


In any normal derivation/proof, a branch's format provides worth information on why huge proofs are redundant, as we will see in the next sections. Since no formula occurrence can be a major premise of $\imply$-Elim and conclusion of a $\imply$-Intro rule in a branch we have that the conclusion of a $\imply$-Intro can only be the minor premise of a $\imply$-Elim or it is not a premise of any rule application at all in the same branch. In this last case, it is the derivation's conclusion or the minor premise of a $\imply$-Elim rule. In any case, it is the last formula in the branch. Thus, for any branch, every conclusion of a $\imply$-Intro has to be a premise of a $\imply$-Intro. Hence, any branch in a normal derivation splits into two parts (possibly empty). The E-part starts it with the top-formula, and, every formula occurrence in it is the major premise of a $\imply$-Elim. We may have then a formula occurrence that is the conclusion of a $\imply$-Elim and premise of a $\imply$-Intro rule that is called minimal formula of the branch. The minimal formula starts the I-part of the branch, where every formula is the premise of a $\imply$-Intro, excepted the last formula of the branch. From the branches' format, we can conclude that the sub-formula principle holds for normal proofs in Natural Deduction for \mil, in fact, for many extensions. A branch in $\Pi$ is said to be a principal branch if its last formula is the conclusion of $\Pi$. A secondary branch is a branch that is not principal. The primary branch is called a 0-branch.  Branches, where the last formula is the minor premise of a rule in the E-part of a $n$-branch, is a $n+1$-branch.

\begin{corollary}[Sub-formula principle]\label{coro:SubForProperty}
  Let $\Pi$ be a normal derivation of $\alpha$ from $\Delta=\{\delta_1,\ldots,\delta_m\}$. It is the case that for every formula occurrence $\beta$ in $\Pi$, $\beta$ is a sub-formula of either $\alpha$ or of some of $\delta_i$.
\end{corollary}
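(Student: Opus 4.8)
The plan is to exploit the E-part/I-part decomposition of branches in a normal derivation, which was just established, together with the two local sub-formula facts the rules provide: in an $\imply$-Elim the conclusion and the minor premise are sub-formulas of the major premise, and in an $\imply$-Intro both the premise and the discharged assumption are sub-formulas of the conclusion $A\imply B$. Every formula occurrence of $\Pi$ lies on some branch, so the whole problem reduces to controlling, for each branch, the two formulas that its own structure does not immediately bound: its top-formula and its last formula.

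First I would record the behaviour inside a single branch $B=\beta_1,\ldots,\beta_k$. Descending the E-part, every $\beta_{i+1}$ is a sub-formula of $\beta_i$, so each E-part formula (and each minor premise hanging off it) is a sub-formula of the top-formula $\beta_1$. Ascending the I-part, every $\beta_i$ is a sub-formula of $\beta_{i+1}$, so the minimal formula and every I-part formula (and every assumption discharged there) is a sub-formula of the last formula $\beta_k$. Hence every occurrence lying on $B$ is a sub-formula of $\beta_1$ or of $\beta_k$, and it suffices to prove that the top-formula and the last formula of every branch are sub-formulas of $\alpha$ or of some $\delta_i$.

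The core is then an induction on the order of the branch. For the last formula: the $0$-branch ends in $\alpha$ itself, and an $(n+1)$-branch ends in the minor premise of an $\imply$-Elim whose major premise sits in the E-part of an $n$-branch $B$; that minor premise is a sub-formula of the major premise, hence of the top-formula of $B$, which the induction hypothesis places among the sub-formulas of $\alpha$ or some $\delta_i$. For the top-formula $t$: if $t$ is an open assumption it is one of the $\delta_i$; otherwise $t$ is discharged by some $\imply$-Intro $I$ lying on the unique path from $t$ down to the root. The key geometric remark is that this path traverses branches of strictly decreasing order $n,n-1,\ldots,0$, so $I$ falls in the I-part of a branch of order at most the current one; the discharged-assumption fact makes $t$ a sub-formula of the conclusion of $I$, hence of the last formula of that branch, which is already controlled (either as the last formula of the same branch, handled first, or by the induction hypothesis at a lower order).

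The main obstacle is precisely this treatment of discharged (non-open) top-formulas: unlike the formulas strung along a branch, a discharged assumption is bounded only by the conclusion of the $\imply$-Intro that discharges it, and that conclusion may live on a different branch. Keeping the argument non-circular rests on the observation that the discharging rule always sits below the assumption on the path to the root, and therefore on a branch of no greater order, which is exactly what lets the induction on branch order close. Once both endpoints of every branch are shown to be sub-formulas of $\alpha$ or of some $\delta_i$, combining this with the intra-branch bound yields the claim for every formula occurrence $\beta$ in $\Pi$.
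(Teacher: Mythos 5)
Your proposal is correct and is exactly the argument the paper relies on: it fleshes out, via the E-part/I-part decomposition of branches and an induction on branch order ($0$-branch, $n$-branch, $(n{+}1)$-branch), the Prawitz-style reasoning that the paper invokes when it says the sub-formula principle follows ``from the branches' format.'' Your explicit handling of discharged top-formulas (locating the discharging $\imply$-Intro in the I-part of a branch of no greater order along the path to the root) is the standard way to close that induction and matches the intended proof.
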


To facilitate the presentation, we only handle normal proofs in expanded form.

\begin{definition} A normal proof/derivation is in expanded form, if and only if, all of its minimal formulas are atomic.
\end{definition}

Without loss of generality, we can consider that formula in $\mil$ is a tautology if and only if there is a normal proof in expanded form that proves it. Of course, if it is a tautology, it has proof, and so it has normal proof by normalization. We use the following fact to obtain the expanded form from normal proof. In \cite{SuperPoly} we prove that all tautologies have normal proofs in expanded form. See  the first appendix of \cite{SuperPoly}


\begin{figure}
  \centering
 \centering
  \begin{tikzpicture}
    \node[anchor=south west] (proof) at (0,0) {%
        \AxiomC{$[D]$}
        \AxiomC{$C$}
        \AxiomC{$B$}
        \AxiomC{$[A]$}
        \AxiomC{$[A\imply (B\imply (C\imply q))]$}
        \BinaryInfC{$B\imply (C\imply q)$}
        \BinaryInfC{$C\imply q$}
        \BinaryInfC{$q$}
        \UnaryInfC{$A\imply q$}
        \AxiomC{$[(A\imply q)\imply (D\imply q)]$}
        \BinaryInfC{$D\imply q$}
        \BinaryInfC{$q$}
        \UnaryInfC{$D\imply q$}
        \UnaryInfC{$((A\imply q)\imply (D\imply q))\imply (D\imply q)$}
        \UnaryInfC{$(A\imply (B\imply (C\imply q)))\imply (((A\imply q)\imply (D\imply q))\imply (D\imply q))$}
    \DisplayProof};
  \begin{scope}[x={(proof.south east)},y={(proof.north west)}]
    \node (ANCHOR) at (.38,.36) {};
    \node (ANCHOR1) at (.32,.64) {};
  \end{scope}
  \begin{scope}[sibling distance=15em,
      every node/.style={shape=rectangle, rounded corners,
        draw, align=center,top color=white, bottom color=gray!20}]
  \node[below=of proof,xshift=-5em] {$\alpha$}
  child { [sibling distance=5em] node {$A\imply (B\imply (C\imply q))$}
    child { node {$A$} }
    child { [sibling distance=4em] node {$(B\imply (C\imply q))$}
      child {  node {$B$} }
      child { [sibling distance=2em] node {$C\imply q$}
        child { node {$C$} }
        child { node (b) {$q$} }
            }
          }
        }
    child { [sibling distance=7em] node {$((A\imply q)\imply (D\imply q))\imply (D\imply q)$}
      child { [sibling distance=6em] node {$(A\imply q)\imply (D\imply q)$}
        child { [sibling distance=2em] node {$A\imply q$}
          child { node {$A$} }
          child { node {$q$} }
              }
        child { [sibling distance=2em] node {$D\imply q$}
          child {  node {$D$} }
          child {  node (c) {$q$} }
              }
            }
      child { [sibling distance=2em]  node {$D\imply q$}
          child {  node {$D$} }
          child {  node (c) {$q$} }
    } };
    \path (0,0) rectangle (1,-2);
    \draw[->,shorten >=1pt,>=Stealth,dashed] (ANCHOR)
    to [out=0, in=45,out looseness=.5, out min distance=20em] (c);
    \draw[->,shorten >=1pt,>=Stealth,dashed] (ANCHOR1)
    to [out=0, in=45,out looseness=.5, out min distance=-20em] (b);
  \end{scope}
  \end{tikzpicture}
  \caption{A mapped N.D. proof}
  \label{figure:MappedProofs}
\end{figure}

In \cite{SuperPoly}, we observed correspondence between each minimal formula of a branch, in a normal and expanded proof $\Pi$, employing a one-to-one correspondence to the respective top-formula occurrence of its branch. Figure~\ref{figure:MappedProofs} illustrates the mapping from the proof into the syntax-tree of the proved formula according to this correspondence. Note that the two positions of the atomic formula $q$ in the syntax tree uniquely indicates the top-formula in the E-part of the Natural Deduction proof/derivation to which it belongs. We can consider that the two $q$'s are in fact, different. The top-formula of each $q$ is the biggest in the inverse path (upwards) following the reverse of the right child edge.  
 Definition~\ref{def:E-mapped-ND}, in the sequel, has the purpose of setting this correspondence in the representation of proofs formally.  With the sake of a more precise presentation, we provide below the definition of a formula's syntax tree.

\begin{definition}[Syntax tree of a formula]
  Let $\alpha$ be a \mil-formula. The syntax tree of $\alpha$ is the triple $\langle V,E_{left},E_{right},L\rangle$ where $V$ is a set, of vertexes, $E_{s}\subseteq V\times V$, $s=left,right$, the corresponding left and right edges, such that $\langle V,E_{left},E_{right}\rangle$ is an ordered full binary tree, and, $L$ is a bijective function from $V$ onto the subformulas of $\alpha$, such that:
  \begin{itemize}
  \item $L(r)=\alpha$, where $r\in V$ is the root of the tree $\langle V,E_{left},E_{right}\rangle$, and;
  \item For every formula $\varphi_1\imply\varphi_2\in Sub(\alpha)$, if $L(v)=\varphi_1\imply\varphi_2$,  $\langle v,v_1\rangle\in E_{left}$ and $\langle v,v_2\rangle\in E_{right}$ then $L(v_1)=\varphi_1$ and $L(v_2)=\varphi_2$.
  \end{itemize}
  \end{definition}

\begin{definition}[Partially mapped ND-proofs]\label{def:Mapped-ND}
  Let $\alpha$ be a \mil-formula and $T_{\alpha}=\langle V,E_{left},E_{right},L\rangle$ its syntax tree. Let $\Pi$ be a \mil-ND normal derivation of $\alpha$. A partially mapped ND-proof of $alpha$ is a structure $\langle \Pi,T_{\alpha},l\rangle$, where $l$ is a partial function from the formula occurrences in $\Pi$ to $V$, such that, the following conditions hold.
  \begin{itemize}
  \item If $\gamma$ is the minimal formula of a branch $\Branch{b}$ in $\Pi$ then if $\ell(\gamma)$ is defined then $L(\ell(\gamma))=\gamma$;
    \item If $\gamma$ is the minimal formula of a branch $\Branch{b}=\langle b_0,\ldots,b_j=\gamma,\ldots,b_k\rangle$ and $\ell(\gamma)$ is defined then either $\ell(b_{j-1})$ or $\ell(b_{j+1})$ are defined, and;
    \item If $\varphi_2$ is the conclusion of a $\imply$-Elim rule in $\Pi$, that has premises $\varphi_1\imply\varphi_2$ and $\varphi_1$,  and $\ell(\varphi_2)=v_2$ then there are $v$ and $v_1$, such that $\langle v, v_2\rangle\in E_{right}$, $\langle v,v_1\rangle\in E_{left}$, $\ell(\varphi_1)=v_1$ and $\ell(\varphi_1\imply\varphi_2)=v$;
          \item If $\varphi_1\imply\varphi_2$ is the conclusion of a $\imply$-Intro rule in $\Pi$, that has premise $\varphi_2$ and $\ell(\varphi_1\imply\varphi_2)=v$ then there is $v^{\prime}\in V$, $\langle v,v^{\prime}\rangle\in E_{right}$ and $\ell(v^{\prime})=\varphi_2$. 
      \end{itemize}
\end{definition}

\begin{definition}[E-mapped Natural Deduction Normal Expanded proofs]\label{def:E-mapped-ND}
  Let $\alpha$ be a \mil-formula, $T_{\alpha}=\langle V,E_{left},E_{right},L\rangle$ be the syntax tree of $\alpha$ and $\Pi$ a normal and expanded proof of $\alpha$. The triple $\langle \Pi,T_{\alpha},l\rangle$ is an E-mapped Natural Deduction proof, if and only if, $l$ is defined on all formula occurrences that take part in the E-parts of branches in $\Pi$, including the minimal formulas. Moreover the following condition must hold:
  \begin{itemize}
  \item For every branch $\Branch{b}$, if $q$ is the minimal formula of $\Branch{b}$, $\ell(q)=v\in V$ and $\beta$ is the top-formula (occurrence) of $\Branch{b}$ then $\ell(\beta)=u$, where $u$ is the right-ancestral of $v$ that is left-child of some $w\in V$.
  \end{itemize}
  \end{definition}

In \cite{SuperPoly} we show that the above definition of E-mapped Natural Deduction Normal Expanded proof, {\bf EmND} for short,   is well-defined. Moreover, we have the following proposition. We consider a branch as a sequence of formula occurrences numbered from top-formula down to the branch's last formula. The proof of this proposition is in \cite{SuperPoly}

\begin{proposition}\label{prop:one2one}
  Let $\langle \Pi,T_{\alpha},l\rangle$ be a {\bf EmND} of $\alpha$. We have that to each branch $\Branch{b}=\langle \beta_0,\ldots,\beta_k,$ in $\Pi$ that has the  minimal formula occurrence $q=\beta_{j}$, such that $\ell(\beta_{j})=u\in V$,  there exists one and only one path $p=\langle u_0,\ldots,u_j\rangle$ in $T_{\alpha}$, with $u_j=u$, and $u_0$ such that, $\ell(\beta_{i})=u_i$, $i=0,\ldots,j$. 
\end{proposition}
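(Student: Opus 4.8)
The plan is to read the claimed path off the E-part of the branch directly, taking $u_i := \ell(\beta_i)$, and then to verify two things: that the resulting vertex sequence is a genuine (right-edge) path ending at $u$ (existence), and that no other vertex sequence can satisfy $\ell(\beta_i)=u_i$ (uniqueness). I would begin by recalling that, since $q=\beta_j$ is the minimal formula of $\Branch{b}$, the segment $\beta_0,\ldots,\beta_j$ is precisely the E-part of $\Branch{b}$; that is, each $\beta_{i+1}$ (for $0\le i<j$) is the conclusion of an $\imply$-Elim whose major premise is $\beta_i$. Because $\langle\Pi,T_\alpha,\ell\rangle$ is an \textbf{EmND}, Definition~\ref{def:E-mapped-ND} guarantees that $\ell$ is total on every E-part, including its minimal formula, so each $u_i=\ell(\beta_i)$ is defined, and in particular $u_j=\ell(q)=u$.

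The key step is the edge-by-edge verification. Fix $i$ with $0\le i<j$ and write the $i$-th $\imply$-Elim of the E-part as concluding $\beta_{i+1}=\varphi_2$ from the minor premise $\varphi_1$ and the major premise $\beta_i=\varphi_1\imply\varphi_2$. Since $\ell(\beta_{i+1})=u_{i+1}$ is defined, the $\imply$-Elim clause of Definition~\ref{def:Mapped-ND} supplies a vertex $v$ with $\ell(\varphi_1\imply\varphi_2)=v$ and $\langle v,u_{i+1}\rangle\in E_{right}$; as $\ell(\varphi_1\imply\varphi_2)=\ell(\beta_i)=u_i$, this yields $\langle u_i,u_{i+1}\rangle\in E_{right}$. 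Concatenating these edges for $i=0,\ldots,j-1$ shows that $\langle u_0,\ldots,u_j\rangle$ is a path in $T_\alpha$ descending by right edges to $u_j=u$; the supplementary requirement that $u_0=\ell(\beta_0)$ be the left-child right-ancestral of $u$ is exactly the top-formula condition imposed in Definition~\ref{def:E-mapped-ND}, so existence is established.

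For uniqueness I would argue in two equivalent ways. Most directly, any path $\langle u_0',\ldots,u_j'\rangle$ meeting the requirement $\ell(\beta_i)=u_i'$ must have $u_i'=\ell(\beta_i)=u_i$ for every $i$, since $\ell$ is a function; hence the path is forced. Structurally, the same conclusion follows from $T_\alpha$ being a full binary tree: each $u_{i+1}$ determines its unique parent $u_i$, so reading upward from $u_j=u$ the whole sequence is recovered, and the constraint that every step be a right edge leaves no freedom. I expect the only genuinely delicate point to be the bookkeeping of roles --- matching ``major premise/conclusion'' of the E-part against ``parent $v$/right child $v_2$'' of the $\imply$-Elim clause, and confirming that the index range $0,\ldots,j$ lines up exactly with the E-part so that totality of $\ell$ (from Definition~\ref{def:E-mapped-ND}) indeed covers every $u_i$; once these are pinned down, the argument is a straightforward traversal.
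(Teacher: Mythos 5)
Your proof is correct: setting $u_i=\ell(\beta_i)$, invoking the $\imply$-Elim clause of Definition~\ref{def:Mapped-ND} to obtain each right edge $\langle u_i,u_{i+1}\rangle\in E_{right}$, checking the top-formula condition for $u_0$, and deriving uniqueness from the functionality of $\ell$ is exactly the argument the paper's definitions support; note that this paper itself gives no proof of Proposition~\ref{prop:one2one} (it defers to \cite{SuperPoly}), so this is the natural in-paper reconstruction. The only tacit point, which you use and should flag, is that an {\bf EmND} must be read as a partially mapped ND-proof in the sense of Definition~\ref{def:Mapped-ND} with $\ell$ total on E-parts --- Definition~\ref{def:E-mapped-ND} does not restate those consistency clauses, but without them the proposition would be false, so this is clearly the intended reading.
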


We point out that in proposition~\ref{prop:one2one}, above,  $\langle \beta_0,\ldots,\beta_j=q\rangle$ is the E-part of $\Branch{b}$.  This proposition~\ref{prop:one2one} states that any given E-part $\langle \beta_0,\ldots,\beta_j\rangle$ of a branch in an {\bf EmND} is an instance of at most one path $p=\langle u_0,\ldots,u_j\rangle$ in $T_{\alpha}$, such that $L(u_i)=\beta_i$, $i=0,\ldots,j$. Moreover, this path $p$ is as stated in the definition of {\bf EmND} in the its only item.  Given a {\bf EmND} $\Pi$,
for  each  E-part,  in an {\bf EmND},  exists a path of the form stated in definition of {\bf EmND}, in the syntax tree of the conclusion of the{\bf EmND}. The number of such paths in the syntax tree is upper-bounded by its size, then the number of different E-parts types in any {\bf EmND} is at most of the size of the conclusion of this {\bf EmND}. We have the following lemma:

\begin{lemma}[Linear upper-bound on types of E-parts]\label{lemma:Upper-boundsE-parts}
  Let $\Pi$ be an {\bf EmND} proving the \mil-formula $\alpha$. The number of different types of E-parts occurring in this {\bf EmND} is at most the size of the $T_{\alpha}$.
\end{lemma}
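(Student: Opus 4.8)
The plan is to prove Lemma~\ref{lemma:Upper-boundsE-parts} by leveraging Proposition~\ref{prop:one2one}, which establishes that every E-part of a branch in an \textbf{EmND} is an instance of exactly one path in the syntax tree $T_{\alpha}$. The central idea is to define a map from the set of E-part \emph{types} occurring in $\Pi$ into the set of relevant paths of $T_{\alpha}$, and then to show this map is injective so that counting the codomain bounds the domain.

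First I would fix what ``type'' of an E-part means: two E-parts have the same type when they are sequences of the same formula occurrences up to the identification of formulas, i.e.\ they determine the same sequence of labels. By Proposition~\ref{prop:one2one}, given an E-part $\langle\beta_0,\ldots,\beta_j\rangle$ with minimal formula $\beta_j=q$ and $\ell(\beta_j)=u$, there is one and only one path $p=\langle u_0,\ldots,u_j\rangle$ in $T_{\alpha}$ with $u_j=u$ and $\ell(\beta_i)=u_i$ for each $i$; moreover this path is exactly the one described in the single item of Definition~\ref{def:E-mapped-ND}, running from the top-formula's image $u_0$ (the right-ancestral of $u$ that is a left-child) down to $u$. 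So I would send each E-part type to this associated path $p$ in $T_{\alpha}$.

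Next I would argue injectivity. Because $L$ is a bijection between vertexes of $T_{\alpha}$ and subformulas of $\alpha$ (by the definition of the syntax tree), a path $p=\langle u_0,\ldots,u_j\rangle$ determines the sequence of labels $\langle L(u_0),\ldots,L(u_j)\rangle$, and by the mapping conditions these labels are precisely the formula occurrences $\langle\beta_0,\ldots,\beta_j\rangle$ making up the E-part. Hence two E-parts mapped to the same path must be the same sequence of formulas, i.e.\ the same type; the map is injective. The number of E-part types in $\Pi$ is therefore at most the number of such paths, which is in turn at most the number of vertexes (equivalently, the size) of $T_{\alpha}$, since each such path is uniquely determined by its minimal-formula endpoint $u$ and there are at most $\size{T_{\alpha}}$ choices for $u$.

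The main obstacle, and the step requiring the most care, is making the notion of ``type'' precise enough that the map is genuinely well-defined and the counting is honest: I must verify that distinct types really do yield distinct paths (so that no type is ``lost''), and that every path arising this way is counted once. Since the paths are indexed by their bottom vertex $u$ (the image of the minimal formula) and Proposition~\ref{prop:one2one} guarantees uniqueness of the whole path from that endpoint, the count reduces cleanly to bounding the number of possible minimal-formula images, which is at most $\size{T_{\alpha}}$. The remaining work is just to confirm that the one item of Definition~\ref{def:E-mapped-ND} forces the top of the path to be determined by $u$ as well, so no two genuinely different types can collapse onto a single path.
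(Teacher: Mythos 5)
Your proposal is correct and follows essentially the same route as the paper: the paper's own justification (given in the paragraph preceding the lemma) also invokes Proposition~\ref{prop:one2one} to associate each E-part with the unique path of the form required by Definition~\ref{def:E-mapped-ND} in $T_{\alpha}$, and then bounds the number of such paths by $\size{T_{\alpha}}$. Your additions --- making injectivity explicit via the bijectivity of $L$, and justifying the path count by noting that each such path is determined by its bottom vertex $u$ --- are refinements of steps the paper merely asserts, not a different argument.
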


We remark, as also observed in \cite{SuperPoly},  that we can label the nodes of a Natural Deduction proof-tree with the nodes (not the labels) of the syntax tree of the conclusion of the proof-tree. In doing that we will have the same effect on counting different types of E-parts that is stated by  lemma~\ref{lemma:Upper-boundsE-parts}.


\subsection{Redundancy in huge $\mathcal{M}_{\imply}$ mapped derivations}\label{subsec:Main}

Due to the linear speedup theorem, see \cite{LinSpeedUp-Space} page 63-64, Theorem 3.10, we can consider, w.l.o.g a linear height bounded proof of $\alpha$ is a proof which height is upper-bounded by the length of $\alpha$. In fact, in this article, because of counting details, we consider that the upper-bounded is the size of the syntax tree $\size{T_{\alpha}}$. Since $\size{T_{\alpha}}=\size{\alpha}$, the definition is equivalent. From \cite{SuperPoly}, we have the following lemmas. In \cite{SuperPoly} the reader can fund both proofs.   

  \begin{lemma}[Spreading Branchs Repetitions]\label{Zero} Let $\langle \Pi,T_{\alpha},l\rangle$ be a linearly height bounded {\bf EmND} proof of $\alpha$, $0<p\in \mathbb{N}$ and $m=\size{\alpha}$. If there is a branch $\Branch{b}$ that has more than $m^p$ instances occurring in $\Pi$ then there is a level $\mu$, such that, at least $m^{p-1}$ instances of $\Branch{b}$ have the minimal formula $q_{\Branch{b}}$ of $\Branch{b}$ occurring in level $\mu$.
    \end{lemma}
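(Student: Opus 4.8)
\section*{Proof proposal}

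The plan is to prove the lemma by the pigeonhole principle, using the linear height bound to cap the number of available levels and then distributing the branch instances according to where their minimal formulas sit. First I would set the \emph{level} of a formula occurrence in $\Pi$ to be its depth in the proof-tree (its distance from the conclusion). Because $\langle\Pi,T_{\alpha},l\rangle$ is linearly height bounded, we have $H(\Pi)\le\size{T_{\alpha}}=\size{\alpha}=m$, so every formula occurrence of $\Pi$ lies at one of at most $m$ distinct levels.

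Next I would make precise the assignment of instances to levels. Each instance of $\Branch{b}$ is itself a branch of $\Pi$ of the type of $\Branch{b}$, and hence carries a minimal formula occurrence $q_{\Branch{b}}$ (the occurrence separating its E-part from its I-part). That minimal formula is a single occurrence of $\Pi$, so it sits at exactly one level. This defines a function from the set of instances of $\Branch{b}$ to the set of levels, sending each instance to the level of its minimal formula. By hypothesis the domain has more than $m^{p}$ elements, while the codomain has at most $m$ elements.

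The conclusion is then immediate from pigeonhole: if more than $m^{p}$ instances are distributed over at most $m$ levels, some level $\mu$ receives at least $\lceil(m^{p}+1)/m\rceil$ of them, and since $(m^{p}+1)/m>m^{p-1}$ with $m^{p-1}$ an integer, this count is at least $m^{p-1}$. Thus at least $m^{p-1}$ instances of $\Branch{b}$ have their minimal formula $q_{\Branch{b}}$ occurring at level $\mu$, as required.

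The one step I would check with care is the level count: the argument needs at most $m$ levels, not $m+1$, since with $m+1$ bins the pigeonhole would only guarantee $\lceil(m^{p}+1)/(m+1)\rceil$, which can fall below $m^{p-1}$. This is precisely why the statement measures the height against $\size{T_{\alpha}}$ (equivalently $\size{\alpha}$) rather than against $H(\Pi)+1$; I would confirm from the chosen convention that the occurrences under consideration, and in particular these minimal formulas, really do range over at most $m$ distinct levels before invoking the count. Everything else is elementary counting, so this bookkeeping is the only genuine obstacle.
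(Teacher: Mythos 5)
You should first note that this paper does not actually prove Lemma~\ref{Zero}: it is imported verbatim from the earlier article \cite{SuperPoly}, and the text explicitly defers both proofs to that reference (``In \cite{SuperPoly} the reader can fund both proofs''). So there is no in-paper proof to compare against; your proposal must be judged on its own terms. On those terms, your pigeonhole argument is exactly the natural (and surely the intended) argument: each instance of $\Branch{b}$ contributes one occurrence of its minimal formula, that occurrence sits at a unique level, more than $m^{p}$ instances over at most $m$ levels forces some level $\mu$ to receive more than $m^{p-1}$ of them, hence at least $m^{p-1}$.

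The one place where your write-up stops short is the point you yourself flag: a tree of height at most $m$ has $m+1$ levels ($0$ through $m$), and pigeonhole over $m+1$ bins only yields roughly $m^{p}/(m+1)$, which falls below $m^{p-1}$ once $m^{p-1}\geq m+2$ (e.g.\ $m=3$, $p=3$: at least $28$ instances over $4$ levels guarantees only $7$, not $9$). You leave this as something ``to confirm'' rather than resolving it, and as stated that is a genuine hole in the argument. It can be closed as follows. The lemma concerns an {\bf EmND} proof, i.e.\ a \emph{normal proof in expanded form} of a tautology $\alpha$; in expanded form every minimal formula is atomic, while the unique formula occurrence at level $0$ is the conclusion $\alpha$ itself, which, being a \mil tautology, cannot be atomic. (For secondary branches the same conclusion is immediate: their last formula is a minor premise of a $\imply$-Elim, which already sits at level at least $1$.) Hence no minimal formula ever occupies level $0$, the minimal formulas of the instances of $\Branch{b}$ range over the levels $1,\ldots,H(\Pi)\leq m$ only, and the pigeonhole with exactly $m$ bins goes through. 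With that observation added, your proof is complete and correct.
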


  \begin{lemma}[Branchs and sub-derivations]\label{Um} Let $\Pi$ be a proof of $\alpha$, and $\Branch{b}$ a branch in $\Pi$ under the same conditions of the lemma~\ref{Zero} above. Then there is a (sub) derivation $\Pi_{\Branch{b}}$ of $\Pi$, such that, $\Pi_{\Branch{b}}$ has at least $m^{p-1}$ instances occurring in $\Pi$.
  \end{lemma}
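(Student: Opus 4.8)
The plan is to promote the repetition of a \emph{branch} (a path of formula occurrences) to the repetition of a genuine \emph{sub-derivation}. By Lemma~\ref{Zero} I may fix a level $\mu$ at which the minimal formula $q_{\Branch{b}}$ of at least $m^{p-1}$ instances of $\Branch{b}$ occurs. For each such instance I would take as candidate the sub-derivation $\Pi_{\Branch{b}}$ whose conclusion (root) is that occurrence of $q_{\Branch{b}}$: concretely, the E-part of $\Branch{b}$ together with all the minor sub-derivations feeding the $\imply$-Elim applications along it. This is the smallest honest sub-derivation containing the whole E-part of the branch, and turning ``the branch repeats'' into ``this sub-derivation repeats'' is exactly the content of the lemma.

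First I would record what is already forced by the fact that all these instances are instances of the \emph{same} branch $\Branch{b}$. By Proposition~\ref{prop:one2one} the E-part of each instance is an instance of one and the same path in $T_{\alpha}$ ending at the node $v=\ell(q_{\Branch{b}})$; hence the sequence of major premises, the top-formula, and the \emph{formulas} occurring as minor premises (the left children of the nodes on that path) are identical across all of the instances. Therefore the only thing that could still differ between two of these candidate sub-derivations is \emph{how} the minor premises are themselves derived.

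The heart of the argument is then to show that the same branch type together with the common level $\mu$ forces the minor sub-derivations to coincide as well, so that all $m^{p-1}$ copies of $\Pi_{\Branch{b}}$ are literally the same derivation by type. I would prove this by induction on the branch order (equivalently, on the height budget $m-\mu$ available above level $\mu$, which is the same for all the concentrated instances). The fixed geometry of the E-part pins each minor premise to a determined level; each minor premise is the last formula of a secondary $n{+}1$-branch whose own E-part is again pinned by Proposition~\ref{prop:one2one}, whose minimal formula is atomic because $\Pi$ is in expanded form, and whose I-part is forced by the (already fixed) shape of the minor-premise formula it must build by $\imply$-Intro. Applying the induction hypothesis to these secondary branches, which sit one branch-order lower and at matching levels, gives that the minor sub-derivations are identical; reassembling them above the common E-part produces a single derivation type occurring at least $m^{p-1}$ times.

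I expect the main obstacle to be precisely this rigidity step: proving that the level $\mu$, and not merely the branch type, determines $\Pi_{\Branch{b}}$, i.e.\ that within the fixed height budget the expanded-normal structure of each minor premise's derivation is uniquely determined. The delicate points are controlling the $\imply$-Intro applications of the I-parts (including which assumptions they discharge) and checking that the level constraint genuinely propagates down to the secondary branches, so that the induction hypothesis can be invoked at matching levels; Proposition~\ref{prop:one2one} and the expanded-form hypothesis are the tools I would lean on to close these gaps.
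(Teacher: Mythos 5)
Your proposal fails at exactly the step you yourself flag as its heart, and the failure is not a technical wrinkle that Proposition~\ref{prop:one2one} can close: the rigidity claim is false. Fixing the branch type and the common level $\mu$ does \emph{not} determine the sub-derivations of the minor premises. What the {\bf EmND} conditions force is only that the E-part of each branch follows a right-child chain of $T_{\alpha}$, and that a complex minor premise $\varphi$ is rebuilt by an I-part from its rightmost atom $q''$; but the E-part producing that occurrence of $q''$ may start from \emph{any} subformula occurrence of $\alpha$ in assumption position whose right-child chain ends in a vertex labelled $q''$. If $\alpha$ contains two such subformulas, say $X\imply d$ and $Y\imply d$, then two instances of the same branch $\Branch{b}$, both with minimal formula at level $\mu$, can derive their type-identical minor premise $c\imply d$ once from $X\imply d$ and $X$, and once from $Y\imply d$ and $Y$: both alternatives are normal, expanded, fit the same height budget, and are each compatible with a legitimate choice of the map $\ell$ (Proposition~\ref{prop:one2one} fixes the path of an E-part \emph{given} where its minimal formula is mapped; it does not fix that choice). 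So the upward-closed sub-trees rooted at the $m^{p-1}$ occurrences of $q_{\Branch{b}}$ need not contain $m^{p-1}$ copies of any single type, and pigeonhole cannot rescue you: the lemma's conclusion keeps the full $m^{p-1}$ that Lemma~\ref{Zero} delivers, so the argument can afford to lose no factor at all, while the number of candidate sub-derivation types grows super-polynomially with these recursive free choices. Note also that you have no in-paper proof to measure against: the paper states Lemma~\ref{Um} without proof, importing it and Lemma~\ref{Zero} from \cite{SuperPoly}.

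The statement becomes provable once you pick a different, smaller witness: take $\Pi_{\Branch{b}}$ to be the derivation determined by the branch itself --- its E-part (or E-part together with I-part) in which the minor premises of the $\imply$-Elim applications are left as \emph{open assumptions} --- rather than the full sub-tree rooted at $q_{\Branch{b}}$. Instances of the same branch have, by definition, identical formula sequences, so this pattern is literally one and the same derivation in each of the $m^{p-1}$ instances supplied by Lemma~\ref{Zero}; nothing about how the minor premises are derived needs to be compared, which is precisely why no factor of $m$ is lost. The price is that ``sub-derivation'' and ``instance'' must then be read in the weak sense of a connected sub-tree of $\Pi$ with some leaves left open, not in the strong sense of everything lying above a node. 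You should be explicit about this, because the weak reading is what makes Lemma~\ref{Um} true, whereas the collapsing machinery later in the paper (Definition~\ref{def:Detach} detaches the full sub-tree $\mathcal{D}\!\uparrow\!(k)$) relies on the strong reading --- the mismatch your rigidity claim was, in effect, trying to argue away.
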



\begin{definition}[Linearly height-bounded EmND proofs]\label{def:SLambda}
  Let $\Lambda$ be the set of mapped linearly height-bounded ND \mil proofs. We use the notation $c(\Pi)$ to denote the formula that is the conclusion of $\Pi$. Note that we can consider $\Lambda$ as a predicate $\Lambda(x)$ that is true if and only if $x$ is assigned to a mapped linearly height-bounded ND proof. 
\[
S_{\Lambda}=\{\mbox{$\Pi\in \Lambda$}:\mbox{$\forall p\in\mathbb{N}, p>0, \exists n_0,\forall n>n_0$, $\size{T_{c(\Pi)}}=n$ and $\size{\Pi}>n^p$)}\}
\]
\end{definition}

As explained in \cite{SuperPoly},  $S_{\Lambda}$ contains all huge or hard linearly height upper-bounded proofs in \mil. Of particular interest is the following set. Let $Taut_{\mil}$ be the set of all ND mapped proofs of \mil tautologies. The following set:

\begin{definition} Let $\Theta$ be the following set:
  \[
  \Theta_{\mil} = \{\mbox{$\Pi\in Taut_{\mil}$}:\mbox{$\forall p\in\mathbb{N}, p>0, \exists n_0,\forall n>n_0$, $\size{T_{c(\Pi)}}=n$ and $\size{\Pi}>n^p$}\}
  \]
\end{definition}

$\Theta_{\mil}$ is the set of super-polynomially sized \mil ND mapped proofs.

In \cite{SuperPoly}, we show that every $\Pi\in S_{\Lambda}$ is redundant. This means that there is at least one sub-proof $\Pi_{s}$  of $\Pi$ that repeats as many times as it is the size of $\Pi$. We have the following theorem~\ref{main}, proved in \cite{SuperPoly}. We emphasise that the proofs in $S_{\Lambda}$ are linearly heigh-bounded.

\begin{theorem}\label{main} For all $p\in\mathbb{N}$, $p>3$, and for all $\Pi\in S_{\Lambda}$, such that, $\size{\mathcal{T}(c(\Pi))}=m$ and $\size{\Pi}>m^p$,  then there is a sub-derivation $\Pi_{s}$ of $\Pi$ and a level $\mu$ in $\Pi$, such that, $\Pi_{s}$ has at least $m^{p-3}$ instances occurring in the level $\mu$ in $\Pi$.
\end{theorem}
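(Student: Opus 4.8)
The plan is to convert the super-polynomial lower bound on $\size{\Pi}$ into a super-polynomial lower bound on the number of instances of a single branch, and then feed that branch into Lemmas~\ref{Zero} and~\ref{Um} to pin a repeated sub-derivation to one level, with $m=\size{\T{c(\Pi)}}$. First I would bound the length of an individual branch. Since $\Pi$ is linearly height-bounded, a branch is a path running downward through the proof-tree and so visits at most one formula occurrence per level; by the height bound it therefore contains at most $m$ formula occurrences. Because the branches of a normal derivation partition its formula occurrences (each non-top occurrence has a unique branch-predecessor, namely the $\imply$-Elim major premise or the $\imply$-Intro premise that produced it), we get $\size{\Pi}=\sum_{\text{branch instances}}(\text{length})$, hence the number of branch instances in $\Pi$ is at least $\size{\Pi}/m > m^{p}/m = m^{p-1}$.

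Next I would apply the pigeonhole principle through Lemma~\ref{lemma:Upper-boundsE-parts}. Every branch instance carries exactly one E-part, and there are at most $m$ distinct E-part types in an \textbf{EmND}. Classifying the more than $m^{p-1}$ branch instances by E-part type therefore produces one E-part type, and hence an associated branch $\Branch{b}$ with a fixed minimal formula $q_{\Branch{b}}$, occurring more than $m^{p-1}/m = m^{p-2}$ times in $\Pi$.

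Then I would invoke Lemma~\ref{Zero} with its parameter instantiated to $p-2$: from more than $m^{p-2}$ instances of $\Branch{b}$ it returns a level $\mu$ at which at least $m^{(p-2)-1}=m^{p-3}$ of these instances have their minimal formula $q_{\Branch{b}}$. Finally, Lemma~\ref{Um} attaches to $\Branch{b}$ a sub-derivation $\Pi_{\Branch{b}}$; taking $\Pi_s=\Pi_{\Branch{b}}$ and reading off the instances already anchored at level $\mu$ yields the required $m^{p-3}$ instances of $\Pi_s$ occurring at level $\mu$. Note that the hypothesis $p>3$ is exactly what guarantees $m^{p-3}\ge m>1$, so the conclusion asserts genuine repetition rather than a vacuous bound.

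The hard part will be the final step: coordinating Lemmas~\ref{Zero} and~\ref{Um} so that the $m^{p-3}$ objects we exhibit are simultaneously instances of \emph{one and the same} sub-derivation type and all located at the \emph{single} level $\mu$, without paying an extra factor of $m$. Applying the two lemmas naively in sequence would first localize branch instances to $\mu$ (one power lost) and then extract a sub-derivation from the branch (a second power lost), delivering only $m^{p-4}$ instances at a fixed level. To stay at $m^{p-3}$ I would use that the sub-derivation $\Pi_{\Branch{b}}$ of Lemma~\ref{Um} is precisely the one determined by the repeated branch at its minimal formula $q_{\Branch{b}}$, so that fixing $q_{\Branch{b}}$ at level $\mu$ via Lemma~\ref{Zero} automatically fixes each corresponding instance of $\Pi_{\Branch{b}}$ at level $\mu$. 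Verifying that the $m^{p-3}$ localized branch instances are genuinely instances of a single sub-derivation type, rather than merely sharing the same E-part while differing in their minor-premise sub-derivations, is the delicate point, and it is here that the precise construction from \cite{SuperPoly} underlying both lemmas must be brought to bear.
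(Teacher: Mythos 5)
Your reconstruction assembles exactly the toolkit this paper supplies, and it is worth noting at the outset that the paper itself contains no proof of Theorem~\ref{main}: the theorem is imported verbatim (``proved in \cite{SuperPoly}''), with Lemmas~\ref{Zero}, \ref{Um} and~\ref{lemma:Upper-boundsE-parts} quoted as its ingredients. Your exponent accounting (one factor of $m$ for branch length, one for the pigeonhole over E-part types, one for the pigeonhole over levels, landing exactly at $m^{p-3}$) matches the loss in the statement and is almost certainly the intended skeleton. The first two stages are sound: maximal branches do partition the formula occurrences of $\Pi$, each branch has at most one occurrence per level and hence length at most the height, so there are more than $m^{p-1}$ branch instances; Lemma~\ref{lemma:Upper-boundsE-parts} then yields more than $m^{p-2}$ of them sharing one E-part type, and pigeonholing their common atomic minimal formula over the at most $m$ levels anchors at least $m^{p-3}$ of them at a single level $\mu$. (Strictly, this last step re-proves Lemma~\ref{Zero} for E-part-sharing instances rather than applying it, since instances sharing an E-part need not share an I-part and so need not be instances of one branch $\Branch{b}$; but the level pigeonhole goes through unchanged.)

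The genuine gap is the step you yourself flag and then wave through: the claim that ``fixing $q_{\Branch{b}}$ at level $\mu$ automatically fixes each corresponding instance of $\Pi_{\Branch{b}}$ at level $\mu$.'' In this paper an instance of a sub-derivation is a full subtree rooted at an occurrence (see Definition~\ref{def:Detach}, where instances are of the form $\mathcal{D}\!\!\uparrow\!\!(k)$, and this is what the collapse machinery needs). The subtree rooted at the minimal formula of a branch instance consists of the E-part spine \emph{together with the derivations of all minor premises feeding that spine}, and two branch instances with identical E-parts can carry entirely different minor-premise derivations; so the $m^{p-3}$ occurrences of $q_{\Branch{b}}$ at level $\mu$ need not root pairwise-identical subtrees. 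Nothing in the quoted statement of Lemma~\ref{Um} repairs this: it only asserts that \emph{some} sub-derivation has many instances \emph{somewhere} in $\Pi$, with no per-instance anchoring at minimal formulas and no localization to $\mu$. Closing this hole is the actual content of Theorem~\ref{main} beyond the two lemmas, and it cannot be closed by yet another pigeonhole over sub-derivation types: the number of distinct subtrees that can sit above level $\mu$ is not polynomially bounded in $m$, and in any case your exponent budget is already exhausted at $p-3$ (a further factor would leave you at $m^{p-4}$, as you observe). So the proposal is the right scaffolding, but the decisive identification step is asserted rather than proved, is false in general for full-subtree instances, and genuinely requires the construction internal to \cite{SuperPoly}, which neither you nor this paper reproduces.
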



  From theorem~\ref{main} we can roughly state the corollary~\ref{roughly}.

\begin{corollary}\label{roughly} All, but finitely many, proofs belonging to an arbitrary family of super-polynomial and linearly height upper-bounded proofs, are super polynomial redundant.
\end{corollary}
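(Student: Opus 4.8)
The plan is to read the corollary off Theorem~\ref{main} directly, by unwinding the quantifiers hidden in the notion of a super-polynomial family and observing that the loss of three in the exponent $m^{p-3}$ is harmless. First I would fix, once and for all, an arbitrary family $F$ of linearly height upper-bounded proofs that is super-polynomial, so that $F\subseteq\Lambda$ and, for every $p\in\mathbb{N}$ with $p>0$, there is a threshold $n_0(p)$ with $\size{\Pi}>m^p$ for every $\Pi\in F$ whose conclusion satisfies $m=\size{\mathcal{T}(c(\Pi))}>n_0(p)$; the finitely many members with small conclusion are the only possible exceptions. I would then make the target notion precise: a family is \emph{super-polynomially redundant} when, for every polynomial degree $q$, all but finitely many of its members contain a sub-derivation that occurs more than $m^q$ times in them. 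Phrased this way, hypothesis and conclusion share the same $\forall q\,\exists n_0\,\forall \Pi$ shape, and the argument reduces to a substitution of parameters.

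The key step is this instantiation. Given an arbitrary degree $q$, set $p=q+3$, which satisfies $p>3$ as required by Theorem~\ref{main}. By super-polynomiality of $F$ there is a threshold $n_0(p)$ beyond which every member $\Pi\in F$ has $\size{\Pi}>m^p$ with $m=\size{\mathcal{T}(c(\Pi))}$. For each such $\Pi$, Theorem~\ref{main} supplies a sub-derivation $\Pi_{s}$ and a level $\mu$ in which $\Pi_{s}$ occurs at least $m^{p-3}=m^{q}$ times, hence $\Pi_{s}$ occurs at least $m^{q}$ times in $\Pi$. Since the members of $F$ with $m\le n_0(p)$ are finite in number, this shows that all but finitely many members of $F$ carry a sub-derivation repeated more than $m^{q}$ times. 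As $q$ was arbitrary, $F$ is super-polynomially redundant, which is exactly the statement of the corollary.

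The main obstacle is conceptual rather than computational: it lies in pinning down the correct reading of ``super-polynomial redundant'' so that the single-proof counts $m^{p-3}$ produced by Theorem~\ref{main} assemble into genuine super-polynomial growth across the family. The delicate point is the bookkeeping of the ``all but finitely many'' clause: for each degree $q$ the exceptional set $\{\Pi\in F : m\le n_0(q+3)\}$ is finite, but a priori these sets may enlarge as $q$ grows, so the conclusion must be stated as a per-degree claim (for each $q$, all but finitely many members are redundant at level $q$) rather than as a single finite exceptional set valid for all $q$ at once --- the latter being impossible, since a fixed proof has bounded size and cannot host a sub-derivation of unboundedly many repetitions. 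Once this quantifier order is fixed, the constant shift of three in the exponent changes nothing, because $p\mapsto p-3$ preserves super-polynomial growth, and the corollary follows.
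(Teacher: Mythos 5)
Your proposal is correct and takes essentially the same route as the paper: the paper states this corollary as an immediate consequence of Theorem~\ref{main} (offering no separate proof beyond the word ``roughly''), and your instantiation $p=q+3$, which absorbs the loss of three in the exponent, is precisely that implicit argument made explicit. Your care with the per-degree reading of the ``all but finitely many'' clause simply spells out the quantifier bookkeeping that the paper's ``roughly'' glosses over.
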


\section{Removing redundancies from huge \mil linearly height bounded proofs}\label{sec:RemovingRedundancies}

Corollary~\ref{roughly} says that any proof in an unbounded set of super-polynomial proofs, linearly bounded on the height, is almost as redundant as it is huge.  We can show that there are level $\mu$ and a derivation  $\Pi$ that occurs as many times in $\mu$ as it is the size of the proof.  This section shows a polynomial sized certificate of validity for any huge tautology that belongs to this set of super-polynomial proofs. Our argumentation to prove this is to remove all redundancies in the original derivation, preserving logical consequence.

Given a non-empty finite set $S$, $card(S)=n$, and a total order $\mathcal{O}_{S}=\{s_1,\ldots,s_n\}$ on $S$, the set $B(\mathcal{O}_{S})$ is $\{b_{1}\ldots b_{n}:\mbox{$b_{i}=0$ or $b_{i}=1$, $i=1\ldots n$}\}$. There is a bijection $F$ from $B(\mathcal{O}_{S})$ onto the powerset of $S$ given by $F(b_1\ldots b_{n})=\{s_i:b_{i}=1\}$. $B(\mathcal{O}_S)$ is also called the set of bitstrings over $\mathcal{O}_S$. 

Given a \mil formula $\alpha$, $sub(\alpha)$ is the set of all sub-formulas of $\alpha$. 

\begin{definition}[r-DagProof]\label{def:Prer-DagProof}
 A {\bf pre} r-DagProof for a \mil formula $\alpha$  is a structure $\mathcal{C}$ = $\langle V , E_{d}, E_{A}, r, \ell, L,\rho,\delta,\mathcal{O}_{\alpha}\rangle$
  \begin{enumerate}
  \item $V$ is a non-empty set of nodes;
  \item $E_{d}\subseteq V\times V$, deduction edges;
  \item $E_{A}\subseteq V\times V$, ancestrality edges;
    \item $\mathcal{O}_{\alpha}$ is a total order on $sub(\alpha)$;
  \item $r\in V$ is the root of the $\mathcal{C}$;
  \item $\ell:V\rightarrow sub(\alpha)$, for $v\in V$, $\ell(v)$ is the (formula) label of $v$, $sub(\alpha)$ is the set of all sub-formulas of $\alpha$;
  \item $L:E_{d}\rightarrow \mathcal{B}(\mathcal{O})$ is a function, such that, for every $\langle u,v\rangle\in E_{d}$, $L(\langle u,v\rangle)\in B(\mathcal{O}_S)$;
  \item $\delta:E_{A}\rightarrow \mathbb{N}$, a total function;
    \item $\rho:E_{d}\rightarrow \mathbb{N}$, a partial function; 
  \end{enumerate}
  Subjected to the following conditions:
  \begin{description} 
   \item [(Global)] $\langle V, E_{d}, r\rangle$ is a connected DAG with unique root $r$ and for each node $v\in V$, if $\langle r=u_1,\ldots, u_m=v\rangle$ and $\langle r=v_1,\ldots, v_n=v\rangle$ are two inverse paths from $r$ to $v$ then $m=n$, i.e., for any $v$, any longest path from $r$ to $v$ has the same length of a shortest path from $r$ to $v$. Moreover, for each $v\in V$, if $v\neq r$ then there is a path from $v$ to $r$ and for every $u$, $\langle u,u\rangle\not\in E_d$. 
    \item[(${\bf E_d-(l/L)_1}$ consistency)] For every $\langle u,v\rangle,\langle v,w\rangle\in E_{d}$,  if there is no $\langle u^{\prime},v\rangle\in E_{d}$, such that, $u^{\prime}\neq u$ and $\rho(\langle v,w\rangle)=\uparrow$,  then $\ell(u)=\varphi_1$, $\ell(v)=\varphi_2\imply\varphi_1$, $L(\langle v,w\rangle)=L(\langle u,v\rangle)-\vec{b}_{\varphi_2}$;   
    \item [(${\bf E_d-(l/L)_2}$ consistency)] For every $\langle u_1,v\rangle,\langle u_2,v\rangle,\langle v,w\rangle\in E_{d}$, there is no $\langle u^{\prime},v\rangle\in E_{d}$, such that, $u^{\prime}\neq u_i$, $i=1,2$,  and $\rho(\langle v,w\rangle)=\uparrow$,  then $\ell(u_1)=\varphi_1$, $\ell(u_2)=\varphi_1\imply\varphi_2$,  $\ell(v)=\varphi_2$, $L(\langle v,w\rangle)=L(\langle u_1,v\rangle)oplus L(\langle u_2,v\rangle$;
    \item [(${\bf E_{A}}$-target consistency)] If $\langle u,v\rangle\in E_{A}$ then either there is no $\langle w,v\rangle\in E_{d}$ or there is $\langle v,w\rangle\in E_{A}$, $w\neq u$;
    \item [(${\bf E_{A}}$-source consistency)] If $\langle u,v\rangle\in E_{A}$  then there is $\langle w,u\rangle\in E_{d}$, $w\neq v$ and $\delta(\langle u,v\rangle)=\rho(\langle w,u\rangle)$
    \item [(${\bf E_{A}}$-ancestrality irreflexivity)] For every $u$, $\langle u,u\rangle\not\in E_{A}$. 
      
    \end{description}
\end{definition}
  

A {\bf pre} r-DagProof, satisfying definition~\ref{def:Prer-DagProof} above, is a r-DagProof, or  a certificate for $\ell(r)$, if it is sound, or equivalently, if  algorithm~\ref{algo:Check-rDagProof} answers ``Correct'' when executes on it. In section~\ref{sec:PolytimeCompression} we define soundness of a r-DagProof and we define the algorithm~\ref{algo:Check-rDagProof}. In the meanwhile we deal with {\bf pre} r-DagProofs. Specifically, in this section, this difference between {\bf pre} r-DagProofs and r-DagProofs is not relevant.

We provide a bit of terminology in this part of the article. Regarding theorem~\ref{main}, we denote by {\bf matrix} the sub-derivation of the ambient derivation, i.e. the huge one, that has many instances repeated. Definition~\ref{def:Initials}  define the set of nodes in an r-DagProof that is formed by top-formulas or their representatives after collapsing. We need the definition of top-formulas and representative top-formulas.

  \begin{definition}[Top-formula of a r-DagProof]\label{def:Top-formula}
    A node $v\in V$, of a r-DagProof $\mathcal{C}=\langle V , E_{d}, E_{A}, r, l, L, P,\mathcal{O}_{\alpha},E_{\delta}\rangle$,  is a top-formula, if and only if, there is no $w\in V$, such that $\langle w,v\rangle\in E_{d}$ or $\langle w,v\rangle\in E_{A}$.
    \end{definition}

    \begin{definition}[Representative top-formula of a r-DagProof]\label{def:Rep-Top-formula}
      Given a r-DagProof  $\mathcal{C}$=$\langle V , E_{d}, E_{A}, r, l, L, \rho,\delta,\mathcal{O}_{\alpha}\rangle$, $v\in V$ is a representative top-formula, if and only if, there is no $w\in V$, such that $\langle w,v\rangle\in E_{A}$. Moreover, there is a sequence $v=w_1,\ldots,w_n$, $w_i\in V$, $i=1,n$, such that, for every $i=1,n-1$, $\langle w_{i},w_{i+1}\rangle\in E_{A}$, and there is no $w\in V$, such that, $\langle w,w_n\rangle\in E_{d}$. 
    
    \end{definition}

  \begin{definition}\label{def:Initials}
Given a r-DagProof $\mathcal{C}=\langle V , E_{d}, E_{A}, r, \ell, L,\rho,\delta,\mathcal{O}_{\alpha}\rangle$, its set of initials, $I(\mathcal{C})$ is the set of representative top-formulas together with its top-formulas.

    \end{definition}

  The following mapping defines the detachment of an instance of the matrix $\mathcal{C}$, defined by its root $k$,
  from $\mathcal{D}$ and link its position to the matrix $\mathcal{C}$ accordingly. In the following definitions, if $\mathcal{C}$ is a r-DagProof   $\langle V , E_{d}, E_{A}, r, l, L, \rho,\delta,\mathcal{O}_{\alpha}\rangle$, we use the notations $E_{A}(\mathcal{C})$, $E_{d}(\mathcal{C})$, $V(\mathcal{C})$, $r(\mathcal{C})$, $\ell(\mathcal{C})$, etc, to denote $E_{A}$, $E_{d}$, $V$, $r$, $l$, etc, respectively. Given a set of nodes $V^{\prime}\subseteq V$, such that $r\not\in V^{\prime}$, the restriction of $\mathcal{C}$ to the set of nodes $V^{\prime}$ is $\mathcal{C}\mid_{V^{\prime}}$ and it is defined below:
  \[
  \langle V^{\prime}, E_d\mid_{V^{\prime}}, E_{A}\mid_{V^{\prime}}, r, l\mid_{V^{\prime})}, L\mid_{(E_{d}\mid_{V^{\prime}})}, \rho\mid_{(E_{d}\mid_{V^{\prime}})}, \delta\mid_{(E_{A}\mid_{V^{\prime}})}, \mathcal{O}_{\alpha}\rangle 
  \]
  ,where if $E\subseteq V\times V$ is a set of edges and $V^{\prime}\subseteq V$ then $E\!\!\mid_{V^{\prime}}=\{\langle v,u\rangle:\mbox{$v\in V^{\prime}$ and $u\in V^{\prime}$}\}$.

  \begin{definition}[Difference of r-DagProofs] Let $\mathcal{C}$ and $\mathcal{D}$ be two r-DagProofs, the graph difference $\mathcal{D}-\mathcal{C}$ is $\mathcal{D}\!\!\mid_{(V(\mathcal{D})-V(\mathcal{C}))}$.

    \end{definition}

  The difference of r-DagProofs is not a r-DagProof itself. In the definition below we use the notation $\mathcal{D}\!\!\uparrow\!\!\!(k)$, where $\mathcal{D}$ is a r-DagProof and $k\in V(\mathcal{D})$, to denote the biggest sub r-DagProof of $\mathcal{D}$ that has $k$ as root. In the particular case that $\mathcal{D}$ is a tree, $\mathcal{D}\!\!\uparrow\!\!\!(k)$ is the sub-tree of $\mathcal{D}$ that has root $k$.

  \begin{definition}[Detach and link sub r-DagProofs function]\label{def:Detach}
    Let $\mathcal{D}$ be a r-DagProof of $\alpha$ and $k$ a node of $\mathcal{D}$ that it is the root of an instance of
    the r-DagProof $\mathcal{C}$ that is its matrix, as given by theorem~\ref{main}. Let $i\in\mathbb{N}$, a label, and $\mathcal{C}^{\prime}=\mathcal{D}\!\!\uparrow\!\!\!(k)$.
    Define $\mathcal{D}^{\prime}=(\mathcal{D}-\mathcal{C}^{\prime})\cup\mathcal{C}$. 
    We define $DetachLink(\mathcal{D},k,\mathcal{C},i)$ as following:
    {\small
      \[
      \begin{array}{l}
        \langle V(\mathcal{D}^{\prime}), \\
        E_d(\mathcal{D}^{\prime})\cup \{\langle r(\mathcal{C}),v\rangle:\langle k,v\rangle\in E_{d}(\mathcal{D})\}, \\
        E_{A}(\mathcal{D}^{\prime})\cup \{\langle v,w\rangle:\mbox{$\langle k,v\rangle\in E_{d}(\mathcal{D})$ and $w\in I(\mathcal{C})$}\}, \\
        r(\mathcal{D}^{\prime}),\\
        \ell(\mathcal{D}^{\prime}),\\
        L(\mathcal{D}^{\prime})\cup\{\langle r(\mathcal{C}),v\rangle\mapsto L(\langle k,v\rangle):\langle k,v\rangle\in E_{d}(\mathcal{D})\},\\
        \rho(V(\mathcal{D}^{\prime}))\cup\{\langle v,w\rangle\mapsto i:\mbox{$\langle k,v\rangle\in E_{d}(\mathcal{D})$ and $w\in I(\mathcal{C})$}\},\\
        \delta(V(\mathcal{D}^{\prime}))\cup \{\langle r(\mathcal{C}),v\rangle\mapsto i:\langle k,v\rangle\in E_{d}(\mathcal{D})\},\\
        \mathcal{O}_{\alpha}\rangle
      \end{array}
    \]
}    

     \end{definition}
  Figures~\ref{fig:DetachLinkI} and~\ref{fig:DetachLinkII} illustrate what happens when we apply three DetachLink operations in an ambient r-DagProof. When we repeat this operation to every instance of a matrix $\mathcal{C}$ occurring in a fixed level $\mu$, we say that we performed a collapse of the instances of $\mathcal{C}$ in $\mathcal{D}$ in level $\mu$. This operation is described by algorithm~\ref{algo:Collapse} and is denoted by $Collapse(\mathcal{C},\mathcal{D},\mu)$.

  Consider a {\bf EmND} proof that has more than one matrix $\mathcal{C}$ with instances occurring super-polynomially many times in it. Thus, Lemma~\ref{lemma:ListForCollapse} shows that we can use theorem~\ref{main} to obtain the list of all matrices having a super-polynomially many instances occurring in a fixed level $\mu$ of this {\bf EmND} proof. We remember that a matrix derivation/proof, in our terminology, is nothing but a sub-derivation/sub-proof that has at least one instance in other proof.
  This lemma obtains a set of Matrices having all instances occurring in the lowest level in a specific EmND proof. Moreover, no matrix instance is a proper sub-derivation of any other matrix instance in the set. We call this an independent set of matrices from $\Pi$ to this set.

  \begin{definition}[Independent set of matrices in a proof/derivation]\label{def:IndependentSetMatrices} Let $\Pi$ be an EmND proof/derivation. A set of matrices
    \[
    S=\{\Pi_{\nu}:\mbox{$\Pi_{\nu}$ is a matrix in $\Pi$ which only has instance in a level $\nu$}\}
    \]
    is an independent set of matrices, if and only if,  there is no $\nu_1$ and $\nu_2$ levels in $\Pi$, $\nu_1\neq\nu_2$, and instances $\Pi_{1}$ and $\Pi_2$ of $\Pi_{\nu_1}$ and $\Pi_{\nu_2}$, respectively, $\Pi_{\nu_i}\in S$, $1=1,2$, such that, $\Pi_1$ is a sub-derivation of $\Pi_2$, or $\Pi_2$ is a sub-derivation of $\Pi_1$. 
  \end{definition}

  Given a set $S$ of independent matrices of a proof/derivation,  we can prove that if $\Pi_{\nu}\in S$ then no instance of $\Pi_{\nu}$ is sub-derivation of $\Pi_{\mu}\in S$, unless $\Pi_{\mu}=\Pi_{\nu}$. Thus, if $\Pi_{\nu}\in S$ then there is no level $\mu<\nu$ with instances that are super-derivations of instances of $\Pi_{\nu}$. In a certain sense, $\nu$ is a local lowest level. 

      \begin{lemma}[List of super-polynomially repeated matrices]\label{lemma:ListForCollapse}
    For all $p\in\mathbb{N}$, $p>3$, and for all $\Pi\in S_{\Lambda}$\footnote{The set $S_{\Lambda}$ is defined in defintion~\ref{def:SLambda}}, such that, $\size{\mathcal{T}(c(\Pi))}=m$ and $\size{\Pi}>m^p$,  then there is a set $M$ of independent matrices for sub-derivations instances of $\Pi$, such that, for every $\Pi_{s}\in M$,  $\Pi_{s}$ has at least $m^{p-3}$ instances occurring in some level $\xi$ in $\Pi$.
  \end{lemma}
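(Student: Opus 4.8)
The plan is to promote the single-matrix redundancy furnished by Theorem~\ref{main} into a whole non-nesting family. First I would fix the candidate pool $\mathcal{K}$ consisting of every sub-derivation $\Pi_s$ of $\Pi$ for which there is a level $\xi$ carrying at least $m^{p-3}$ instances of $\Pi_s$; I tag each such $\Pi_s$ with one level $\xi$ witnessing this count. Because $\size{\mathcal{T}(c(\Pi))}=m$ and $\size{\Pi}>m^p$ with $p>3$, Theorem~\ref{main} applies verbatim and places at least one pair in $\mathcal{K}$, so $\mathcal{K}\neq\emptyset$; moreover $\mathcal{K}$ is finite, since $\Pi$ has finitely many sub-derivations. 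The whole problem is thus reduced to selecting from $\mathcal{K}$ a subfamily that still meets the count (automatic, as it will be a subset of $\mathcal{K}$) and additionally satisfies the non-nesting clause of Definition~\ref{def:IndependentSetMatrices}.

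For the construction I would take $M$ to be the \emph{innermost} members of $\mathcal{K}$: those $\Pi_s\in\mathcal{K}$ such that no instance of any other candidate in $\mathcal{K}$ is a proper sub-derivation of an instance of $\Pi_s$. To see that $M\neq\emptyset$, start from any element of $\mathcal{K}$; if it is not innermost, some strictly smaller candidate sits properly inside one of its instances, and iterating this descent strictly decreases $\size{\cdot}$, so it terminates at an element of $M$. Every matrix kept in $M$ lies in $\mathcal{K}$, hence retains at least $m^{p-3}$ instances at its tagged level $\xi$, which already secures the quantitative half of the statement.

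The independence of $M$ is then immediate from its definition. Suppose, for distinct $A,B\in M$, that an instance $\Pi_1$ of $A$ is a sub-derivation of an instance $\Pi_2$ of $B$. If the inclusion were proper, then an instance of the candidate $A\neq B$ would be a proper sub-derivation of an instance of $B$, contradicting $B\in M$; if it were improper, the two instances would coincide and force $A=B$, against $A\neq B$. Hence no instance of one member of $M$ is a sub-derivation of an instance of another, which is exactly the clause required by Definition~\ref{def:IndependentSetMatrices} (a fortiori for members tagged with different levels). This delivers the desired independent set of matrices.

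The step I expect to carry the weight is the descent argument underpinning $M\neq\emptyset$, together with the verification that the nesting relation among \emph{instances} (rather than abstract sub-derivations) is well behaved: one must check that distinct candidates cannot share a common instance and that proper nesting strictly drops size, so that innermost elements exist and are genuinely pairwise non-nesting. A minor definitional point is the reading of ``only has instance in a level $\nu$'' in Definition~\ref{def:IndependentSetMatrices}: I treat each matrix of $M$ as identified with its family of instances at the tagged level $\xi$ — the instances that the subsequent $Collapse$ operation will contract — so that any stray occurrences of the same sub-derivation at other levels neither enter the count nor threaten independence. Finally, I would note that when instances are located by the level of their root, the construction simplifies drastically: all candidates sharing a root-level are pairwise non-nesting because two nodes at equal depth in the proof-tree are incomparable, so $M$ may even be taken as the full collection of candidates rooted at the lowest witnessing level.
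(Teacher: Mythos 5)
Your proposal is correct and takes essentially the same approach as the paper: the paper likewise gathers every matrix having at least $m^{p-3}$ instances at some level (seeded by Theorem~\ref{main} and organized as a level-indexed family $(S_{\nu})_{\nu}$) and then prunes to the ``innermost'' candidates, keeping exactly those levels whose instances are not super-derivations of instances of matrices at other levels. Your matrix-wise innermost selection, with the size-descent argument for non-emptiness and the explicit non-nesting verification, is the same pruning carried out in somewhat more detail than the paper's version.
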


  \begin{proof} of lemma~\ref{lemma:ListForCollapse}. Using the conditions on lemma, Theorem~\ref{main} provides at least one level $\mu$ and a matrix $\Pi_{s}$ that has at least $m^{p-3}$ instances occurring in $\mu$. Thus, the sets  
    \[
    S_{\nu}=\{\Pi_{s}:\mbox{$\Pi_{s}$ is a matrix having at least $m^{p-3}$ instances in $\nu$ in $\Pi$}\}
    \]
    where $\nu$ is a level in $\Pi$, form a family. The family $(S_{\nu})_{\nu\in Lev(\Pi)}$  has at least the non-empty set $S_{\mu}$.
Moreover, if $\nu_1<\nu_2$ and $S_{\nu_1}\neq\emptyset$ then $S_{\nu_2}$ contains all the subtrees of the trees in $S_{\nu_1}$ that occurs in level $\nu_2-\nu_1$ in the elements (trees) in $S_{\nu_1}$. If $T$ is a subtree of $T^{\prime}$ then we can say that $T^{\prime}$ is a super-tree of $T$. The same applies to sub-derivations and super-derivations.
    We define the set $L$ of levels $\xi$, such that $S_{\xi}$ has no instance in $\Pi$ in level $\xi$ that is a super-derivation of an instance of $\Pi_{s}\in S_{\mu}$, $\mu\neq\xi$. The set $\{\Pi_{\xi}:\mbox{$\xi\in L$}\}$ is an independent set of matrices. It is the biggest one, indeed.  

  \end{proof}

  Lemma~\ref{lemma:ListForCollapse} is used to provide the initial list of instances to collapse in the algorithm~\ref{algo:CompressEmNDProof}. This list can be alternatively defined as, given $p>3$, the lowest sub-derivations of a proof $\Pi$, of size bigger than $\size{\Pi}^p$, that occurs at least $\size{\Pi}^{p-3}$ times in their respective lowest level $\mu$. Definition~\ref{def:LRI} introduces the notation $LRI(\Pi)$ to denote this set.

    \begin{figure}[H]
    \begin{tikzpicture}
[level distance=10mm,
every node/.style={fill=red!60,circle,inner sep=1pt},
level 1/.style={sibling distance=20mm,nodes={fill=red!45}},
level 2/.style={sibling distance=10mm,nodes={fill=red!30}},
level 3/.style={sibling distance=5mm,nodes={fill=red!25}},
level 4/.style={sibling distance=3mm,nodes={fill=red!15}},
level 5/.style={sibling distance=3mm,nodes={fill=red!5}}]
\node {$r$} [grow'=up]
child[dashed] {node {$l_{1}$} 
child[solid] {node {$l_{20}$}
child {node {$l_{30}$}
  child {node {$l_{k0}$}
    child {node {$l_{50}$}}
    child {node {$l_{51}$}}
  }
child {node {$l_{41}$}}
}
child {node {$l_{31}$}
child {node {$l_{42}$}}
child {node {$l_{43}$}}
}
}
child[solid] {node [color=blue] {$l_{21}$}
child {node [color=blue] {$l_{32}$}
child {node [color=blue] {$l_{44}$}}
child[missing]
}
child {node [color=blue] {$l_{33}$}
  child {node [color=blue]{$l_{45}$}
    child {node [color=blue]{$l_{52}$}
      child {node [color=blue]{$l_{60}$}}
      child {node [color=blue]{$l_{61}$}}
      }}}}}
child { edge from parent[draw=none] node[draw=none] (ellipsis) {$\ldots$} }
child[dashed] {node {$v_{1}$} 
child[solid] {node  {$v_{20}$}
child {node {$v_{30}$}
  child {node {$v_{k0}$}
    child {node {$v_{50}$}}
    child {node {$v_{51}$}}
  }
child {node {$v_{41}$}}
}
child {node {$v_{31}$}
child {node {$v_{42}$}}
child {node {$v_{43}$}}
}
}
child[solid] {node [color=blue] {$v_{21}$}
child {node [color=blue] {$v_{32}$}
child {node (inputv1) [color=blue] {$v_{44}$}}
child[missing]
}
child {node [color=blue] {$v_{33}$}
  child {node (inputv2) [color=blue]{$v_{45}$}
    child {node [color=blue]{$v_{52}$}
      child {node [color=blue]{$v_{60}$}}
      child {node [color=blue]{$v_{61}$}}
      }
}}}}
child { edge from parent[draw=none] node[draw=none] (ellipsis) {$\ldots$} }
child[dashed] {node {$u_{1}$} 
child[solid] {node {$u_{20}$}
child {node {$u_{30}$}
  child {node {$u_{k0}$}
    child {node {$u_{50}$}}
    child {node {$u_{51}$}}
  }
child {node {$u_{41}$}}
}
child {node {$u_{31}$}
child {node {$u_{42}$}}
child {node {$u_{43}$}}
}
}
child[solid] {node [color=blue] {$u_{21}$}
child {node [color=blue] {$u_{32}$}
child {node [color=blue] {$u_{44}$}}
child[missing]
}
child {node [color=blue] {$u_{33}$}
  child {node [color=blue]{$u_{45}$}
    child {node [color=blue]{$u_{52}$}
      child {node [color=blue]{$u_{60}$}}
      child {node [color=blue]{$u_{61}$}}
}}}}};

       \end{tikzpicture}
    \caption{Some instances of the matrix $\mathcal{C}$ in the ambient r-DagProof $\mathcal{D}$}\label{fig:DetachLinkI}
\end{figure}

    \begin{figure}[H]
    \begin{tikzpicture}
[level distance=10mm,
every node/.style={fill=red!60,circle,inner sep=1pt},
level 1/.style={sibling distance=20mm,nodes={fill=red!45}},
level 2/.style={sibling distance=10mm,nodes={fill=red!30}},
level 3/.style={sibling distance=5mm,nodes={fill=red!25}},
level 4/.style={sibling distance=3mm,nodes={fill=red!15}},
level 5/.style={sibling distance=8mm,nodes={fill=red!5}}]
\node {$r$} [grow'=up]
child[dashed] {node (entrada1) {$l_{1}$} 
child[solid] {node {$l_{20}$}
child {node {$l_{30}$}
  child {node {$l_{k0}$}
    child {node {$l_{50}$}}
    child {node {$l_{51}$}}
  }
child {node {$l_{41}$}}
}
child {node {$l_{31}$}
child {node {$l_{42}$}}
child {node {$l_{43}$}}
}
}
child[dashed] {node  [draw=black,dashed,fill={none}] {$l_{21}$}
child {node (sai11) [draw=black,dashed,fill={none}] {$l_{32}$}
child {node (ent11) [draw=black,dashed,fill={none}] {$l_{44}$}}
child[missing]
}
child {node [draw=black,dashed,fill={none}] {$l_{33}$}
  child {node [draw=black,dashed,fill={none}]{$l_{45}$}
    child {node [draw=black,dashed,fill={none}]{$l_{52}$}
      child {node [draw=black,dashed,fill={none}]{$l_{60}$}}
      child {node [draw=black,dashed,fill={none}]{$l_{61}$}}
      }}}}}
child { edge from parent[draw=none] node[draw=none] (ellipsis) {$\ldots$} }
child[dashed] {node (sai2) {$v_{1}$} 
child[solid] {node  {$v_{20}$}
child {node {$v_{30}$}
  child {node {$v_{k0}$}
    child {node {$v_{50}$}}
    child {node {$v_{51}$}}
  }
child {node {$v_{41}$}}
}
child {node {$v_{31}$}
child {node {$v_{42}$}}
child {node {$v_{43}$}}
}
}
child[solid] {node (saida) [color=blue] {$v_{21}$}
child {node [color=blue] {$v_{32}$}
child {node [color=blue] (ent21) {$v_{44}$}}
child[missing]
}
child {node (ent22) [color=blue] {$v_{33}$}
  child {node [color=blue]{$v_{45}$}
    child {node [color=blue]{$v_{52}$}
      child {node (ent60) [color=blue]{$v_{60}$}}
      child {node (ent61) [color=blue]{$v_{61}$}}
      }
}}}}
child { edge from parent[draw=none] node[draw=none] (ellipsis) {$\ldots$} }
child[dashed] {node (entrada2) {$u_{1}$} 
child[solid] {node {$u_{20}$}
child {node {$u_{30}$}
  child {node {$u_{k0}$}
    child {node {$u_{50}$}}
    child {node {$u_{51}$}}
  }
child {node {$u_{41}$}}
}
child {node {$u_{31}$}
child {node {$u_{42}$}}
child {node {$u_{43}$}}
}
}
child[dashed] {node [draw=black,dashed,fill={none}] {$u_{21}$}
child {node (sai31) [draw=black,dashed,fill={none}] {$u_{32}$}
child {node (ent31)[draw=black,dashed,fill={none}] {$u_{44}$}}
child[missing]
}
child {node [draw=black,dashed,fill={none}] {$u_{33}$}
  child {node [draw=black,dashed,fill={none}]{$u_{45}$}
    child {node [draw=black,dashed,fill={none}]{$u_{52}$}
      child {node [draw=black,dashed,fill={none}]{$u_{60}$}}
      child {node [draw=black,dashed,fill={none}]{$u_{61}$}}
}}}}};

\node (poly) [fill={none}, left of=entrada1] {{\tiny Polynomial level}} ;
\node (poly) [fill={none}, right of=entrada2] {{\tiny Polynomial level}} ;

\draw[red,->] (saida) -- (entrada2) node[draw=none,fill=none,near end, above, font=\scriptsize] {$b_3$};
\draw[red,->] (saida) -- (sai2) node[draw=none,fill=none, near end, above, font=\scriptsize] {$b_2$};
\draw[red,->] (saida) -- (entrada1) node[draw=none,fill=none,near end, below, font=\scriptsize] {$b_1$};
\draw[blue,->] (sai2) to[out=0,in=0,distance=0.5cm,bend left] node[draw=none,fill=none,midway,below,font=\scriptsize] {$b_2$} (ent21);
\draw[blue,->] (sai2) to[out=20,in=0,distance=3cm,bend right] node[draw=none,fill=none,near end,below,font=\scriptsize] {$b_2$} (ent61);
\draw[blue,->] (sai2) to[out=100,in=0,distance=2cm,bend left] node[draw=none,fill=none,at end,below,font=\scriptsize] {$b_2$} (ent60);
\draw[blue,->] (entrada1) to[out=0,in=0,distance=0.1cm,bend left] node[draw=none,fill=none,midway,below,font=\scriptsize] {$b_1$} (ent61);
\draw[blue,->] (entrada1) to[out=0in=0,distance=1cm,bend left] node[draw=none,fill=none,midway,above,font=\scriptsize] {$b_1$} (ent60);
\draw[blue,->] (entrada1) to[out=60,in=0,distance=0.5cm,bend left] node[draw=none,fill=none,midway,above,font=\scriptsize] {$b_1$} (ent21);
\draw[blue,->] (entrada2) to[out=100,in=0,distance=1cm,bend right] node[draw=none,fill=none,midway,below,font=\scriptsize] {$b_3$} (ent60);
\draw[blue,->] (entrada2) to [out=0,in=-35] node[draw=none,fill=none,near end,below,font=\scriptsize] {$b_3$} (ent21);
\draw[blue,->] (entrada2) to[out=90,in=45,distance=1cm,bend right] node[draw=none,fill=none,midway,below,font=\scriptsize] {$b_3$} (ent61);
\end{tikzpicture}
    \caption{ Three DetachLink were applied in the ambient r-DagProof $\mathcal{D}$ of fig.~\ref{fig:DetachLinkI}}\label{fig:DetachLinkII}
\end{figure}

        We remember that $r_{\mathcal{C}}$ is the root of the r-DagProof $\mathcal{C}$ and $Starts(\mathcal{C})$ is the set of initials of $\mathcal{C}$ as stated by definition~\ref{def:Initials}

     We observe that the resulting (pre) r-DagProof yielded from the collapse in level $\mu$ can be bigger than $m^p$ yet. With the collapse of $m^{p-3}$
     sub-proofs/derivations, the size of the resulting r-DagProof  is at least $\frac{\size{\mathcal{D}}}{\size{\mathcal{C}}\times m^{p-3}}$. If the mentioned size of the resulting r-DagProof is bigger than $m^{p}$, then there must be two main reasons: (1) The collapsed sub-proof/derivation is bigger then $m^p$ by itself, or; (2) There must be more matrices in level $\mu$ that we consider. The second alternative dealt proceeds by collapsing all instances of all matrices occurring at the lowest level. This is addressed in lines~\ref{line:Corollary} to~\ref{line:FimForCorollary} of algorithm~\ref{algo:CompressEmNDProof}. For the first alternative, we only have to recursively find more redundant parts in the sub-proof/derivation that must exist by theorem~\ref{main} in the matrix that had all instances collapsed in the ambient r-DagProof. Since theorem~\ref{main} works on {\bf EmND} proofs, as opposed to derivations with open assumptions, we need lemma~\ref{lemma:PrepRecursive} below.

     \begin{lemma}\label{lemma:PrepRecursive}
Let  $p\in\mathbb{N}$, $p>3$, $\Pi$ be an {\bf EmND} proof of a \mil tautology $\alpha$, such that, $\size{\mathcal{T}(c(\Pi))}=m$ and $\size{\Pi}>m^p$. According theorem~\ref{main} let $\Pi_{s}$ be a matrix that has at least $m^{p-3}$ instances occurring in the level $\mu$ in $\Pi$. If $\size{\Pi_{s}}>m^p$ then there is a matrix $\Pi_{s}^{\prime}$ that is a sub-derivation of $\Pi_{s}$, such that it has at least $m^{p-3}$ instances in $\Pi_{s}$ in some, fixed, level $\nu>\mu$.
       \end{lemma}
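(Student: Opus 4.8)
The plan is to reduce the statement to an application of Theorem~\ref{main} by turning the sub-derivation $\Pi_{s}$, which in general carries open assumptions, into a genuine {\bf EmND} proof. Since $\Pi$ is a proof of the tautology $\alpha$, every formula occurrence in $\Pi$ --- hence every open assumption and the conclusion $\beta$ of $\Pi_{s}$ --- is a subformula of $\alpha$ by Corollary~\ref{coro:SubForProperty}. Let $\delta_{1},\ldots,\delta_{k}$ be the distinct formulas that label the open assumptions of $\Pi_{s}$, so $k\leq m$. First I would form $\Pi_{s}^{*}$ by appending, below the conclusion of $\Pi_{s}$, a chain of $k$ greedy $\imply$-Intro applications discharging all occurrences of $\delta_{1},\ldots,\delta_{k}$, so that $\Pi_{s}^{*}$ is a closed proof of $\gamma=\delta_{1}\imply(\cdots\imply(\delta_{k}\imply\beta))$.

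Next I would check that $\Pi_{s}^{*}$ is an {\bf EmND} proof. It is normal: $\Pi_{s}$ is a sub-derivation of the normal proof $\Pi$, and the appended $\imply$-Intro rules sit at the very bottom of the principal branch, so none of their conclusions is the major premise of an $\imply$-Elim and no maximal formula is created. It is expanded: the new rules only lengthen the I-part of the principal branch and create no new minimal formula, so all minimal formulas of $\Pi_{s}^{*}$ are the atomic minimal formulas inherited from $\Pi$. Being a normal and expanded proof, $\Pi_{s}^{*}$ admits an E-mapping into its own syntax tree $T_{\gamma}$ by the well-definedness of Definition~\ref{def:E-mapped-ND} established in \cite{SuperPoly}. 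Finally $\height{\Pi_{s}^{*}}\leq\height{\Pi}+k=O(m)$, so $\Pi_{s}^{*}$ is still linearly height bounded, and $\size{\Pi_{s}^{*}}\geq\size{\Pi_{s}}>m^{p}$.

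I would then apply Theorem~\ref{main} to $\Pi_{s}^{*}$ to extract a matrix $\Pi_{s}^{\prime}$ together with a level at which it has many instances. The appended chain is a single path occurring exactly once in $\Pi_{s}^{*}$, so any sub-derivation having at least two instances --- in particular one with $m^{p-3}\geq 2$ instances --- must be rooted strictly above the conclusion of $\Pi_{s}$, i.e.\ inside the $\Pi_{s}$-region. Hence $\Pi_{s}^{\prime}$ is a proper sub-derivation of $\Pi_{s}$ and each of its instances in $\Pi_{s}^{*}$ is an instance in $\Pi_{s}$; translating depths back to $\Pi$-coordinates, where the root of $\Pi_{s}$ sits at level $\mu$, the common level of these instances is some $\nu>\mu$, as required.

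The delicate point is reconciling the quantitative bound: closing $\Pi_{s}$ replaces the conclusion $\beta$ by $\gamma$, so the ambient syntax tree changes from $T_{\alpha}$ (of size $m$) to $T_{\gamma}$, whose size $m^{*}$ need not equal $m$; consequently one cannot simply invoke Theorem~\ref{main} as a black box and read off $m^{p-3}$ in the original $m$. The way around this is to observe that the E-parts of $\Pi_{s}^{*}$ are exactly the E-parts of $\Pi_{s}$: the appended $\imply$-Intro chain is a pure I-part extension of the principal branch and hosts neither a top-formula nor a minimal formula of any branch, so adding it changes no branch's E-part. Thus the number of distinct E-part types and the height both remain $O(m)$, and re-running the counting argument behind Theorem~\ref{main} --- Lemmas~\ref{Zero},~\ref{Um} and~\ref{lemma:Upper-boundsE-parts} --- with these $O(m)$ parameters and the hypothesis $\size{\Pi_{s}^{*}}>m^{p}$ again yields a matrix with at least $m^{p-3}$ instances at a single level, the exponent drop of $3$ being absorbed by the same three factors as in the original proof. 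I expect the main effort to lie precisely in this bookkeeping, namely in verifying that the E-part and height bounds feeding Theorem~\ref{main} are governed by the original $m$ rather than by $m^{*}$.
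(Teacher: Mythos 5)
Your proposal has the same skeleton as the paper's proof: discharge the open assumptions of $\Pi_{s}$ along its main branch to obtain a closed normal expanded proof, equip it with an E-mapping, apply Theorem~\ref{main}, argue that any matrix with at least two instances at a common level must be rooted strictly above $c(\Pi_{s})$ (each level of the appended I-part hosts exactly one node, so it can contribute at most one instance), and shift levels back to conclude $\nu>\mu$. The genuine divergence is at the step you yourself flag as delicate. The paper does not re-run the counting argument: it pads the closed conclusion with a fresh variable, forming $q_{new}\imply q_{new}\imply\ldots\beta$ with enough repetitions that the padded formula's syntax tree has size exactly $m=\size{T_{\alpha}}$, applies Theorem~\ref{main} as a black box (so the bound comes out literally as $m^{p-3}$), and then deletes the added I-part to recover $\Pi_{s}$ together with the instances. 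Your alternative --- rerunning Lemmas~\ref{Zero}, \ref{Um} and~\ref{lemma:Upper-boundsE-parts} after observing that the E-parts of $\Pi_{s}^{*}$ coincide with those of $\Pi_{s}$, hence have at most $m$ types, and that the height stays $O(m)$ --- buys something real: it does not rely on the paper's assertion that $\beta$ is smaller than $\alpha$, which is unjustified in general (the closed conclusion stacks up to $m$ distinct antecedents, each of size up to $m$, so its size can greatly exceed $m$, in which case one cannot pad \emph{up} to size $m$ at all). Your observation that all quantitative parameters feeding Theorem~\ref{main} are controlled by the original $m$ is exactly the reason the lemma is true.

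The shortfall is that the deferred bookkeeping does not come out the way you assert. Each pigeonhole step behind Theorem~\ref{main} (cf.\ Lemmas~\ref{Zero} and~\ref{Um}) loses a factor equal to the height or to the number of E-part types; with height bounded only by $\height{\Pi}+k\leq 2m$ rather than $m$, what you obtain is at least $c\cdot m^{p-3}$ instances for some fixed constant $c<1$, not at least $m^{p-3}$: the exponent drop of $3$ absorbs the powers of $m$ but not the constants, contrary to your claim that they are absorbed by the same three factors. As a proof of the lemma as literally stated this is a (small, purely quantitative) gap. It can be closed either by finer bookkeeping --- every minimal formula of $\Pi_{s}^{*}$ lies in the $\Pi_{s}$ region, which spans at most $\height{\Pi_{s}}+1\leq m+1$ levels, so the spreading step need only pay the level-count of $\Pi_{s}$, not of $\Pi_{s}^{*}$ --- or by adopting the paper's padding device, which is precisely the trick engineered to make the black-box bound exact. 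A constant factor would in any case be harmless for the downstream uses in Lemmas~\ref{lemma:ListForCollapse}, \ref{lemma:PolyPart} and~\ref{lemma:SizeCompressed}.
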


     \begin{proof} of~\ref{lemma:PrepRecursive}. The main feature of theorem~\ref{main} is the fact that it works for EmND proofs and $\Pi_{s}$ does not need to be a proof, it is instead a derivation with open assumptions. We can discharge in the correct order, dictated by the syntax tree of $\alpha$ and the mapping l that defines the EmND $\Pi_{s}$. To do this, we restrict l to $\Pi_{s}$. This restriction does not yield an EmND yet, and we have to consider the syntax tree of the conclusion of $\Pi_{s}$, i.e. the syntax tree $T_{c(\Pi_{s})}$. However, as we have already said, we do not have a tautology as the conclusion of $\Pi_{s}$. We discharge all open assumptions in $\Pi_{s}$, obtaining a conclusion $\beta$ that is a tautology. The result is then an EmND that proves $\beta$. The introduction part of the main branch does not disrupt the condition o definition~\ref{def:E-mapped-ND}.
       Finally, since $\beta$ is smaller than $\alpha$ we choose a new propositional variable $q_{new}$ and define a new formula ``$q_{new}\imply q_{new}\imply\ldots\beta$'' with as many $q_{new}$ repetitions as it is enough to have a formula of  the same size as $\alpha$. The EmND adjusted to this new formula is a proof of a tautology, and we can apply Theorem~\ref{main} to obtain a matrix $\Pi_{s}^{\prime}$ that has $m^{p-3}$ instances in $\Pi_{s}$. Note that $\size{T_{c(\Pi_{s})}} = \size{T_{a}}$. Since there is only one main branch, the many instances of $\Pi_{s}^{\prime}$ occur above the original conclusion of $\Pi_{s}$. Finally, the I-part of the main branch, from $c(\Pi_{s})$ down to ``$q_{new}\imply q_{new}\imply\ldots\beta$'' can be eliminated by merely deleting the introduction rules applied to draw ``$q_{new}\imply q_{new}\imply\ldots\beta$'', including the rules used to prove $\beta$.
       In this way we obtain the original matrix $\Pi_{s}$ and the desired $m^{p-3}$ instances of $\Pi_{s}^{\prime}$ that occurs in it. Finally, as $\Pi_{s}^{\prime}$ is sub-derivation of $\Pi_{s}$ then the level $\nu$ of its root is strictly above $\mu$.

       \end{proof}

By the proof of Lemma~\ref{lemma:PrepRecursive}, we can see that even in the case that $\Pi_{s}$ is not an {\bf EmND} proof, if $\size{\Pi_{s}}>m^p$, with $m=c(\Pi_{s})$, then there is a sub-derivation of $\Pi_s$ that it is repeated at least $m^{p-3}$ many times in $\Pi_{s}$.

     \begin{corollary}\label{corollary:main}
       Let  $p\in\mathbb{N}$, $p>3$, $\Pi$ be an {\bf EmND} derivation of $\alpha$, such that, $\size{\mathcal{T}(c(\Pi))}=m$ and $\Pi_s$ a sub-derivation of $\Pi$ that occurs in $\Pi$ in level $\mu$. If $\size{\Pi_{s}}>m^p$ then there is a sub-derivation $\Pi_{s}^{\prime}$ of $\Pi_{s}$ that has  at least $m^{p-3}$ instances occurring in a level $\nu>\mu$.
     \end{corollary}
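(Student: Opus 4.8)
The plan is to reduce the statement to Theorem~\ref{main} by exactly the device used in the proof of Lemma~\ref{lemma:PrepRecursive}, the only novelty being that here $\Pi$ is an arbitrary {\bf EmND} derivation of $\alpha$, possibly with open assumptions, and $\Pi_s$ is an arbitrary sub-derivation occurring at level $\mu$, rather than a matrix already produced by Theorem~\ref{main}. Since Theorem~\ref{main} applies only to proofs of tautologies whose conclusion has a syntax tree of size $m=\size{\T{c(\Pi)}}$, the work is to turn $\Pi_s$ into such a proof without letting the relevant size exceed $m$, apply Theorem~\ref{main}, and then transport the repeated sub-derivation back into $\Pi_s$.

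First I would close $\Pi_s$. By the sub-formula principle (Corollary~\ref{coro:SubForProperty}) its conclusion $c(\Pi_s)$ and each of its open assumptions are sub-formulas of $\alpha$. Discharging the open assumptions by $\imply$-Intro applications, performed in the order dictated by the syntax tree of $\alpha$ and by the mapping $l$ defining the {\bf EmND}, only lengthens a single I-part of the main branch, so the condition of Definition~\ref{def:E-mapped-ND} is not disrupted and we obtain a closed {\bf EmND} proof of a tautology $\beta$. The key claim I would then have to justify is $\size{\T{\beta}}\le m$, i.e.\ that discharging in this prescribed order yields once more a sub-formula of $\alpha$ rather than an uncontrolled iterated implication of the assumptions of $\Pi_s$.

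Granting this, I would pad as in Lemma~\ref{lemma:PrepRecursive}: choosing a fresh variable $q_{new}$ and prefixing enough copies, form a tautology $\beta'$ of shape $q_{new}\imply(q_{new}\imply\ldots\imply\beta)$ with $\size{\T{\beta'}}=m$, which is legitimate precisely because $\size{\T{\beta}}\le m$. The resulting {\bf EmND} proof of $\beta'$ still has size exceeding $m^p$, since we only appended a handful of introduction rules. Now Theorem~\ref{main} applies with ambient conclusion $\beta'$ and $\size{\T{\beta'}}=m$, producing a sub-derivation $\Pi_s'$ with at least $m^{p-3}$ instances at a common level. Because $\Pi_s'$ is a proper sub-derivation of $\Pi_s$ and is free of $q_{new}$, none of its instances can sit on the padding/discharging branch (whose $q_{new}$-free nodes all carry sub-derivations at least as large as $\Pi_s$), so every one of the $m^{p-3}$ instances falls strictly above $c(\Pi_s)$, i.e.\ inside $\Pi_s$. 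Deleting the introduction rules that built $\beta'$ and $\beta$ restores the original $\Pi_s$ carrying these instances, and since $\Pi_s'$ sits above $c(\Pi_s)$ its root level $\nu$ in $\Pi$ is strictly above $\mu$, as required.

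The step I expect to be the main obstacle is the size bound $\size{\T{\beta}}\le m$. If $\beta$ were an arbitrary nesting of the (up to $m$) distinct open assumptions of $\Pi_s$, its syntax tree could be of order $m^2$; then Theorem~\ref{main} would be neither applicable under the hypothesis $\size{\Pi_s}>m^p$ nor able to deliver the sharp exponent $p-3$ relative to $m$. The entire argument therefore hinges on the normal-form observation that discharging in the order induced by $\alpha$'s syntax tree reproduces a sub-formula of $\alpha$. By contrast, the localization of the repeated instances inside $\Pi_s$ (via $q_{new}$-freeness) and the level comparison $\nu>\mu$ are routine.
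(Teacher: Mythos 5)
Your proposal takes essentially the same route as the paper: the paper obtains this corollary directly from the proof of Lemma~\ref{lemma:PrepRecursive}, namely discharge the open assumptions of $\Pi_s$ in the order dictated by the syntax tree of $\alpha$ and the mapping $l$, pad the resulting tautology $\beta$ with a fresh variable $q_{new}$ up to a formula of size $m$, apply Theorem~\ref{main} to the padded {\bf EmND} proof, and then delete the added introduction rules to transport the $m^{p-3}$ instances back into $\Pi_s$ at a level $\nu>\mu$. The size bound $\size{\mathcal{T}(\beta)}\le m$ that you single out as the main obstacle is precisely the step the paper also leaves unargued (it simply asserts that ``$\beta$ is smaller than $\alpha$''), so your reconstruction matches the paper's argument, including its implicit assumption.
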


     The above corollary~\ref{corollary:main} is used to ensure the termination and correctness of algorithm~\ref{algo:CompressEmNDProof} in the section~\ref{sec:SuccinctrDags}. Algorithm~\ref{algo:Collapse}, below,  defines the operation of collapsing a list $\mathcal{Y}$ of instances of the r-DagProof matrix $\mathcal{C}$\footnote{We remember that a matrix that occurs in a level $\mu$ in an r-DagProof $\mathcal{D}$ is any sub-r-DagProof of $\mathcal{D}$ that has root in level $\mu$ and may have some other instances in level $\mu$ too} occurring in level $\mu$

\begin{definition}[Collapse operation inside {\bf pre} r-DagProofs]
  Let
  \[
  \mathcal{D} = \langle V , E_{d}, E_{A}, r, l, L, \rho,\delta, \mathcal{O}_{\alpha}\rangle
  \]
  be a {\bf pre} r-DagProof for a \mil~ formula $\alpha$ and $\mathcal{C}$ be a {\bf pre} sub r-DagProof of $\mathcal{D}$ and $\mathcal{Y}$ the set of roots of the instances of $\mathcal{C}$, occuring in level $\mu$ that will be collapsed. Algorithm~\ref{algo:Collapse} defines and computes the result of collapsing all instances of $\mathcal{C}$ in only one in $\mathcal{D}$.  
\end{definition}

\begin{algorithm}[H]
    \scriptsize
  \caption{}\label{algo:Collapse}
   \begin{algorithmic}[1]
    \REQUIRE{$\mathcal{D}$, $\mathcal{C}$, r-DagProofs, $\mathcal{C}\prec \mathcal{D}$,  and the list $\mathcal{Y}$ containing the roots of the instances of $\mathcal{C}$}  
    \ENSURE{a r-DagProof $\mathcal{D}^{\prime}$ having all instances of $\mathcal{C}$ collapsed in a unique r-DagProof}
    \STATE{{\bf Function}\;$Collapse(\mathcal{D},\mathcal{Y},\mathcal{C})$}
    \STATE{$\mathcal{D}^{\prime}\leftarrow\mathcal{D}$}
    \STATE{$j\leftarrow l_{\mathcal{C}}(root(\mathcal{C}));$}
    \STATE{$i\leftarrow 1$;}
    \STATE{$\mathcal{Y}\leftarrow rest(\mathcal{Y})$} 
    \FOR{$k\in\mathcal{Y}$}
       \STATE{$\mathcal{D}^{\prime}\leftarrow DetachLink(\mathcal{C},k,\mathcal{D}^{\prime},j@i); i\leftarrow i+1$}
    \ENDFOR
    \STATE{$Return\;\; \mathcal{D}$} 
   \end{algorithmic}
\end{algorithm}

Examining algorithm~\ref{algo:Collapse}, we have the Lemma~\ref{lemma:SizeCollapse} that provides an upper-bounded for the resulting r-DagProof after the collapses of all instances of the matrix in the ambient r-DagProof $\mathcal{D}$. 

 \begin{lemma}\label{lemma:SizeCollapse}
   $\size{\mathcal{D}^{\prime}} \leq \frac{\size{\mathcal{D}}}{length(\mathcal{Y})\times\size{\mathcal{C}}}$
  \end{lemma}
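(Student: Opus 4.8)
The plan is to read the function $Collapse$ as the composition of the $DetachLink$ steps performed in its \textbf{for}-loop, and to track the node count $\size{\cdot}$ through each step. First I would record the geometry of the instances being collapsed: the elements of $\mathcal{Y}$ are the roots, all lying in the single level $\mu$, of pairwise node-disjoint instances of $\mathcal{C}$. Disjointness follows because $\mathcal{C}$ is (an instance of) an independent matrix occurring only in level $\mu$, so no instance is a sub-derivation of another, and two sub-DagProofs rooted at distinct nodes of the \emph{same} level of a structure satisfying the \textbf{(Global)} equal-path-length condition cannot overlap. Hence each of the $length(\mathcal{Y})$ instances contributes exactly $\size{\mathcal{C}}$ distinct nodes to $\mathcal{D}$, and in particular $\size{\mathcal{D}}\geq length(\mathcal{Y})\times\size{\mathcal{C}}$.

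Next I would analyse a single call $DetachLink(\mathcal{D}^{\prime},k,\mathcal{C},\cdot)$. By Definition~\ref{def:Detach} it forms $\mathcal{C}^{\prime}=\mathcal{D}^{\prime}\!\uparrow\!(k)$, the instance rooted at $k$, and returns a structure built on the node set of $(\mathcal{D}^{\prime}-\mathcal{C}^{\prime})\cup\mathcal{C}$; that is, it deletes the nodes of that one instance and reuses the already-present surviving copy $\mathcal{C}$. The only additions are the deduction edges $\langle r(\mathcal{C}),v\rangle$ and the ancestrality edges into $I(\mathcal{C})$, together with their $L$, $\rho$ and $\delta$ labels. The key observation is that no new \emph{node} is ever created. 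Therefore, by the disjointness just established, one $DetachLink$ lowers the node count by exactly $\size{\mathcal{C}}$. Since the loop ranges over $\mathcal{Y}$ with its first (retained) element removed, it performs $length(\mathcal{Y})-1$ such deletions, yielding the identity
\[
\size{\mathcal{D}^{\prime}}=\size{\mathcal{D}}-\bigl(length(\mathcal{Y})-1\bigr)\times\size{\mathcal{C}}.
\]

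From this identity the compression estimate of the lemma follows: the $length(\mathcal{Y})$ disjoint copies, whose combined size already accounts for $length(\mathcal{Y})\times\size{\mathcal{C}}$ of $\size{\mathcal{D}}$, are replaced by a single surviving copy, so the part of $\mathcal{D}$ affected by the collapse shrinks by the factor $length(\mathcal{Y})$. Combined with the inequality $\size{\mathcal{D}}\geq length(\mathcal{Y})\times\size{\mathcal{C}}$ obtained above, this is exactly the upper bound recorded by the lemma; the remaining manipulation is routine.

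The hard part will be the bookkeeping forced by $\mathcal{D}$ being a DAG rather than a tree. Two points need care. First, justifying the node-disjointness of the instances rooted in $\mathcal{Y}$: here I would lean on the \textbf{(Global)} condition (every node sits at a well-defined level) together with the independence of the matrix to exclude one instance reaching into another. Second, verifying that $DetachLink$, when it redirects the incoming deduction edges of $k$ to $r(\mathcal{C})$ and adds the ancestrality edges to $I(\mathcal{C})$, neither introduces a node nor breaks the equal-path-length invariant, so that $\size{\cdot}$ genuinely only decreases and the output is again a (pre) r-DagProof. If some instances happened to share descendants, each deletion would remove \emph{fewer} than $\size{\mathcal{C}}$ nodes, but that only makes $\size{\mathcal{D}^{\prime}}$ smaller, so the inequality is preserved in that direction as well.
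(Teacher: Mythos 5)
Your bookkeeping of Algorithm~\ref{algo:Collapse} is sound up to the last step: the instances rooted at the nodes of $\mathcal{Y}$ are pairwise node-disjoint, Definition~\ref{def:Detach} never creates a new node, and the loop runs over $rest(\mathcal{Y})$, so you correctly obtain
\[
\size{\mathcal{D}^{\prime}}=\size{\mathcal{D}}-\bigl(length(\mathcal{Y})-1\bigr)\times\size{\mathcal{C}}.
\]
The genuine gap is the final claim that this identity, together with $\size{\mathcal{D}}\geq length(\mathcal{Y})\times\size{\mathcal{C}}$, yields the stated bound by ``routine manipulation.'' It does not; in fact your identity \emph{contradicts} the stated inequality in essentially every case. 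Write $n=\size{\mathcal{D}}$, $\ell=length(\mathcal{Y})$, $c=\size{\mathcal{C}}$. Even in the most favourable situation $n=\ell c$ (all of $\mathcal{D}$ is made of the instances), your identity gives $\size{\mathcal{D}^{\prime}}=c$, while the lemma's right-hand side is $n/(\ell c)=1$; and if $\mathcal{D}$ has a part of size $r$ outside the instances, then $\size{\mathcal{D}^{\prime}}\geq r$ while the right-hand side is only $1+r/(\ell c)$. The quantity $\size{\mathcal{D}}/(length(\mathcal{Y})\times\size{\mathcal{C}})$ is, dimensionally, a \emph{number of copies}, not a node count, and no manipulation recovers it from your (correct) identity. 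What your argument actually establishes is the bound $\size{\mathcal{D}^{\prime}}\leq\size{\mathcal{D}}-(length(\mathcal{Y})-1)\times\size{\mathcal{C}}$, which is also what the surrounding text supports: immediately before the lemma the paper asserts that the collapsed object has size \emph{at least} $\size{\mathcal{D}}/(\size{\mathcal{C}}\times m^{p-3})$, consistent with your identity and inconsistent with a literal reading of the lemma's ``$\leq$''. (The paper itself offers no proof of this lemma --- it is asserted by ``examining'' the algorithm --- so your accounting goes beyond what the paper provides; but the last step cannot be waved through, because the step that would be needed is false.)

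A smaller slip: your closing remark that overlapping instances would ``only make $\size{\mathcal{D}^{\prime}}$ smaller'' has the direction backwards. If the instances share nodes, each DetachLink removes \emph{fewer} than $\size{\mathcal{C}}$ nodes, so \emph{more} nodes survive and $\size{\mathcal{D}^{\prime}}$ becomes \emph{larger}; for an upper bound on $\size{\mathcal{D}^{\prime}}$ this is the harmful direction, not the harmless one, so disjointness is genuinely load-bearing in your argument and must be proved, as you began to do via the (Global) condition and independence of the matrices.
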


\section{r-Dags as succinct certificates for \mil~ tautologies}\label{sec:SuccinctrDags}

In this section, we show how to use the {\bf Collapse} operation defined in previous section~\ref{sec:RemovingRedundancies}. Algorithm~\ref{algo:CompressEmNDProof} defines the operation of compression that collapses all redundancies that occur in any huge {\bf EmND} proof. In line~\ref{line:Corollary} the function $Lemma\ref{lemma:ListForCollapse}(\mathcal{T}))$ returns the set of independent matrices in $\mathcal{T}$ that exists by Lemma~\ref{lemma:ListForCollapse}. Since it is an independent set, by collapsing all of its instances in only one instance to each matrix, we do not need to collapse the upper levels that are not local lowest levels, the roots of the matrices' elements in the independent set.

 \begin{algorithm}[H]
    \scriptsize
  \caption{Compress an {\bf EmND} proof $\mathcal{T}$ using corol. }\label{algo:CompressEmNDProof}
  \begin{algorithmic}[1]
    \REQUIRE{Uses the global variable $m$ with value $\size{c(\mathcal{T})}$ }
    \REQUIRE{$3<p\in \mathbb{N}$, $\mathcal{T}$ is a {\bf EmND},  $h(T)\leq \size{c(\mathcal{T})}$}
    \ENSURE{a r-DagProof $\mathcal{D}$ proving $c(\mathcal{T})$ of size smaller than $\size{c(\mathcal{T})}^{p-3}$}
    \STATE{{\bf Function}\;Compress($\mathcal{T},p$)}     
    \IF{($m^{p}< \size{\mathcal{T}})$}\label{line:IF}
    \STATE{$LocalLowestLevels\leftarrow MinLevel(Lemma~\ref{lemma:ListForCollapse}(\mathcal{T}))$}\label{line:Corollary}
    \STATE{$\mathcal{D}\leftarrow \mathcal{T}$}
    \FOR{$lev\in LocalLowestLevels$ upwards $h(\mathcal{T})$}
    \STATE{$L\leftarrow SuperPolySubProofs(\mathcal{T},lev)$}
    \FOR{$\langle\mathcal{Y},\mathcal{C}_{\mathcal{Y}}\rangle \in L$}
    \STATE{$\mathcal{D}_{\mathcal{Y}}\leftarrow Compress(\mathcal{C}_{\mathcal{Y}},p)$}
    \STATE{$\mathcal{D}\leftarrow Collapse(\mathcal{D},\mathcal{Y},D_{\mathcal{Y}})$}
    \ENDFOR
    \ENDFOR\label{line:FimForCorollary}
    \STATE{$Return \mathcal{D}$} 
    \ELSE
    \STATE{$Return \mathcal{T}$}
    \ENDIF
   \end{algorithmic}
\end{algorithm}


 Below we have some lemmas that are proved easily by inspecting the code of algorithms~\ref{algo:Collapse},~\ref{algo:CompressEmNDProof} and definition~\ref{def:Detach}.

\begin{lemma}\label{lemma:MaximalCallsCompressProofs}

  In algorithm~\ref{algo:CompressEmNDProof} above, the number of recursive calls after an initial invocation of $Compress(\mathcal{T},p)$, with $m=\size{c(\mathcal{T})}$, is at most $\size{\mathcal{T}}$.

\end{lemma}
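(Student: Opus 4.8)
The plan is to count the recursive calls by structural induction on the sub-derivation handed to $Compress$, charging each call against the \emph{size} of its own argument. Write $N(\mathcal{S})$ for the number of recursive calls triggered during an execution of $Compress(\mathcal{S},p)$, \emph{not} counting the invocation on $\mathcal{S}$ itself. I would in fact prove the slightly sharper bound $N(\mathcal{S})\le\size{\mathcal{S}}-1$ for every sub-derivation $\mathcal{S}$ that can occur as an argument; applied to the top call this gives $N(\mathcal{T})\le\size{\mathcal{T}}-1\le\size{\mathcal{T}}$, which is the statement.

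First I would isolate the one structural fact that drives the induction: each recursive call $Compress(\mathcal{C}_{\mathcal{Y}},p)$ is performed on a matrix $\mathcal{C}_{\mathcal{Y}}$ delivered by $SuperPolySubProofs(\mathcal{S},lev)$, and every such matrix is a proper sub-derivation of the tree $\mathcal{S}$ being compressed: being a repeated matrix it has at least two disjoint instances in $\mathcal{S}$, so it cannot contain the conclusion (root) of $\mathcal{S}$. Consequently every argument met anywhere in the recursion is again a sub-derivation of the original \textbf{EmND} proof $\mathcal{T}$, in particular a tree, and the induction is well founded because each argument is strictly smaller than its parent.

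Next I would read the recurrence straight off the two nested for-loops of Algorithm~\ref{algo:CompressEmNDProof}. Letting $\mathcal{C}_1,\dots,\mathcal{C}_t$ enumerate the matrices taken from all pairs $\langle\mathcal{Y},\mathcal{C}_{\mathcal{Y}}\rangle$ processed in the invocation on $\mathcal{S}$, we get $N(\mathcal{S})=\sum_{j=1}^{t}\bigl(1+N(\mathcal{C}_j)\bigr)$, where the $1$ accounts for the direct call $Compress(\mathcal{C}_j,p)$ and $N(\mathcal{C}_j)$ for the calls it spawns; the else branch gives the base case $N(\mathcal{S})=0\le\size{\mathcal{S}}-1$. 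Feeding the induction hypothesis $N(\mathcal{C}_j)\le\size{\mathcal{C}_j}-1$ into the recurrence yields $N(\mathcal{S})\le t+\sum_{j}(\size{\mathcal{C}_j}-1)=\sum_{j}\size{\mathcal{C}_j}$.

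The crux, and the step I expect to require the most care, is then the inequality $\sum_{j}\size{\mathcal{C}_j}\le\size{\mathcal{S}}-1$. Here I would invoke independence: the matrices processed in a single invocation are exactly the independent set produced by Lemma~\ref{lemma:ListForCollapse} (Definition~\ref{def:IndependentSetMatrices}), so no one of them is a sub-derivation of another. In a tree this forces the chosen representatives $\mathcal{C}_1,\dots,\mathcal{C}_t$ to be pairwise node-disjoint subtrees, and since each is a proper subtree the root of $\mathcal{S}$ is counted in none of them; hence their total node count is at most $\size{\mathcal{S}}-1$. Combining this with the previous estimate closes the induction and establishes $N(\mathcal{T})\le\size{\mathcal{T}}$.
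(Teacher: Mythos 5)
Your proof is correct under the intended reading of the algorithm, and it takes a genuinely different route from the paper's own argument. The paper reasons operationally: since $m^p$ stays fixed while the first argument of $Compress$ strictly shrinks at every recursive call, the recursion must bottom out, and the branching is then dismissed with a rough estimate (each matrix is compressed ``in a unique call,'' and the total is claimed to be at most $\size{\mathcal{T}}/(\size{\mathcal{T}_1}\times m^p)$); no recurrence is set up and no induction is performed. You instead prove the sharper invariant $N(\mathcal{S})\le\size{\mathcal{S}}-1$ by structural induction, reading the recurrence off the nested loops and charging the one-call-per-matrix count against pairwise disjoint sets of nodes. This buys actual rigor exactly where the paper is weakest, namely at the branch points of the recursion tree: a strict-decrease argument alone bounds only the depth, not the number of calls. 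Your observation that two distinct instances of one matrix can never be nested (equal sizes forbid proper containment), so every repeated matrix is a proper sub-derivation, is a clean way to obtain both well-foundedness and the ``$-1$'' in a single stroke.

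One caveat is worth naming. Your crux inequality $\sum_j\size{\mathcal{C}_j}\le\size{\mathcal{S}}-1$ rests on the claim that the matrices processed in a single invocation are \emph{exactly} the independent set of Lemma~\ref{lemma:ListForCollapse}. That is literally true only for the matrices found at the local lowest levels in line~\ref{line:Corollary}; the outer loop of algorithm~\ref{algo:CompressEmNDProof} then sweeps every level up to $h(\mathcal{T})$ and calls $SuperPolySubProofs$ at each, and nothing in Lemma~\ref{lemma:ListForCollapse} asserts that those higher-level matrices are disjoint from the ones already collapsed. If a matrix picked at a higher sweep were nested inside an instance of an earlier one, your charging would double-count its nodes and the bound $\le\size{\mathcal{S}}-1$ would fail (indeed, under that naive reading even the lemma itself is in jeopardy). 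The intended semantics---reflected in the paper's remark that upper levels inside collapsed matrices ``do not need'' collapsing, their internal redundancy being handled by the recursive call---does make the whole processed family independent, so your induction goes through; but that cross-level disjointness is a property of $SuperPolySubProofs$ you should state as an explicit assumption rather than attribute to Lemma~\ref{lemma:ListForCollapse}.
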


\begin{proof} Since $m^p$ is constant during all the recursive calls, and the size of the first argument of $Compress$ is strictly smaller than each previous recursive call.  Thus, there must be a call such that $\size{\mathcal{T}}<m^p$. When $\mathcal{T}$ is of this size, it is the recursive basis case, anyway.
  Taking the (worst) case into account, the number of recursive calls is upper-bounded by the size of the EmND $\mathcal{T}$ itself. If any recursive call, on the first argument $\mathcal{T}_1$, is such that $\size{\mathcal{T}_1}>m^p$, then it obtains the sub-derivations that occurs at least $m^{p-3}$, collapsing $Compress(\mathcal{T}_1)$ into a unique compressed version of $\mathcal{T}_1$. This is done with lesser  than $\size{\mathcal{T}}$ recursive calls. In fact, there is at least $m^{p-3}$ collapses, so the total amount of recursive calls is at most $\frac{\size{\mathcal{T}}}{\size{\mathcal{T}_1}\times m^p}$. Lastly, the compression of $\mathcal{T}_1$ is obtained in a unique call to $Compress$.
\end{proof}

Below we find a useful Lemma used to prove that after the compression any linear height bounded {\bf EmND} super-polynomially bounded proof becomes polynomial in size.

\begin{definition}\label{def:LRI} Given a {\bf EmND} linearly hight bounded proof $\mathcal{T}$, such that, $\size{\mathcal{T}}>\size{\alpha}^p$, $p\ge 3$.  We denote by  $LRI(\mathcal{T})$\footnote{$LRI$ is an acronym to $LowestInstancesRedundant$} the list of all lowest instances that occur at least $m^{p-3}$ times, for each of the matrices, as provided by Lemma~\ref{lemma:ListForCollapse}. 
\end{definition}

\begin{definition}[EmND proofs difference operation]\label{def:diff}
  Let $\mathcal{T}$ and $\mathcal{T}_1$ EmND derivations, such that, $\mathcal{T}_1$ is sub-derivation of $\mathcal{T}$ occuring in level $\nu$. The difference derivation $\mathcal{T}^{\prime}=\mathcal{T}-\mathcal{T}_1$ obtains by removing all nodes of $\mathcal{T}_1$ from $\mathcal{T}$, but $c(\mathcal{T}_1)$. Morevore, $l_{\mathcal{T}^{\prime}}$ is the restriction of $l_{\mathcal{T}}$ to this $\mathcal{T}^{\prime}$ new derivation.  
\end{definition}

\begin{definition}[EmND proof difference with removal of conclusion]\label{def:ominus}
  Let $\mathcal{T}$ and $\mathcal{T}_1$ EmND derivations, such that, $\mathcal{T}_1$ is sub-derivation of $\mathcal{T}$ occuring in level $\nu$. The greedy difference $\mathcal{T}\ominus\mathcal{T}_1$ is as in defintion~\ref{def:diff} extended with the removal of the conclusion of $\mathcal{T}_1$ from the yielded sub-derivation.
  \end{definition}

The difference of  $\mathcal{T}$ and the set of derivations $S$ is the iterating the difference of $\mathcal{T}$ to each member of $S$. 

The following Lemma~\ref{lemma:PolyPart} shows that for every proof of size bigger than $m^p$, when removing its redundant part, that exists in virtue of Theorem~\ref{main}, what remains is a rDagProof of size less than $m^p$.

We remember that derivations formally are labeled trees, and the later are graphs. The differnce between  graphs is well-defined as defined by definition~\ref{def:diff}

We need a preparatory lemma that is, in a certain sense, a generalization of Lemma~\ref{lemma:PrepRecursive}.

We use $\left(\begin{array}{c}\gamma_1\\\vdots\\\gamma_n\end{array}\right)\imply\alpha$ to denote the formula $\gamma_1\imply(\gamma_2\imply(\ldots\imply \alpha)\ldots)$. 

\begin{lemma}[Redundancy Lemma for Normal derivations]\label{lemma:RedundantDerivation}
Let $\Pi$ be a normal and expanded derivation in N.D. for $M_{\imply}$. Let $\Gamma=\{\gamma_1,\ldots,\gamma_n\}$ be the set of all open assumptions in $\Pi$ and $\alpha$ the conclusion of $\Pi$, $m=\size{\alpha}+\sum_{\gamma\in\Gamma}\size{\gamma}$. If $\size{\Pi}>m^p$, for some $p> 3$, then there is a sub-derivation $\Pi_s$ of $\Pi$ and a level $\mu$ from $\Pi$, such that, there are at least $m^{p-3}$ instances of $\Pi_s$ occurring in level $\mu$. 
  \end{lemma}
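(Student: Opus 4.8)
The plan is to reduce the statement to Theorem~\ref{main}, which already handles \textbf{EmND} proofs (derivations with no open assumptions), by first turning $\Pi$ into a genuine proof of a tautology through discharging. This mirrors, and generalizes, the construction in the proof of Lemma~\ref{lemma:PrepRecursive}. Concretely, I would append to the root of $\Pi$ a block of $n$ consecutive $\imply$-Intro rules discharging the open assumptions $\gamma_1,\ldots,\gamma_n$ one after another, producing a derivation $\Pi'$ whose conclusion is the closed formula $\beta=\gamma_1\imply(\gamma_2\imply(\ldots\imply\alpha)\ldots)$. Since $\Pi$ derives $\alpha$ from $\Gamma$, the formula $\beta$ is a \mil tautology and $\Pi'$ has no open assumptions.

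First I would check that $\Pi'$ is again normal and expanded, so that it carries an \textbf{EmND} structure. The new rules are all $\imply$-Intro applications placed \emph{below} the old conclusion, extending the single principal branch downwards; none of their conclusions is the major premise of an $\imply$-Elim, so no maximal formula is created and normality is preserved, while the minimal formulas (hence the expanded property) are untouched. As already observed in the proof of Lemma~\ref{lemma:PrepRecursive}, prolonging the I-part of the principal branch does not disrupt the condition of Definition~\ref{def:E-mapped-ND}, so the mapping $\ell$ extends to an \textbf{EmND} of $\beta$. The crucial bookkeeping is on sizes: the syntax tree of $\beta$ is obtained from those of $\alpha$ and the $\gamma_i$ by adding the $n$ new $\imply$-nodes, so $\size{T_{\beta}}$ equals $m$ up to the linear overhead $n\le\sum_{\gamma\in\Gamma}\size{\gamma}\le m$ --- this is exactly why $m$ is defined to include the sizes of the open assumptions. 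Likewise $\size{\Pi'}=\size{\Pi}+n$, so that $\size{\Pi'}$ stays above the threshold $\size{T_{\beta}}^{\,p}$ needed to invoke Theorem~\ref{main}, a point I return to below.

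Applying Theorem~\ref{main} to $\Pi'$ then yields a sub-derivation $\Pi'_s$ and a level $\mu'$ with at least $\size{T_{\beta}}^{\,p-3}\ge m^{p-3}$ instances of $\Pi'_s$ occurring in level $\mu'$ of $\Pi'$. I would then transport this back to $\Pi$. Because a sub-derivation repeating super-polynomially many times must sit in the branching upper part of the proof, all these instances lie strictly above the appended block of $\imply$-Intro rules, i.e.\ entirely inside the copy of $\Pi$ contained in $\Pi'$. Deleting the appended rules (the only difference between $\Pi'$ and $\Pi$) therefore recovers $\Pi$ together with the same family of instances of $\Pi_s:=\Pi'_s$, now located at the level $\mu$ of $\Pi$ corresponding to $\mu'$ shifted by the height of the deleted block. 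This delivers the required sub-derivation and level.

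The step I expect to be the main obstacle is the size accounting, not the structural surgery. One must ensure that the additive overhead introduced by the discharging block --- both the $n$ extra $\imply$-nodes in $\beta$ and the $n$ extra rules in $\Pi'$ --- is genuinely dominated by $m$, so that the hypothesis $\size{\Pi'}>\size{T_{\beta}}^{\,p}$ still follows from $\size{\Pi}>m^p$ while the conclusion still delivers the exponent $p-3$ on $m$ rather than on the strictly larger $\size{T_{\beta}}$. Including $\sum_{\gamma\in\Gamma}\size{\gamma}$ in the definition of $m$ is precisely the device that keeps $\size{T_{\beta}}$ and $m$ of the same order; making this quantitative, and absorbing the constant (e.g.\ for all but finitely many such $\Pi$), is the delicate part. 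If one prefers to avoid the reduction altogether, the same conclusion can be reached by re-running the combinatorial core of Theorem~\ref{main} --- Lemmas~\ref{lemma:Upper-boundsE-parts},~\ref{Zero} and~\ref{Um} --- directly on $\Pi$, using the sub-formula principle (Corollary~\ref{coro:SubForProperty}) to bound the number of E-part types by the size $m$ of the combined syntax forest of $\alpha$ and the $\gamma_i$.
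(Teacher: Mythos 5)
Your proposal follows essentially the same route as the paper's own proof: discharge all open assumptions with a block of $\imply$-Intro rules appended to the main branch to obtain an EmND proof of the tautology $\gamma_1\imply(\cdots\imply\alpha)$, apply Theorem~\ref{main} to it, and then delete the added $\imply$-Intro applications to transport the repeated sub-derivation and its level back into $\Pi$. If anything, you are more careful than the paper, which silently identifies the size of the syntax tree of the discharged formula with $m$ and never checks the size hypothesis of Theorem~\ref{main}; the constant-factor slack you flag (roughly $m\le\size{T_{\beta}}\le 2m$) is glossed over in the paper's argument.
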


\begin{proof}
From $\Pi$ we obtain an expanded normal proof $\Pi^{\prime}$ of $\left(\begin{array}{c}\gamma_1\\\vdots\\\gamma_n\end{array}\right)\imply\alpha$ by applying a series $\imply$-I rules in the main branch of $\Pi$, discharging all open assumptions in $\Pi$. From the parsing tree of  $\left(\begin{array}{c}\gamma_1\\\vdots\\\gamma_n\end{array}\right)\imply\alpha$ and an adequate partial map from formula occurrences in $\Pi^{\prime}$ to this parsing tree we obtain an EmND proof of $\left(\begin{array}{c}\gamma_1\\\vdots\\\gamma_n\end{array}\right)\imply\alpha$. Note that any adequate mapping produces an EmND proof of the later formula. We apply Lemma~\ref{main} to $\Pi^{\prime}$ concluding that there is a level $\mu$ from $\Pi^{\prime}$ and a sub-derivation $\Pi_s$ of $\Pi^{\prime}$, such that, there are at least $m^{p-3}$ instances of $\Pi_s$ occurring in level $\mu$ in $\Pi^{\prime}$. Finally, we rebuild the orginal derivation $\Pi$ by removing all $\imply$-I applications used to construct $\Pi^{\prime}$ from $\Pi$. We reach the desire conclusion, i.e., there is a level $\mu$ from $\Pi$ and a sub-derivation $\Pi_s$ of $\Pi$, such that, there are at least $m^{p-3}$ instances of $\Pi_s$ occurring in level $\mu$ in $\Pi$
\end{proof}

Other very useful lemma is the following. It is inspired in the huge proofs, based on Fibonacci numbers, shown in \cite{SuperExp} in its appendix. 

\begin{lemma}\label{lemma:HugeDerivation}
  Let $\beta$ be a $M_{\imply}$ formula. There is a family $(\Pi^{fib}_{n})_{n\in\mathbb{N}})$ of $M_{\imply}$ N.D.  derivations of $\beta$ from a set of formulas $\Delta_n$, of linear size on $n$ and $\size{\beta}$. If $m=\size{\beta}+\sum_{\delta\in\Delta_n}\size{\delta}$ then, for each $n\in\mathbb{N}$, $\Pi^{fib}_{n}$ is a Normal and expanded derivation of height $\mathcal{O}(4\times n\times\size{\beta})$ and size lower-bounded by $\frac{\phi^{\size{\beta}\times n}}{\sqrt{5}}$, where $\phi\approx 1.618$ is the golden ratio.
\end{lemma}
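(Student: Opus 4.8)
The plan is to construct the family $(\Pi^{fib}_n)_n$ explicitly, by threading a Fibonacci recurrence through a purely eliminative (E-part only) derivation and then capping it with a short introduction part that rebuilds $\beta$ and secures the expanded form. First I fix the parameter $k=\size\beta\times n$, which will be the Fibonacci index, and I introduce fresh atoms $q_0,\dots,q_{k-1}$ together with the head atom $q$ of $\beta$, i.e. the atom reached from the root of the syntax tree of $\beta$ by always descending to the right child; I set $q_k=q$. The assumption set is $\Delta_n=\{q_0,\ q_1\}\cup\{\,q_{j-1}\imply(q_{j-2}\imply q_j):2\le j\le k\,\}$, whose total length is $\mathcal{O}(k)=\mathcal{O}(\size\beta\times n)$, that is, linear in $n$ and $\size\beta$ as required.

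Second, I build the E-part by induction on $j$. Let $\Pi_0$ and $\Pi_1$ be the one-node derivations consisting of the open assumptions $q_0$ and $q_1$. Given $\Pi_{j-1}$ (concluding $q_{j-1}$) and $\Pi_{j-2}$ (concluding $q_{j-2}$), I form $\Pi_j$ by two $\imply$-Elim steps: from the assumption $q_{j-1}\imply(q_{j-2}\imply q_j)$ and the conclusion of $\Pi_{j-1}$ I infer $q_{j-2}\imply q_j$, and then from this together with the conclusion of $\Pi_{j-2}$ I infer $q_j$. Since every formula occurrence in $\Pi_k$ is either a top-formula or the conclusion of an $\imply$-Elim (there is not a single $\imply$-Intro yet), $\Pi_k$ has no maximal formula and every minimal formula is the atom ending a branch; hence $\Pi_k$ is already normal and expanded, with conclusion the atom $q=q_k$. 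Finally I attach the introduction part: starting from $q$, I apply $\imply$-Intro once for each implication on the right spine of $\beta$, discharging vacuously, so that the conclusion becomes $\beta$. Because the introduced formulas are never used as major premises, normality is preserved, and because the unique minimal formula of the main branch is the atom $q$, the expanded form is preserved. This yields $\Pi^{fib}_n$, a normal expanded derivation of $\beta$ from $\Delta_n$.

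Third comes the counting. In the tree-like setting there is no sharing, so the number $\ell(j)$ of top-formula occurrences of $\Pi_j$ obeys $\ell(0)=\ell(1)=1$ and $\ell(j)=\ell(j-1)+\ell(j-2)+1$. Setting $g(j)=\ell(j)+1$ gives the pure Fibonacci recurrence $g(j)=g(j-1)+g(j-2)$ with $g(0)=g(1)=2$, whence $\ell(k)=2F_{k+1}-1$. As the size of a derivation is at least its number of top-formulas, $\size{\Pi^{fib}_n}\ge 2F_{k+1}-1\ge F_{k+1}$. Using Binet's identity $F_{k+1}=(\phi^{k+1}-\psi^{k+1})/\sqrt5$ with $\psi=\tfrac{1-\sqrt5}{2}$, together with $\phi^2-1=\phi$ and $|\psi|<1$, one checks $\phi^{k+1}-\psi^{k+1}\ge\phi^{k+1}-\phi^{k-1}=\phi^{k}$, so $F_{k+1}\ge\phi^{k}/\sqrt5$ and therefore $\size{\Pi^{fib}_n}\ge\phi^{k}/\sqrt5=\phi^{\,\size\beta\times n}/\sqrt5$. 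For the height, the E-part satisfies $H(\Pi_j)=2+H(\Pi_{j-1})$ with $H(\Pi_0)=H(\Pi_1)=0$, so $H(\Pi_k)=2(k-1)$, and the introduction part adds at most $\size\beta$; hence the total height is at most $2\,\size\beta\times n+\size\beta\le 4\,n\,\size\beta$ for $n\ge 1$, giving the claimed $\mathcal{O}(4\times n\times\size\beta)$ bound.

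The main obstacle I anticipate is twofold. First, verifying simultaneously that $\Pi^{fib}_n$ is normal \emph{and} in expanded form: the construction must keep every minimal formula atomic, which is precisely why the huge part is kept purely eliminative and the compound conclusion $\beta$ is reconstructed only afterwards by a vacuous introduction part sitting immediately above the atomic minimal formula $q$; I must check that the junction between the E-part and the I-part creates no maximal formula. Second, the passage from the Fibonacci count to the clean bound $\phi^{\size\beta\times n}/\sqrt5$: because $\psi^{k+1}$ alternates in sign, $F_{k+1}$ can fall just below $\phi^{k+1}/\sqrt5$, so the estimate must be anchored at $F_{k+1}\ge\phi^{k}/\sqrt5$ rather than at $F_{k}$, and this is exactly the slack that the shift $g(j)=\ell(j)+1$ and the $+1$ in the leaf recurrence provide. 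Everything else, namely that $\Delta_n$ stays linear and that the derivation genuinely concludes $\beta$, follows by inspection of the construction.
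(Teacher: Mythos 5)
Your proposal is correct, and at its core it is the same construction the paper invokes: the Fibonacci-style chain of assumptions $q_{j-1}\imply(q_{j-2}\imply q_j)$, consumed by two $\imply$-Elim applications per stage, is precisely the family of \cite{Exponential}, Section~4, which the paper's two-line proof merely cites and then ``modifies'' by setting $p_n=\beta$. The value of your version is that it is self-contained and makes explicit two points that the paper's citation glosses over, both of which are needed for the statement as written. First, the cited family has size only $\phi^{n}/\sqrt{5}$; to obtain the claimed exponent $\size{\beta}\times n$ one must run the chain for $k=\size{\beta}\times n$ stages, which is exactly your choice of index, and your bookkeeping ($\ell(k)=2F_{k+1}-1$, $F_{k+1}\ge\phi^{k}/\sqrt{5}$ via Binet and $\phi^{2}-1=\phi$, height $2(k-1)+\size{\beta}\le 4n\size{\beta}$) is accurate. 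Second, the paper's ``modify this family to have $p_n=\beta$'' is delicate when $\beta$ is compound: substituting $\beta$ for the terminal atom leaves the principal branch with a non-atomic minimal formula, so the resulting derivation would be normal but not in expanded form, contrary to what the lemma asserts. Your construction sidesteps this by keeping the eliminative part atomic---terminating in the head atom $q$ of $\beta$---and then rebuilding $\beta$ by vacuous $\imply$-Intro applications along its right spine; since these introduced formulas are never major premises of $\imply$-Elim, no maximal formula arises and every minimal formula stays atomic. So your route is the paper's route made explicit, and it in fact repairs the two places where the proof-by-citation is loose.
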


\begin{proof} In \cite{Exponential}, section 4, we show a family of Natural Deduction $M_{\imply}$ derivations of $p_1\imply p_n$, from $\Delta_n=\{p_1,p_1\imply p_2\}\cup\{p_i\imply(p_{i+1}\imply p_{i+2}):i\leq n-2\}$ of height $n$ and size $\frac{\phi^{ n}}{\sqrt{5}}$. We modify this family to have $p_n=\beta$, getting the statement of the lemma.
  \end{proof}

  \begin{lemma}\label{lemma:PolyPart}
    For any EmND proof $\mathcal{T}$ of a tautology $\alpha$ and  $p>3$, if $\size{\mathcal{T}}\ge\size{\alpha}^p$ and $\langle\mathcal{T}_1,\ldots,\mathcal{T}_n\rangle=LRI(\mathcal{T})$, then 
    \[
    \size{\mathcal{T}\ominus\bigcup_{i=1\ldots n}\mathcal{T}_i}< m^p
    \]
  \end{lemma}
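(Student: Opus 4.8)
The plan is to argue by contradiction, combining the redundancy lemma for normal derivations (Lemma~\ref{lemma:RedundantDerivation}) with the maximality of the independent set of matrices furnished by Lemma~\ref{lemma:ListForCollapse}. Write $\mathcal{T}'=\mathcal{T}\ominus\bigcup_{i=1}^{n}\mathcal{T}_i$ and suppose, contrary to the statement, that $\size{\mathcal{T}'}\ge m^p$. The whole point is that $\mathcal{T}'$ is exactly the skeleton of $\mathcal{T}$ left once every instance of every lowest redundant matrix $\mathcal{T}_i\in LRI(\mathcal{T})$ has been deleted together with its conclusion, so any fresh redundancy we can still locate inside $\mathcal{T}'$ will have to live strictly above those deleted instances, and will therefore witness a redundancy that the construction of $LRI(\mathcal{T})$ was already supposed to have captured.

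First I would check that $\mathcal{T}'$ is a legitimate object for Lemma~\ref{lemma:RedundantDerivation}. Deleting top-ward sub-derivations cannot turn a non-maximal formula into a maximal one, so $\mathcal{T}'$ is still normal and expanded; the erased conclusions $c(\mathcal{T}_i)$ become open assumptions, and by the sub-formula principle (Corollary~\ref{coro:SubForProperty}) each of them is a sub-formula of $\alpha$. Hence the set $\Gamma'$ of open assumptions of $\mathcal{T}'$ consists of sub-formulas of $\alpha$, with $\card{\Gamma'}\le m$ and $\size{\gamma}\le m$ for each $\gamma\in\Gamma'$. Moreover, since $\ominus$ only removes nodes on the leaf side of the retained ones, the distance from the root of every surviving node is unchanged, so levels in $\mathcal{T}'$ coincide with levels in $\mathcal{T}$.

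Next I would feed $\mathcal{T}'$ to Lemma~\ref{lemma:RedundantDerivation}. Because $\size{\mathcal{T}'}\ge m^p$, that lemma produces a sub-derivation $\Pi_s$ of $\mathcal{T}'$ and a level $\xi$ carrying at least $m^{p-3}$ instances of $\Pi_s$. By the level-preservation observation these same instances occur at level $\xi$ in $\mathcal{T}$; and since none of them lies inside a deleted matrix, each instance of $\Pi_s$ is, in $\mathcal{T}$, either disjoint from or a super-derivation of the instances of the matrices in $LRI(\mathcal{T})$, but never a proper sub-derivation of one. This is precisely the situation forbidden by the maximality clause of Lemma~\ref{lemma:ListForCollapse} (``it is the biggest one''): the independent set underlying $LRI(\mathcal{T})$ could have been enlarged by the matrix determined by $\Pi_s$, contradicting its maximality. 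Therefore the assumption $\size{\mathcal{T}'}\ge m^p$ is untenable and $\size{\mathcal{T}\ominus\bigcup_i\mathcal{T}_i}<m^p$.

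The step I expect to be the real obstacle is the bookkeeping of the governing size parameter. Lemma~\ref{lemma:RedundantDerivation} is phrased in terms of $\hat m=\size{\alpha}+\sum_{\gamma\in\Gamma'}\size{\gamma}$, and the estimate above only gives $\hat m\le m+m^2$, so a naive application asks for $\size{\mathcal{T}'}\ge\hat m^p$ rather than the weaker $\size{\mathcal{T}'}\ge m^p$ that we are granted. To close this gap I would reuse the normalisation-and-padding device of Lemma~\ref{lemma:PrepRecursive}: discharge the open assumptions of $\mathcal{T}'$ in the order dictated by the map $\ell$, pad the resulting conclusion with fresh copies of a new variable until it again has size $m$, and apply Theorem~\ref{main} (equivalently Corollary~\ref{corollary:main}) to the padded EmND proof, whose governing parameter is once more $m=\size{\alpha}$. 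One then strips the added introductions to transport the $m^{p-3}$ repetitions of $\Pi_s$ back into $\mathcal{T}'$, exactly as in the proof of Lemma~\ref{lemma:PrepRecursive}. The only delicate point is to verify that this padding does not silently create new redundancies of its own and that the recovered matrix still sits strictly above the levels removed by $LRI(\mathcal{T})$, so that the contradiction with maximality is preserved.
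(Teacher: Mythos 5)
Your overall strategy (argue by contradiction, invoke the redundancy lemma for derivations, conclude that $LRI(\mathcal{T})$ already captured everything) is the same as the paper's, but your first step is exactly the step the paper's proof is designed to avoid. By Definition~\ref{def:ominus} the greedy difference $\ominus$ removes the conclusion $c(\mathcal{T}_i)$ of each deleted instance as well, so the erased conclusions do \emph{not} become open assumptions --- that is the behaviour of the plain difference of Definition~\ref{def:diff}, not of $\ominus$. Consequently $\mathcal{T}'=\mathcal{T}\ominus\bigcup_{i}\mathcal{T}_i$ is not a derivation at all: every $\imply$-Elim application sitting immediately below a deleted instance is left with only its major premise; the paper says explicitly that this object ``is not a derivation anymore, it is only a tree'' (figure~\ref{figure:T0}). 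Hence Lemma~\ref{lemma:RedundantDerivation} cannot be applied to $\mathcal{T}'$ directly, and repairing this is the entire content of the paper's argument: it rebuilds a genuine derivation from pieces of $\mathcal{T}_0$ by re-introducing the missing minor premises as assumptions ($\mathcal{T}_0^{++}$, figure~\ref{figure:T0++}), and, crucially, splits into the Proper/Improper cases. If at least $m^{p-3}$ copies of the assumption $\beta$ had to be re-introduced, one would have manufactured a fresh $m^{p-3}$-fold repetition at level $\mu$ by hand and the contradiction would evaporate; this is why the paper caps the number of re-introduced assumptions at $m^{p-3}$ and compensates the lost size with the Fibonacci-style derivations of Lemma~\ref{lemma:HugeDerivation} ($\mathcal{T}_0^{+++}$, figure~\ref{figure:T0+++}). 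Nothing in your proposal plays either role. (The paper also organizes all of this as an induction on the number $\ell$ of lowest redundant levels, rather than treating all levels at once.)

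Two further steps would fail even if you repaired the structure. First, your contradiction with maximality is not secured by your own case analysis: you concede that the instances of $\Pi_s$, viewed inside $\mathcal{T}$, may be \emph{super-derivations} of deleted $LRI$ instances, but in that case $LRI(\mathcal{T})\cup\{\Pi_s\}$ is not an independent set in the sense of Definition~\ref{def:IndependentSetMatrices}, so the maximality of ``the biggest independent set'' is not contradicted. Moreover, occurrences of $\Pi_s$ in $\mathcal{T}'$ rooted at distinct nodes re-acquire \emph{different} deleted material when mapped back into $\mathcal{T}$, so they need not be instances of one and the same matrix of $\mathcal{T}$; the paper's three-way case analysis ($\nu>\mu$, $\nu=\mu$, $\nu<\mu$) exists precisely to turn the repetition found in the repaired derivation into a repetition of a single matrix in $\mathcal{T}$. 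Second, your fix for the parameter mismatch goes the wrong way: after discharging the open assumptions of $\mathcal{T}'$, the resulting conclusion has size $\hat m=\size{\alpha}+\sum_{\gamma\in\Gamma'}\size{\gamma}>m$, and padding with a fresh variable can only make a formula \emph{larger}; the device of Lemma~\ref{lemma:PrepRecursive} works there precisely because the discharged conclusion $\beta$ is \emph{smaller} than $\alpha$ and can be padded up to size $m$. So the gap you yourself flagged remains open under your proposed repair.
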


  \begin{proof}
    We can consider $\mathcal{T}$ as shown in figure~\ref{figure:T}, where $\ell\times m^{p-3}\leq n$, and $\ell$ is the number of different levels that have sub-derivations occurring at least $m^{p-3}$ in them. We remember that the formulas $\beta_i$, $i=1\ldots n$ may occur in different levels. For each $j=1\ldots \ell$ there are at least $m^{p-3}$ occurrences of instances of a sub-derivation $\mathcal{T}_j$ in level $j$ in $\mathcal{T}$. To facilitate the understanding we re-index the instances with the level $j$, such that, $\mathcal{T}_{j_i}^j$ is the $i$-th instance  of the matrix $\mathcal{M}_j$ occurring in level $j$ in $\mathcal{T}$. We remember that there are $\ell$ matrices indicating $\ell$ redundant parts in $\mathcal{T}$.
    \begin{figure}[H]
      \begin{prooftree}
        \AxiomC{$\mathcal{T}_1^{\mu_1}$}
        \noLine
        \UnaryInfC{$\beta_1^{\mu_1}$}
        \AxiomC{$\delta_1^{\mu_1}$}
        \noLine
        \UnaryInfC{$\Pi_1^{\mu_1}$}
        \noLine
        \UnaryInfC{$\beta_1^{\mu_1}\imply\gamma_1^{\mu_1}$}
        \BinaryInfC{$\gamma_1^{\mu_1}$}
        \AxiomC{$\ldots$}
        \AxiomC{$\mathcal{T}_1^{\mu_\ell}$}
        \noLine
        \UnaryInfC{$\beta_1^{\mu_\ell}$}
        \AxiomC{$\delta_1^{\mu_\ell}$}
        \noLine
        \UnaryInfC{$\Pi_1^{\mu_\ell}$}
        \noLine
        \UnaryInfC{$\beta_1^{\mu_\ell}\imply\gamma_1^{\mu_\ell}$}
        \BinaryInfC{$\gamma_1^{\mu_\ell}$}
        \noLine
        \TrinaryInfC{$\Pi$}
         \noLine
         \UnaryInfC{$q$}
         \noLine
         \UnaryInfC{$\vdots$}
         \noLine
         \UnaryInfC{$\alpha$}
      \end{prooftree}
      \caption{The proof $\mathcal{T}$ in Lemma~\ref{lemma:PolyPart}}\label{figure:T}
    \end{figure}

    We prove by induction on the number of different lowest levels, i.e, on $\ell$ that
    \[
        \size{\mathcal{T}_0}=\size{\mathcal{T}\ominus\bigcup_{i=1\ldots n}\mathcal{T}_i}< m^p
        \]

        Induction on $\ell$:
    \begin{enumerate}
    \item[Basis] There is only one level $\mu$ and one matrix $\mathcal{T}_{\mu}$, hence $\ell=1$,  that is repeated at least $m^{p-3}$ times in level $\mu$ in $\mathcal{T}$.  Let $\mathcal{T}_{0}=\mathcal{T}\ominus\bigcup_{i=1\ldots n}\mathcal{T}_i$, where $\mathcal{T}_i^{\mu_i}$ are all the instances of $\mathcal{T}_{\mu}$, for all $i=1\ldots n$. We notice that $\mathcal{T}_0$ is not a derivation anymore, it is only a tree, depicted in figure~\ref{figure:T0}.
     \begin{figure}[H]
      \begin{prooftree}
        \AxiomC{$\delta_1^{\mu_1}$}
        \noLine
        \UnaryInfC{$\Pi_1^{\mu_1}$}
        \noLine
        \UnaryInfC{$\beta_1^{\mu_1}\imply\gamma_1^{\mu_1}$}
        \UnaryInfC{$\gamma_1^{\mu_1}$}
        \AxiomC{$\ldots$}
        \AxiomC{$\delta_n^{\mu_1}$}
        \noLine
        \UnaryInfC{$\Pi_n^{\mu_1}$}
        \noLine
        \UnaryInfC{$\beta_n^{\mu_1}\imply\gamma_n^{\mu_1}$}
        \UnaryInfC{$\gamma_n^{\mu_1}$}
        \noLine
        \TrinaryInfC{$\Pi$}
         \noLine
         \UnaryInfC{$q$}
         \noLine
         \UnaryInfC{$\vdots$}
         \noLine
         \UnaryInfC{$\alpha$}
      \end{prooftree}
      \caption{The tree $\mathcal{T}_0$}\label{figure:T0}
    \end{figure}

      Let us suppose that: 
    \[
    \size{\mathcal{T}_0}=\size{\mathcal{T}\ominus\bigcup_{i=1\ldots n}\mathcal{T}_i}\ge m^p
    \]
    We observe that by Lemma~\ref{lemma:RedundantDerivation}, for each $i$,  $\size{\Pi_{i}^{\mu_i}}<m^p$ and $\size{\Pi}<m^p$. If any of these derivations were bigger than $m^p$ there would be a matrix occurring at least $m^{p-3}$ in some level of them. The only instances that occur at least $m^{p-3}$ in $\mathcal{T}$ are $\mathcal{T}_i$, $i=1,\ldots,n$ by hypotheses. They occur in level $\mu$, contradicting the above assumptions on the size of the derivations $\Pi_{i}^{\mu_i}$ and $\Pi$. We note that $\size{\mathcal{T}_{0}}=\size{\Pi}+\sum_{i=1,n}\size{\Pi_i^{\mu_i}}\ge m^p$ and, by previous observation on the size of each of the summands, there must be a sub-sequence  $(\Pi_{r_i}^{\mu_{1}})_{i=1,r}$ of the sequence $(\Pi_{i}^{\mu_1})_{i=1,n}$ of derivations, such that, $\size{\Pi}+\sum_{i=1,r}\size{\Pi_{r_i}^{\mu_{1}}}\ge m^p$. We have to consider two cases:
    \begin{description}
    \item[Proper] We have that $r<m^{p-3}$. So, we build the derivation $\mathcal{T}_0^{++}$ in figure~\ref{figure:T0++}, by re-introducing in $\mathcal{T}_0$ the respective minor premiss $\beta_{r_i}^{\mu_{1}}$ for each $\Pi_{r_i}^{\mu_{1}}$ derivation of the major premiss $\beta_{r_i}^{\mu_{1}}\imply\gamma_{r_i}^{\mu_{1}}$. Moreover, the other derivations $\Pi_i^{\mu_1}$ not in the sub-sequence $(\Pi_{r_i}^{\mu_{1}})_{i=1,r}$ are erased from $\mathcal{T}_0$. We re-inforce the observation that all $\beta_{i}^{\mu_1}$ are the same formula.Moreover, we delete the $I-part$ of the main branch of $\mathcal{T}$ that occurs in $\mathcal{T}_0^{++}$. This is the final derivation $\mathcal{T}_0^{++}$ show in the figure~\ref{figure:T0++}. $q$ is the minimal formula of the main branch.


   \begin{figure}[H]
      \begin{prooftree}
        \AxiomC{$\gamma_1^{\mu_1}\;\;\ldots$}
        \AxiomC{$\beta_{r_1}^{\mu_{1}}$}
        \AxiomC{$\delta_{r_1}^{\mu_{1}}$}
        \noLine
        \UnaryInfC{$\Pi_{r_1}^{\mu_{1}}$}
        \noLine
        \UnaryInfC{$\beta_{r_1}^{\mu_{1}}\imply\gamma_{r_1}^{\mu_{1}}$}
        \BinaryInfC{$\gamma_{r_1}^{\mu_{1}}$}
        \AxiomC{$\ldots\;\;\gamma_{i}^{\mu_{1}}\;\;\ldots$}
        \AxiomC{$\beta_{r_r}^{\mu_{1}}$}
        \AxiomC{$\delta_{r_r}^{\mu_{1}}$}
        \noLine
        \UnaryInfC{$\Pi_{r_r}^{\mu_{1}}$}
        \noLine
        \UnaryInfC{$\beta_{r_r}^{\mu_{1}}\imply\gamma_{r_r}^{\mu_{1}}$}
        \BinaryInfC{$\gamma_{r_r}^{\mu_{1}}$}
        \AxiomC{$\ldots\;\;\gamma_{n}^{\mu_{1}}$}       
        \noLine
        \QuinaryInfC{$\Pi$}
         \noLine
         \UnaryInfC{$q$}
      \end{prooftree}
      \caption{The derivation $\mathcal{T}_0^{++}$}\label{figure:T0++}
    \end{figure}
   We must observe that $\mathcal{T}_0^{++}$ is a valid derivation bigger than $m^p$, thus by Lemma~\ref{lemma:RedundantDerivation} there is a level $\nu$, such that, there are at least $m^{p-3}$ instances of the matrix $\mathcal{M}$. We have a contradiction.
   \begin{itemize}
   \item If $\nu>\mu$ then there is at least $m^{p-3}$ instances of a derivation in a level that exists in all the sub-derivations $(\Pi_{r_i}^{\mu_{1}})_{i=1,r}$. Replacing back all the instances $\mathcal{T}_i$, $i=1,n$, we obtain $\mathcal{T}$ and a proof that it has another level, besides $\mu$, that has at least $m^{p-3}$ repeated instances of the same matrix. We have a contradiction.
   \item If $\nu=\mu$, since $\beta=\beta_i^{\mu_1}$, for each $i$, occurs less than $m^{p-3}$ in level $\mu$, then there must be other instances than $\mathcal{T}_i$ occurring at least $m^{p-3}$ in $\mathcal{T}$ in $\mu$. We have a contradiction, when we replace back the derivations $\mathcal{T}_i$, for there will be more than one matrix with at least $m^{p-3}$ repeated occurrences in level $\mu=\nu$.
   \item If $\nu<\mu$ we have a contradiction, by a reasoning analogous to the previous item.
   \end{itemize}
 \item[Improper] We have that $m^{p-3}\leq r$. So, we build the derivation $\mathcal{T}_0^{+++}$ in figure~\ref{figure:T0+++}, by re-introducing in $\mathcal{T}_0$ the respective minor premiss $\beta$ as in $\mathcal{T}_0^{++}$, but only from $r_1$ to $r_{m^{p-3}}$. Moreover, for each $i=m^{p-3}+1,\ldots,r$ we remove $\Pi_{r_i}^{\mu}$ from $\mathcal{T}_0$ and replace $\beta_{i-m^{p-3}}^{\mu}$, the derivation of the minor premiss of the $\imply$-Elim rule by the smallest derivation of the form stated by Lemma~\ref{lemma:HugeDerivation} that is bigger than $\size{\Pi_{r_i}^{\mu}}$. We denote by $\Sigma_{\Pi_{r_i}^{\mu}}$ this derivation. Note that the size of $\Sigma_{\Pi_{r_i}^{\mu}}$ compensates the removal of $\Pi_{i}^{\mu_1}$, for every $i=m^{p-3}+1$ to $i=r$. It is important to note that the height of $\Sigma_{r_i}^{\mu}$ is linear on $m$, so it is $\mathcal{T}_0^{+++}$. From $i=r+1$ to $i=n$ there is only a simply removal of $\Pi_{r_i}^{\mu_1}$, without any compensation. At the end the size of $\mathcal{T}_0^{+++}$ is bigger than $m^p$ and there is less than $m^{p-3}$ occurrences of $\beta$ in the level $\mu_1=\mu$. Thus, by an analysis similar to the above case, with the application of Lemma~\ref{lemma:RedundantDerivation}, we obtain a contradiction. We observe that we introduced a linear number of assumptions in the  Lemma~\ref{lemma:HugeDerivation}. We notice that we reason in terms of the big-O notation, thus we have that $\mathcal{T}_{0}^{+++}$ is bigger than $m^p$, in terms of $m$. 
   \begin{figure}[H]
      \begin{prooftree}
        \AxiomC{$\gamma_1^{\mu_1}\;\;\ldots$}
        \AxiomC{$\Sigma_{r_1}^{\mu_{1}}$}
        \noLine
        \UnaryInfC{$\beta_{r_1}^{\mu_{1}}$}
        \AxiomC{$\delta_{r_1}^{\mu_{1}}$}
        \noLine
        \UnaryInfC{$\Pi_{r_1}^{\mu_{1}}$}
        \noLine
        \UnaryInfC{$\beta_{r_1}^{\mu_{1}}\imply\gamma_{r_1}^{\mu_{1}}$}
        \BinaryInfC{$\gamma_{r_1}^{\mu_{1}}$}
        \AxiomC{$\ldots\;\;\gamma_{i}^{\mu_{1}}\;\;\ldots$}
        \AxiomC{$\beta_{r_r}^{\mu_{1}}$}
        \AxiomC{$\delta_{r_r}^{\mu_{1}}$}
        \noLine
        \UnaryInfC{$\Pi_{r_r}^{\mu_{1}}$}
        \noLine
        \UnaryInfC{$\beta_{r_r}^{\mu_{1}}\imply\gamma_{r_r}^{\mu_{1}}$}
        \BinaryInfC{$\gamma_{r_r}^{\mu_{1}}$}
        \AxiomC{$\ldots\;\;\gamma_{n}^{\mu_{1}}$}       
        \noLine
        \QuinaryInfC{$\Pi$}
         \noLine
         \UnaryInfC{$q$}
      \end{prooftree}
      \caption{The derivation $\mathcal{T}_0^{+++}$}\label{figure:T0+++}
    \end{figure}
   
    \end{description}
  \item[Inductive Step] In this case we have $\ell>1$ levels with repetitions of the respective sub-derivations at least $m^{p-3}$. We analyse the highest level of repetitions in a similar way that was done in the basis step, with the hypothesis that the removal of all repetitions in the lowest $\ell-1$ levels result in a tree of size lesser than $m^p$. With this inductive hypothesis and the reasonning on the repetitions removal in the highest level we reach the desired conclusion.
    \end{enumerate}
     \end{proof}

\begin{lemma}\label{lemma:SizeCompressed}
  For any EmND proof $\mathcal{T}$ of a tautology $\alpha$ and  $p>3$, if $\size{\mathcal{T}}\ge\size{\alpha}^p$ then $\size{Compress(\mathcal{T},p)}<\size{\alpha}^p$
\end{lemma}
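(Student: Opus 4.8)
The plan is to prove Lemma~\ref{lemma:SizeCompressed} by induction on the recursion depth of $Compress(\mathcal{T},p)$, using Lemma~\ref{lemma:PolyPart} as the workhorse for the base structure and Lemma~\ref{lemma:MaximalCallsCompressProofs} to guarantee termination. First I would observe that $Compress$ enters the non-trivial branch only when $m^p < \size{\mathcal{T}}$ (line~\ref{line:IF}), so the hypothesis $\size{\mathcal{T}}\ge\size{\alpha}^p = m^p$ with $m=\size{c(\mathcal{T})}=\size{\alpha}$ puts us squarely in that branch; when the hypothesis fails we are in the base case, where $Compress$ returns $\mathcal{T}$ unchanged and the bound $\size{\mathcal{T}}<m^p$ holds vacuously. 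The strategy then is to show that after collapsing all the matrices from $LRI(\mathcal{T})$ and recursively compressing each of them, the resulting r-DagProof has size below $m^p$.

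The core computation proceeds as follows. By Lemma~\ref{lemma:PolyPart}, the \emph{skeleton} $\mathcal{T}\ominus\bigcup_{i=1}^{n}\mathcal{T}_i$ obtained by removing every instance in $LRI(\mathcal{T})=\langle\mathcal{T}_1,\ldots,\mathcal{T}_n\rangle$ already has size strictly less than $m^p$. After the \textbf{Collapse} operations, what remains attached to this skeleton is, for each distinct matrix $\mathcal{C}_j$ in the independent set, a \emph{single} compressed copy $Compress(\mathcal{C}_j,p)$ rather than the $m^{p-3}$ (or more) original instances. Since the matrices form an independent set (Definition~\ref{def:IndependentSetMatrices}), no compressed copy is nested inside another, so their sizes simply add to the skeleton size. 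I would then invoke the inductive hypothesis on each recursive call: each $\mathcal{C}_j$ is strictly smaller than $\mathcal{T}$ (its root sits strictly above the root of $\mathcal{T}$), so by induction $\size{Compress(\mathcal{C}_j,p)}<m^p$. Combining, the total size is bounded by the skeleton size plus the sum of the compressed-matrix sizes; the quantitative bound of Lemma~\ref{lemma:SizeCollapse}, $\size{\mathcal{D}'}\le \frac{\size{\mathcal{D}}}{length(\mathcal{Y})\times\size{\mathcal{C}}}$, controls how much each collapse contributes, and the number of distinct matrices is at most $\size{T_\alpha}=m$ by Lemma~\ref{lemma:Upper-boundsE-parts}.

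The arithmetic I expect to assemble is roughly
\[
\size{Compress(\mathcal{T},p)} \;\le\; \underbrace{\size{\mathcal{T}\ominus\textstyle\bigcup_i\mathcal{T}_i}}_{<\,m^p\text{ by Lemma~\ref{lemma:PolyPart}}} \;+\; \sum_{j} \size{Compress(\mathcal{C}_j,p)},
\]
where the recursive terms are each below $m^p$ but, crucially, replace a factor of $m^{p-3}$ copies. The delicate point is that merely bounding each summand by $m^p$ is not enough if there are up to $m$ distinct matrices; the argument must exploit that collapsing $m^{p-3}$ instances into one frees a factor of $m^{p-3}$ per matrix, which dominates the $m$ distinct matrices and the skeleton together for $p>3$. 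I would make this precise by noting that the pre-collapse contribution of each matrix class was $\ge m^{p-3}\cdot\size{\mathcal{C}_j}$ inside $\mathcal{T}$, whereas the post-compression contribution is $\size{Compress(\mathcal{C}_j,p)}$, and the recursion strictly shrinks the total because each recursive invocation obeys the same bound on a strictly smaller input.

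\textbf{The main obstacle} will be controlling the cumulative size across \emph{multiple distinct matrices at multiple levels} simultaneously: Lemma~\ref{lemma:PolyPart} handles the skeleton after removing all of $LRI(\mathcal{T})$, and the recursion handles each matrix individually, but gluing these requires that the compressed matrices, when re-attached via the ancestrality edges $E_A$ introduced by \textbf{DetachLink}, do not reintroduce size through duplicated deduction edges. I would argue that \textbf{DetachLink} (Definition~\ref{def:Detach}) adds only $E_A$-edges and relabels existing $E_d$-edges to point at the unique root $r(\mathcal{C})$, so the node count of the collapsed structure is exactly the skeleton nodes plus one copy of each matrix's nodes — no multiplicative blow-up — which is precisely what lets the independent-set additivity and the inductive bound combine to yield $\size{Compress(\mathcal{T},p)}<m^p$.
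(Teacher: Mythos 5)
Your proposal follows essentially the same route as the paper's own proof: induction on the number of recursive calls of $Compress$, with the base case given by the \textbf{else} branch of line~\ref{line:IF} of algorithm~\ref{algo:CompressEmNDProof}, and the inductive step combining Lemma~\ref{lemma:ListForCollapse}, Lemma~\ref{lemma:PolyPart} for the uncollapsed skeleton, the inductive hypothesis on each recursively compressed matrix, and Lemma~\ref{lemma:SizeCollapse} for the collapse. You are in fact more explicit than the paper about the one delicate point --- that bounding the skeleton and each compressed matrix separately by $m^p$ does not by itself force their sum below $m^p$ --- which the paper's proof passes over silently when it simply chains Lemma~\ref{lemma:PolyPart} and Lemma~\ref{lemma:SizeCollapse}.
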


\begin{proof}
  We prove by induction on the number of recursive calls that $\size{Compress(\mathcal{T}, p)}<\size{\alpha}^p$, for in lemma~\ref{lemma:MaximalCallsCompressProofs} we have already proven that the algorithm stops for any  valid input pair $\langle \mathcal{T}, p\rangle$.
  \begin{itemize}
  \item {\bf Basis} No recursive call: In this case we already have $\size{Compress(\mathcal{T}, p)}<\size{\alpha}^p$. The ``{\bf else}'' of the ``if'' in line~\ref{line:IF} of algorithm~\ref{algo:CompressEmNDProof} is used. 
  \item {\bf I.H.} Suppose $\size{\mathcal{T}}\ge\size{\alpha}^p$ holds. Thus, a call to $Lemma~\ref{lemma:ListForCollapse}(\mathcal{T})$, in line~\ref{line:Corollary} return the list of all occurrences of sub-derivations, instances of the independent set of matrices given by Lemma~\ref{lemma:ListForCollapse}, that occurs more than $m^{p-3}$ in the lowest levels of $\mathcal{T}$. By inductive hypothesis, there is less one recursive call,  $Compress(\mathcal{T}_i)$ of each instance $\mathcal{T}_i$ of the list returns a r-Dag of size less than  $\size{\alpha}^p$. By Lemma~\ref{lemma:PolyPart} the part of $\mathcal{T}$ that it is not collapsed is less than $\size{\alpha}^p$ and by Lemma~\ref{lemma:SizeCollapse} we obtain that $\size{Compress(\mathcal{T}, p)}<\size{\alpha}^p$. 
    \end{itemize}

  \end{proof}

The above upper-bound is not tight. A tighter one obtains by counting the number of matrices for each level, according to Theorem~\ref{main} and corollary~\ref{corollary:main}. However,  in this article, we do not need a tighter upper-bound than what we state in lemma~\ref{lemma:MaximalCallsCompressProofs}.

\section{Checking r-DagProofs in Polynomial Time}\label{sec:PolytimeCompression}

The following definitions are central in proving that r-DagProofs are certificates for $\mil$ formulas validity. Let $\mathcal{C}$ = $\langle V , E_{d}, E_{A}, r, l, L,\rho,\delta,\mathcal{O}_{\alpha}\rangle$ be a {\bf pre} r-DagProof        of $\alpha$ from $\mil$. We associate to each $v\in V$ an entailment relation. The entailment represents the logical consequency relation carried within $\mathcal{C}$ from the DAG's leaves downwards until $v$. Due to the collapse operation, and the many downward detours that the collapses introduce, we use an environment function that keeps track of the entailment relation related to each detour. We use the notation $Emt(\alpha)$ to denote the set $\{\Delta\vdash\beta:\mbox{$\Delta\subseteq Sub(\alpha)$ and $\beta\in Sub(\alpha)$}\}$ of all possible entailments between sets of (sub)formulas of $\alpha$, $\Delta$, and (sub)formulas of $\alpha$.

\begin{definition}
  Let $\Delta\vdash\delta\in Emt(\alpha)$, and $\mathcal{O}$ be a total order on the subformulas of $alpha$. We define $Ant_{\mathcal{O}}(\Delta\vdash\delta)=b_{\mathcal{O}}(\Delta)$.
\end{definition}

In what follows, consider a  {\bf pre} r-DagProof $\mathcal{C}$ = $\langle V , E_{d}, E_{A}, r, \ell, L,\rho,\delta,\mathcal{O}_{\alpha}\rangle$ of $\alpha$. A node $v\in V$ is called a deductive leaf, iff, it has no incoming deductive edge, otherwise we call it as deductive (internal) node. The nodes of $\mathcal{C}$ that have mode than one different Deductive Edges outcoming from it are called divergent node.  

\begin{definition}[Local Entailment]\label{def:LocalEntailment}
  Given a {\bf pre} r-DagProof $\mathcal{C}$ = $\langle V , E_{d}, E_{A}, r, \ell, L,\rho,\delta,\mathcal{O}_{\alpha}\rangle$ of $\alpha$, we define the  mapping $M_{\vdash}^{\mathcal{C}}:V\times \mathbb{N}\longrightarrow Emt(\alpha)\cup\{\symbvdash\}$, for each $v\in V$ recursively as follows:
  \begin{description}
  \item [{\color{blue} Deductive Leaf, no Ancestrality}] If  $v\in V$ and, there is no $u\in V$, such that $\langle u,v\rangle\in E_{d}$ and, there is no $w\in V$, such that $\langle w,v\rangle\in E_{A}$ then $M_{\vdash}^{\mathcal{C}}(v,0)=\{\ell(v)\}\vdash \ell(v)$, and $M_{\vdash}^{\mathcal{C}}(v,j)=\symbvdash$, for $j\in\mathbb{N}$, $j\neq 0$, and;
  \item [{\color{blue} Deductive Leaf with Ancestrality}]If  $v\in V$ and, there is no $u\in V$, such that $\langle u,v\rangle\in E_{d}$ and, there is $w\in V$, such that $\langle w,v\rangle\in E_{A}$ then $M_{\vdash}^{\mathcal{C}}(v,i)=\{\ell(v)\}\vdash \ell(v)$, for each $i$, such that there is $\langle w,v\rangle\in E_{A}$ with $\delta(\langle w,v\rangle)=i$, and $M_{\vdash}^{\mathcal{C}}(v,j)=\symbvdash$, for every $j\in\mathbb{N}$, such that, there is no  $\langle w,v\rangle\in E_{A}$ with $\delta(\langle w,v\rangle)=j$,  and;
  \item [{\color{blue} Deductive Internal Node, no Ancestrality}] If $v\in V$ and, there is $u\in V$, such that $\langle u,v\rangle\in E_{d}$ and, there is no $w\in V$, such that $\langle w,v\rangle\in E_{A}$ then we have two cases:
    \begin{enumerate}
    \item There are only two $u_1,u_2\in V$, such that, $\langle u_i,v\rangle\in E_{d}$, for $i=1,2$, $\ell(u_1)=\delta_1$, $\ell(u_2)=\delta_1\imply \delta_2$ and, $\ell(v)=\delta_2$. Moreover, let  $I_i=\{j: M_{\vdash}^{\mathcal{C}}(u_i,j)\neq\symbvdash\}$. Thus, we have that:
      \begin{description}
      \item[If] $I_1=I_2$ {\bf then} for every $j\in I_1=I_2$, we have that:
        \[
        M_{\vdash}^{\mathcal{C}}(v,j)=\left\{\begin{array}{ll} \Delta_1\cup\Delta_2\vdash\delta_2  & \mbox{if $M_{\vdash}^{\mathcal{C}}(u_i,j)=\Delta_i\vdash\ell(u_i)$} \\ \symbvdash & \mbox{otherwise}\end{array}\right.
        \]
        , and every $j\in \mathbb{N}-(I_1\cup I_2)$, $M_{\vdash}^{\mathcal{C}}(v,j)=\symbvdash$. Moreover, if $L(\langle u_i,v\rangle)\downarrow$, $i=1,2$, then $L(\langle u_i,v\rangle)=Ant_{\mathcal{O}_{\alpha}}(M_{\vdash}^{\mathcal{C}}(u_i))$,  and; 
      \item[If] $I_1\neq I_2$ {\bf then} for every $j\in\mathbb{N}$, $M_{\vdash}^{\mathcal{C}}(v,j)=\symbvdash$
      \end{description}

    \item There is only one $u\in V$, such that, $\langle u,v\rangle\in E_{d}$, $\ell(u)=\delta_2$ and $\ell(v)=\delta_1\imply \delta_2$. Moreover, let  $I=\{j: M_{\vdash}^{\mathcal{C}}(u,j)\neq\symbvdash\}$. If $I\neq\emptyset$, and hence, for every $j\in I$, we have that:
        \[
        M_{\vdash}^{\mathcal{C}}(v,j)=\left\{\begin{array}{ll} \Delta-\{\delta_1\}\vdash\delta_1\imply\delta_2  & \mbox{if $M_{\vdash}^{\mathcal{C}}(u,j)=\Delta\vdash\ell(u)$ and $\ell(u)=\delta_2$} \\ \symbvdash & \mbox{otherwise}\end{array}\right.
        \]
        , and every $j\in \mathbb{N}-I$, $M_{\vdash}^{\mathcal{C}}(v,j)=\symbvdash$, and; For every $j\in\mathbb{N}$, $M_{\vdash}^{\mathcal{C}}(v,j)=\symbvdash$. Moreover, if $L(\langle u,v\rangle)\downarrow$ then $L(\langle u,v\rangle)=Ant_{\mathcal{O}_{\alpha}}^{\mathcal{C}}(M_{\vdash}(u))$. If $I=\emptyset$ then $M_{\vdash}^{\mathcal{C}}(v,j)=\symbvdash$, for every $j\in\mathbb{N}$.
      \end{enumerate}
  \item [{\color{blue} Deductive Internal Node with Ancestrality}] If $v\in V$ and, there is $u\in V$, such that $\langle u,v\rangle\in E_{d}$ and, there is  $w\in V$, such that $\langle w,v\rangle\in E_{A}$ then we have two cases:
    \begin{enumerate}
    \item There are only two $u_1,u_2\in V$, such that, $\langle u_i,v\rangle\in E_{d}$, for $i=1,2$, $\ell(u_1)=\delta_1$, $\ell(u_2)=\delta_1\imply \delta_2$ and, $\ell(v)=\delta_2$. Moreover, let  $I_i=\{j: M_{\vdash}^{\mathcal{C}}(u_i,j)\neq\symbvdash\}$. Thus, we have that:
      \begin{description}
      \item[If] $I_1=I_2$ {\bf then} for every $j\in I_1=I_2$, we have that:
        \[
        M_{\vdash}^{\mathcal{C}}(v,j)=\left\{\begin{array}{ll} \Delta_1\cup\Delta_2\vdash\delta_2  & \mbox{if $M_{\vdash}^{\mathcal{C}}(u_i,j)=\Delta_i\vdash\ell(u_i)$} \\ \symbvdash & \mbox{otherwise}\end{array}\right.
        \]
        , and every $j\in \mathbb{N}-(I_1\cup I_2)$, $M_{\vdash}^{\mathcal{C}}(v,j)=\symbvdash$, and;
      \item[If] $I_1\neq I_2$ {\bf then} for every $j\in\mathbb{N}$, $M_{\vdash}^{\mathcal{C}}(v,j)=\symbvdash$
      \end{description}

    \item There is only one $u\in V$, such that, $\langle u,v\rangle\in E_{d}$, $\ell(u)=\delta_2$ and $\ell(v)=\delta_1\imply \delta_2$. Moreover, let  $I=\{j: M_{\vdash}^{\mathcal{C}}(u,j)\neq\symbvdash\}$. If $I\neq\emptyset$, and hence, for every $j\in I$, we have that:
        \[
        M_{\vdash}^{\mathcal{C}}(v,j)=\left\{\begin{array}{ll} \Delta-\{\delta_1\}\vdash\delta_1\imply\delta_2  & \mbox{if $M_{\vdash}^{\mathcal{C}}(u,j)=\Delta\vdash\ell(u)$ and $\ell(u)=\delta_2$} \\ \symbvdash & \mbox{otherwise}\end{array}\right.
        \]
        , and every $j\in \mathbb{N}-I$, $M_{\vdash}^{\mathcal{C}}(v,j)=\symbvdash$, and; For every $j\in\mathbb{N}$, $M_{\vdash}^{\mathcal{C}}(v,j)=\symbvdash$. If $I=\emptyset$ then $M_{\vdash}^{\mathcal{C}}(v,j)=\symbvdash$, for every $j\in\mathbb{N}$.
    \end{enumerate}
    \item [{\color{blue} Divergent Deductive Internal Node }] In this case $v$ should not be the target of an ancestrality edge, otherwise $\mathcal{C}$ is not a valid {\bf pre} rDagProof and $M_{\vdash}^{\mathcal{C}}(v,j)=\symbvdash$, for every $j\in \mathbb{N}$. Moreover, the set $S=\{\langle w:\mbox{$\langle v,w\rangle\in E_{d}$}\}$ has at least two nodes\footnote{This is just the case for divergent deductive nodes}. We have two cases to consider:
    \begin{enumerate}
    \item There are only two $u_1,u_2\in V$, such that, $\langle u_i,v\rangle\in E_{d}$, for $i=1,2$, $\ell(u_1)=\delta_1$, $\ell(u_2)=\delta_1\imply \delta_2$ and, $\ell(v)=\delta_2$. Moreover, let  $I_i=\{j: M_{\vdash}^{\mathcal{C}}(u_i,j)\neq\symbvdash\}$ and $T=\{\rho(\langle v,w\rangle):w\in S\}$. Thus, we have that:
      \begin{description}
      \item[If] $I_1=I_2=I$ {\bf then} for every $j\in I_1=I_2=T$, we have that:
        \[
        M_{\vdash}^{\mathcal{C}}(v,j)=\left\{\begin{array}{ll} \Delta_1\cup\Delta_2\vdash\delta_2  & \mbox{if $M_{\vdash}^{\mathcal{C}}(u_i,j)=\Delta_i\vdash\ell(u_i)$} \\ \symbvdash & \mbox{otherwise}\end{array}\right.
        \]
        , and every $j\in \mathbb{N}-(I_1\cup I_2)$, $M_{\vdash}^{\mathcal{C}}(v,j)=\symbvdash$, and;
      \item[If] $I_1\neq I_2$ or $I_i\neq T$, $i=1$ or $i=2$ {\bf then} for every $j\in\mathbb{N}$, $M_{\vdash}^{\mathcal{C}}(v,j)=\symbvdash$
      \end{description}
    \item There is only one $u\in V$, such that, $\langle u,v\rangle\in E_{d}$, $\ell(u)=\delta_2$ and $\ell(v)=\delta_1\imply \delta_2$. Moreover, let  $I=\{j: M_{\vdash}^{\mathcal{C}}(u,j)\neq\symbvdash\}$, $T=\{\rho(\langle v,w\rangle):w\in S\}$. If $T=I\neq\emptyset$ then for every $j\in I$, we have that:
        \[
        M_{\vdash}^{\mathcal{C}}(v,j)=\left\{\begin{array}{ll} \Delta-\{\delta_1\}\vdash\delta_1\imply\delta_2  & \mbox{if $M_{\vdash}^{\mathcal{C}}(u,j)=\Delta\vdash\ell(u)$ and $\ell(u)=\delta_2$} \\ \symbvdash & \mbox{otherwise}\end{array}\right.
        \]
        , and every $j\in \mathbb{N}-I$, $M_{\vdash}^{\mathcal{C}}(v,j)=\symbvdash$, and; For every $j\in\mathbb{N}$, $M_{\vdash}^{\mathcal{C}}(v,j)=\symbvdash$. Moreover, if $L(\langle u,v\rangle)\downarrow$ then $L(\langle u,v\rangle)=Ant_{\mathcal{O}_{\alpha}}(M_{\vdash}(u))$. If $I=\emptyset$ then $M_{\vdash}^{\mathcal{C}}(v,j)=\symbvdash$, for every $j\in\mathbb{N}$.
    \end{enumerate}
  \item[{\color{blue} Target of Divergent Deductive Internal Node }] In this case $v$ is such that there is a divergent deductive node $u$ with $\langle u,v\rangle\in E_{d}$ and $\rho( \langle u,v\rangle)$ is defined. Thus, we have two cases:
    \begin{description}
    \item[$v$ is not target of an ancestrality edge] $M_{\vdash}^{\mathcal{C}}(v,0)=M_{\vdash}^{\mathcal{C}}(u,\rho(( \langle u,v\rangle)$ and $M_{\vdash}^{\mathcal{C}}(v,j)=\symbvdash$, for every $j\neq 0$;
    \item[$v$ is target of an ancestrality edge] There is $\langle w,v\rangle\in E_A$. Hence we set
      $M_{\vdash}^{\mathcal{C}}(v,\delta(\langle w,v\rangle))=M_{\vdash}^{\mathcal{C}}(u,\rho( \langle u,v\rangle)$ and $M_{\vdash}^{\mathcal{C}}(v,j)=\symbvdash$, for every $j\neq \delta(\langle w,v\rangle)$;
      \end{description}

  \end{description}

\end{definition}

{\bf Obs:} In what follows, sometimes we  use the notation $M_{\vdash}(v,i)$ instead of $M_{\vdash}^{\mathcal{C}}(v,i)$, whenever $\mathcal{C}$ can be easily infered from the context.  

A full subgraph of a graph $A=\langle V_{A}, E_{A}\rangle$ is  any graph $B=\langle V_{A}, E\rangle$, with $E\subseteq E_{A}$. The labelled version of full subgraph keeps all labels that label the elements of $V_{A}$ and $E$ with the same value they have in $A$.

\begin{definition}[Underlying-deductive-structure of an rDagProof]
  Given a {\bf pre} rDagProof $\mathcal{C}$ = $\langle V , E_{d}, E_{A}, r, \ell, L,\rho,\delta,\mathcal{O}_{\alpha}\rangle$. The full sub-graph of $\mathcal{C}$ when we consider all and  only all of the edges in $E_{d}$ is denoted by $\mathcal{C}|_{E_d}$. It is called the underlying deductive strutucture of $\mathcal{C}$.
  \end{definition}

\begin{definition}[maximal-path]
  Given a {\bf pre} rDagProof $\mathcal{C}$ = $\langle V , E_{d}, E_{A}, r, \ell, L,\rho,\delta,\mathcal{O}_{\alpha}\rangle$ and $v_k,\ldots,v_1$, $v_i\in V$, such that, $\langle v_{i+1},v_{i}\rangle\in E_{d}$, $i=1,\ldots,k-1$. We say that $v_1,\ldots,v_k$ is a maximal path in $\mathcal{C}|_{E_{d}}$, if and only if, $v_k$ is a top-formula. We say that the maximal-path starts in $v_1$.
\end{definition}

The lenght of the sequence of nodes $v_k,\ldots,v_1$, is $k$.

\begin{definition}[reverse-deductive height]\label{def:rdh}
  Given {\bf pre} r-DagProof $\mathcal{C}$ = $\langle V , E_{d}, E_{A}, r, \ell, L,\rho,\delta,\mathcal{O}_{\alpha}\rangle$ of $\alpha$. Let $\mathcal{C}|_{E_d}$ the sub-graph of $\mathcal{C}$ restricted to deductive edges only ($E_{d}$). The {\bf reverse deductive height} of a node $v\in V$ in $\mathcal{C}|_{E_d}$, named $rdh(v)$ is defined as:
  \[
  rdh(v)=max\{k: \mbox{$v_1,\ldots,v_k$ is a maximal-path with $v_1=v$}\}
  \]
  \end{definition}

The above definition~\ref{def:LocalEntailment} of $M_{\vdash}^{\mathcal{C}}$ is recursive. Given a {\bf pre} rDagProof $\mathcal{C}$, we can assign to each node $v\in V_{\mathcal{C}}$ the value of $rhd(v)$. By the recursion theorem, from set theory, we have that the function $M_{\vdash}^{\mathcal{C}}$ is well-defined for every node $v$ and natural number $i$  in  any rDagProof $\mathcal{C}$. Acoording to this assignment of values, we have that the value assigned to the root of $\mathcal{C}|_{E_d}$ is $h(\mathcal{C})$, the value of all of its leaves is 0 (zero) and the value of the children of any node is smaller than the value of their respective parent. Thus, $M_{\vdash}^{\mathcal{C}}$ is well-defined and unique for any $\mathcal{C}$.

We note the following well-known facts, regarding usual Kripke semantics for \mil \cite{MinimalCompleteness}, denoted by $\models_{\mil}$.

\begin{fact}[Soundness of ND \mil rules]\label{fact:NDsoundness}
  Consider $\Delta_1$ and $\Delta_2$ two sets of \mil formulas, and, $\delta_1$ and $\delta_2$ two \mil formulas. We have that:
  \begin{enumerate}
\item If $\Delta_1\models_{\mil} \delta_1$ and $\Delta_2\models_{\mil} \delta_1\imply\delta_2$ then $\Delta_1\cup \Delta_2\models_{\mil} \delta_2$, and;
\item\label{descarte} If $\Delta_1\models_{\mil}\delta_2$ then $\Delta_1-\{\delta_1\}\models_{\mil} \delta_1\imply\delta_2$
  \end{enumerate}
  Note that the above items are just the $\imply$-Intro and $\imply$-Elim rules. Concerning Item~\ref{descarte}, we have both cases $\delta_1\in\Delta_1$ and $\delta_1\not\in\Delta_1$, as it is the case with the $\imply$-Intro rule.
\end{fact}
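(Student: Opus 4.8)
The plan is to read both items as the standard Kripke-semantics soundness statements for the two Natural Deduction rules and to discharge them by a direct model-theoretic argument, since the cited reference \cite{MinimalCompleteness} fixes the semantics $\models_{\mil}$ we work with. First I would make the semantic set-up explicit: a Kripke model is a preorder $\langle W,\leq\rangle$ together with a persistent atomic forcing relation, the clause for implication reads $w\Vdash\delta_1\imply\delta_2$ iff every $w'\geq w$ with $w'\Vdash\delta_1$ satisfies $w'\Vdash\delta_2$, and $w\Vdash\Delta$ abbreviates $w\Vdash\gamma$ for all $\gamma\in\Delta$. The notation $\Delta\models_{\mil}\delta$ then means that in every model every world forcing all of $\Delta$ forces $\delta$. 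The only auxiliary fact I need is the persistence (monotonicity) lemma: if $w\Vdash\varphi$ and $w\leq w'$ then $w'\Vdash\varphi$, proved by a routine induction on $\varphi$ whose implication case is immediate from transitivity of $\leq$.

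For Item~1 ($\imply$-Elim) I would fix an arbitrary model and a world $w$ with $w\Vdash\Delta_1\cup\Delta_2$, so that $w\Vdash\Delta_1$ and $w\Vdash\Delta_2$. The hypothesis $\Delta_1\models_{\mil}\delta_1$ yields $w\Vdash\delta_1$, and $\Delta_2\models_{\mil}\delta_1\imply\delta_2$ yields $w\Vdash\delta_1\imply\delta_2$. Applying the implication clause at the world $w$ itself, using reflexivity $w\leq w$, the premise $w\Vdash\delta_1$ gives $w\Vdash\delta_2$. Since the model and $w$ were arbitrary, $\Delta_1\cup\Delta_2\models_{\mil}\delta_2$. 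Note that this item needs no persistence, only the reflexivity built into the preorder.

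For Item~2 ($\imply$-Intro) I would fix a model and a world $w$ with $w\Vdash\Delta_1-\{\delta_1\}$ and verify $w\Vdash\delta_1\imply\delta_2$ directly against the implication clause: take any $w'\geq w$ with $w'\Vdash\delta_1$. By the persistence lemma applied to each formula of $\Delta_1-\{\delta_1\}$ we obtain $w'\Vdash\Delta_1-\{\delta_1\}$, and together with $w'\Vdash\delta_1$ this gives $w'\Vdash(\Delta_1-\{\delta_1\})\cup\{\delta_1\}$. Here I would record the two cases the statement already flags: if $\delta_1\in\Delta_1$ then this set equals $\Delta_1$, while if $\delta_1\notin\Delta_1$ then $\Delta_1-\{\delta_1\}=\Delta_1$ and $w'\Vdash\Delta_1$ holds a fortiori; in either case $w'\Vdash\Delta_1$. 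The hypothesis $\Delta_1\models_{\mil}\delta_2$ then gives $w'\Vdash\delta_2$, which is exactly what the clause demands, so $w\Vdash\delta_1\imply\delta_2$.

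There is no deep obstacle here; the content is entirely the persistence lemma together with the careful case split on membership of $\delta_1$. The one place to be attentive is Item~2, where persistence is genuinely used to transport $\Delta_1-\{\delta_1\}$ from $w$ up to the arbitrary successor $w'$ at which $\delta_1$ becomes available --- this is precisely the intuitionistic/minimal subtlety that distinguishes the argument from the classical one and the reason the implication clause quantifies over all worlds $\geq w$. Item~1, by contrast, is purely local to $w$ and requires only reflexivity.
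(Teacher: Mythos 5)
Your proof is correct. The paper offers no proof of this fact at all---it is stated as a ``well-known fact'' about the usual Kripke semantics for \mil with a citation to Segerberg---and your argument (persistence lemma, reflexivity for the $\imply$-Elim case, the upward transport of $\Delta_1-\{\delta_1\}$ plus the membership case split for the $\imply$-Intro case) is precisely the standard verification that the paper implicitly invokes.
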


In what follows we omit the symbol $\mil$ in the notation $\models_{\mil}$ whenever its meaning as the {\bf minimal entailment} is made clear.  

\begin{definition}[rDagProof correctness]\label{def:SoundrDag}
  Let $\mathcal{C}$ = $\langle V , E_{d}, E_{A}, r, l, L,\rho,\delta,\mathcal{O}_{\alpha}\rangle$ be a {\bf pre} r-DagProof. We say that $\mathcal{C}$ is correct iff $M_{\vdash}(r,0)\neq\symbvdash$ and, for each $i\neq 0$, $i\in\mathbb{N}$, $M_{\vdash}(r,i)=\symbvdash$.
\end{definition}

When a {\bf pre} rDagProof $\mathcal{C}$ is correct we simply call it rDagProof. Given a correct rDagProof $\mathcal{C}$ , such that, $M_{\vdash}^{\mathcal{C}}(r_{\mathcal{C}},0)=\Delta\vdash\beta$, we have that $\mathcal{C}$ is a certificate that $\beta$ as logical consequence of $\Delta$ in \mil. This is what we  state in Lemma~\ref{lemma:soundness} below.

\begin{lemma}[Local entailment sounds]\label{lemma:soundness}
  Let $\mathcal{C}$ = $\langle V , E_{d}, E_{A}, r, l, L,\rho,\delta,\mathcal{O}_{\alpha}\rangle$ be a correct rDagProof of $\alpha$. Thus, for every $v\in V$, for every $i\in \mathbb{N}$, such that $M_{\vdash}^{\mathcal{C}}(v,i)\neq\vdash$, then, if $M_{\vdash}^{\mathcal{C}}(v,i)=\Delta\vdash\ell(v)$, $\ell(v)=\beta$, then $\Delta\models_{\mil}\beta$. 
\end{lemma}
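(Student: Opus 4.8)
The plan is to prove the statement by well-founded induction on the reverse-deductive height $rdh(v)$ of Definition~\ref{def:rdh}, mirroring the clause-by-clause structure of Definition~\ref{def:LocalEntailment}. This induction is legitimate because the defining recursion of $M_{\vdash}^{\mathcal{C}}$ reads the value at $v$ off the values at nodes $u$ with $\langle u,v\rangle\in E_{d}$, and every such $u$ lies strictly above $v$ in $\mathcal{C}|_{E_{d}}$, so $rdh(u)<rdh(v)$. I only need to argue about indices $i$ with $M_{\vdash}^{\mathcal{C}}(v,i)\neq\symbvdash$; for the rest there is nothing to prove. A convenient observation to record first is that each clause that outputs a genuine entailment does so precisely when the premise values it consumes are themselves genuine entailments at the relevant index, so that the induction hypothesis will always be available exactly where it is needed.

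For the base case, $rdh(v)=0$ means $v$ is a deductive leaf, and both leaf clauses set $M_{\vdash}^{\mathcal{C}}(v,i)=\{\ell(v)\}\vdash\ell(v)$; reflexivity of $\models_{\mil}$ gives $\{\ell(v)\}\models_{\mil}\ell(v)$. The inductive step has two genuinely logical shapes. For a binary internal node, premises $u_1,u_2$ satisfy $\ell(u_1)=\delta_1$, $\ell(u_2)=\delta_1\imply\delta_2$, $\ell(v)=\delta_2$, and the emitted value $\Delta_1\cup\Delta_2\vdash\delta_2$ occurs only when $M_{\vdash}^{\mathcal{C}}(u_1,i)=\Delta_1\vdash\delta_1$ and $M_{\vdash}^{\mathcal{C}}(u_2,i)=\Delta_2\vdash(\delta_1\imply\delta_2)$; the induction hypothesis then gives $\Delta_1\models_{\mil}\delta_1$ and $\Delta_2\models_{\mil}\delta_1\imply\delta_2$, and item~1 of Fact~\ref{fact:NDsoundness} yields $\Delta_1\cup\Delta_2\models_{\mil}\delta_2$. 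For a unary internal node, a single premise $u$ has $\ell(u)=\delta_2$, $\ell(v)=\delta_1\imply\delta_2$, and the emitted value $\Delta-\{\delta_1\}\vdash(\delta_1\imply\delta_2)$ occurs only when $M_{\vdash}^{\mathcal{C}}(u,i)=\Delta\vdash\delta_2$; the induction hypothesis gives $\Delta\models_{\mil}\delta_2$ and item~2 of Fact~\ref{fact:NDsoundness} delivers $\Delta-\{\delta_1\}\models_{\mil}\delta_1\imply\delta_2$. The ``with Ancestrality'' variants produce entailment values of identical shape, so the same two appeals to Fact~\ref{fact:NDsoundness} cover them without change.

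The remaining clauses add no new semantic content. The two Divergent-Deductive-Internal-Node cases emit exactly the values $\Delta_1\cup\Delta_2\vdash\delta_2$ and $\Delta-\{\delta_1\}\vdash(\delta_1\imply\delta_2)$, the extra conditions on $T=\{\rho(\langle v,w\rangle):w\in S\}$ serving only to restrict which indices carry a genuine entailment, not its form; hence they reduce to the same two applications of Fact~\ref{fact:NDsoundness}. The Target-of-Divergent clause sets $M_{\vdash}^{\mathcal{C}}(v,\cdot)=M_{\vdash}^{\mathcal{C}}(u,\rho(\langle u,v\rangle))$ for a divergent source $u$ with $\langle u,v\rangle\in E_{d}$, hence $rdh(u)<rdh(v)$; since the lemma's conclusion is conditional on the value at $v$ being of the form $\Delta\vdash\ell(v)$, and this value is copied verbatim from $u$, the induction hypothesis applied to $u$ transfers $\Delta\models_{\mil}\ell(v)$ directly. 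This exhausts Definition~\ref{def:LocalEntailment}.

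The step I expect to be the main obstacle is not any individual semantic inference --- each reduces to reflexivity, a copy, or one of the two closure properties of $\models_{\mil}$ --- but the index bookkeeping across the detours introduced by the collapses: one must verify, in every clause, that the environment index $i$ at which $v$ emits a genuine entailment is an index at which the consumed premises also hold genuine entailments (never $\symbvdash$), so that the induction hypothesis is applicable at that very index. Once this index-alignment is checked clause by clause, the induction closes. I would also remark that correctness of $\mathcal{C}$ in the sense of Definition~\ref{def:SoundrDag} is not actually used for this pointwise statement; it is needed only to guarantee that the root carries a single meaningful final entailment, whereas the present lemma holds at every node of any {\bf pre} r-DagProof on which $M_{\vdash}^{\mathcal{C}}$ is well-defined.
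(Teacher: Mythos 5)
Your proposal is correct and takes essentially the same approach as the paper: the paper's entire proof is the one-liner ``By induction on the definition of $M_{\vdash}$ \ldots and using Lemma~\ref{fact:NDsoundness}'', and your clause-by-clause induction on $rdh(v)$, with item~1 of Fact~\ref{fact:NDsoundness} for the binary ($\imply$-elimination-shaped) clauses, item~2 for the unary ($\imply$-introduction-shaped) clauses, reflexivity at leaves, and verbatim transfer at targets of divergent nodes, is exactly that argument written out in full. Your closing observation that correctness of $\mathcal{C}$ is never invoked, so the statement holds pointwise for any {\bf pre} r-DagProof on which $M_{\vdash}^{\mathcal{C}}$ is defined, is also accurate.
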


\begin{proof}
  By induction on the definition of $M_{\vdash}$ definition and using Lemma~\ref{fact:NDsoundness}.
\end{proof}

\begin{corollary}\label{coro:CoroLemmaSoundness}
  If in the stating of Lemma~\ref{lemma:soundness} above, we consider that:
\begin{itemize}
\item There is a formula $\delta$, subformula of $\alpha$, such that,  it is top-formula in $\mathcal{C}$ and,  there is no deductive path from this top-formula to the root $r$ of $\mathcal{C}$ that applies an $\imply$ introduction rule having $\delta$ as the antecedent of the formula that it is the conclusion of this application.
\end{itemize}
Then $M_{\vdash}^{\mathcal{C}}(v,i)=\Delta\vdash\ell(v)$, $\ell(v)=\beta$, with $\delta\in\Delta$.
\end{corollary}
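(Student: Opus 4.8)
The plan is to prove the corollary by strengthening Lemma~\ref{lemma:soundness} into a propagation statement that tracks the single formula $\delta$ through the antecedents computed by $M_{\vdash}^{\mathcal{C}}$. The central observation is that, reading off the recursive clauses of Definition~\ref{def:LocalEntailment}, the antecedent part $\Delta$ of a value $M_{\vdash}^{\mathcal{C}}(v,i)=\Delta\vdash\ell(v)$ is built from the antecedents of the nodes $u$ with $\langle u,v\rangle\in E_d$ in exactly one of three monotone ways: at a two-premise ($\imply$-Elim) node it is the union $\Delta_1\cup\Delta_2$; at a target of a divergent node, and across the environment transfer attached to ancestrality, it is simply copied; and only at a one-premise ($\imply$-Intro) node whose label is $\delta_1\imply\delta_2$ is anything removed, namely the single formula $\delta_1$. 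Hence the antecedent can only ever lose the formula being discharged, and in every other clause each formula present upstream remains present downstream.

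First I would fix the top-formula $u_0$ with $\ell(u_0)=\delta$. By the two Deductive-Leaf clauses of Definition~\ref{def:LocalEntailment}, there is at least one environment index $i_0$ with $M_{\vdash}^{\mathcal{C}}(u_0,i_0)=\{\delta\}\vdash\delta$, so $\delta$ lies in the antecedent at the leaf. I would then prove, by induction on the reverse-deductive height $rdh$ of Definition~\ref{def:rdh} (equivalently, along the recursive definition of $M_{\vdash}^{\mathcal{C}}$, mirroring the induction used for Lemma~\ref{lemma:soundness}), the following invariant: for every node $v$ and index $i$ such that $M_{\vdash}^{\mathcal{C}}(v,i)=\Delta\vdash\ell(v)\neq\symbvdash$ and such that $\langle v,i\rangle$ is reached from $u_0$ by a deductive path none of whose $\imply$-Intro applications discharges $\delta$, one has $\delta\in\Delta$. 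The inductive step is a direct case analysis over the clauses: in the $\imply$-Elim, divergent, ancestrality, and divergent-target cases the antecedent is a union or a copy, so the induction hypothesis immediately yields $\delta\in\Delta$; in the $\imply$-Intro case the antecedent becomes $\Delta'-\{\delta_1\}$, and since by assumption this step does not discharge $\delta$ we have $\delta\neq\delta_1$, so $\delta$ survives.

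To conclude I would specialise the invariant to the root. Because $\mathcal{C}$ is correct, $M_{\vdash}^{\mathcal{C}}(r,0)\neq\symbvdash$, so it has the form $\Delta\vdash\ell(r)$; by the Global condition there is a deductive path from $u_0$ down to $r$, and the corollary's hypothesis guarantees that no $\imply$-Intro on any such path discharges $\delta$. The invariant then gives $\delta\in\Delta$, which is the claim (and the same argument yields $\delta\in\Delta$ at any node $v$ reached from $u_0$ by an undischarging path, should the statement be read for a general $v$).

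The main obstacle I anticipate is the bookkeeping of the environment index $i$ across the divergent nodes and ancestrality edges that the collapse operation introduces: the index is not constant along a path but is shifted via $\rho$ and $\delta$ at the ``Target of Divergent Deductive Internal Node'' clause. I would handle this by carrying the index explicitly in the invariant, so that the path itself determines which $i$ is relevant at each node, and by checking that every such clause merely copies the entailment value from $M_{\vdash}^{\mathcal{C}}(u,\rho(\langle u,v\rangle))$ to a possibly different index at $v$, leaving the antecedent, and therefore the membership of $\delta$, unchanged. Once this is verified, the remainder is the routine monotonicity argument isolated in the first paragraph.
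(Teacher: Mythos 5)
Your proposal is correct and takes essentially the same route as the paper: the paper's (sketched) proof likewise extends the induction behind Lemma~\ref{lemma:soundness}, observing that $\delta$ enters the antecedent at the base step because it labels a top-formula, and is never removed afterwards since, by hypothesis, no $\imply$-Intro along any deductive path to $r$ discharges $\delta$. Your write-up merely makes explicit what the paper leaves implicit --- the restriction of the invariant to nodes reachable from the top-formula $u_0$ (without which the claim for arbitrary $v$ would fail) and the environment-index bookkeeping across divergent and ancestrality edges --- so it is a more careful rendering of the same argument.
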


\begin{proof}
  This corollary is a consequence of the proof of Lemma~\ref{lemma:soundness}. Its  proof is an extension of the induction proof of the lemma, by the inclusion the condition in the statement and verify that the formula $\delta$ is not removed during the evaluation of $M_{\vdash}^{\mathcal{C}}(v, i)$. Since it is a top formula, $\delta$ is included in the local entailment antecedent, in the basic step of the induction and, we do not remove it anymore.
\end{proof} 

\begin{theorem}[Completeness of rDagProofs]\label{completeness}
  For any \mil formula $\alpha$ and set of subformulas $\Delta$ of $\alpha$ and subformula $\beta$ of $\alpha$, we have that if $\Delta\models_{\vdash}\beta$ holds then there is a correct rDagProof $\mathcal{C}$, such that, $M_{\vdash}^{\mathcal{C}}(r_{\mathcal{C}},0)=\Delta^{\prime}\vdash\beta$, with $\Delta^{\prime}\subseteq\Delta$.
\end{theorem}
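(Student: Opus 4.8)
The plan is to reduce the completeness of rDagProofs to the already–known completeness of tree-like Natural Deduction for $\mil$, and then to observe that a normal, expanded ND tree is a particularly simple \emph{uncollapsed} pre r-DagProof on which the local-entailment map $M_{\vdash}^{\mathcal{C}}$ of Definition~\ref{def:LocalEntailment} merely recomputes, bottom-up, the dependency (open-assumption) set of each formula occurrence. Thus the theorem becomes a statement that the certificate-checking semantics agrees with ordinary ND on trees, for which no collapsing has yet taken place.

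First I would invoke completeness of ND for minimal implicational logic with respect to $\models_{\mil}$ (see \cite{MinimalCompleteness}): from $\Delta\models_{\mil}\beta$ one obtains an ND derivation of $\beta$ whose set of undischarged assumptions is some $\Delta'\subseteq\Delta$. Applying the Normalization theorem of Prawitz and then the expansion fact quoted just before Definition~\ref{def:E-mapped-ND}, I would convert this into a normal, expanded derivation $\Pi$ of $\beta$ from $\Delta'$, and equip it with the partial map $l$ so that $\langle\Pi,T_\alpha,l\rangle$ is an {\bf EmND} derivation. Here I use the fixed total order $\mathcal{O}_\alpha$ on $sub(\alpha)$; by the sub-formula principle (Corollary~\ref{coro:SubForProperty}) every occurrence in $\Pi$, including $\beta$ and the members of $\Delta'$, is a subformula of $\alpha$, so the bitstring representation of dependency sets over $\mathcal{O}_\alpha$ is well defined.

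Next I would read $\Pi$ off as a pre r-DagProof $\mathcal{C}$ in the sense of Definition~\ref{def:Prer-DagProof}: let $E_d$ be the premise-to-conclusion edges of the proof tree, set $E_A=\emptyset$, leave $\rho$ undefined everywhere and $\delta$ empty, let $\ell$ be the formula labelling, and let $L$ assign to each deductive edge the bitstring of the dependency set of its source. Because $\Pi$ is a tree, the {\bf (Global)} condition is immediate (the inverse path from $r$ to any node is unique, so all such paths trivially agree in length), the two {\bf $E_d$-$(l/L)$ consistency} conditions are exactly the greedy-discharge bookkeeping ($\imply$-Elim unions the two antecedent bitstrings, $\imply$-Intro deletes the discharged $\vec b_{\varphi_2}$), and all {\bf $E_A$} conditions hold vacuously. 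Having established that $\mathcal{C}$ is a legitimate pre r-DagProof with $\ell(r)=\beta$, the heart of the argument is to show it is \emph{correct} in the sense of Definition~\ref{def:SoundrDag}. I would prove, by induction on the reverse-deductive height $rdh(v)$ (equivalently, structural induction up the tree), the invariant that $M_{\vdash}^{\mathcal{C}}(v,0)=\Delta_v\vdash\ell(v)$, where $\Delta_v$ is the dependency set of the occurrence $v$, and $M_{\vdash}^{\mathcal{C}}(v,j)=\symbvdash$ for every $j\neq 0$. Since $\mathcal{C}$ has no ancestrality edges and no divergent nodes, only the two ``no-Ancestrality'' clauses of Definition~\ref{def:LocalEntailment} ever fire, so the index never leaves $0$; the $\imply$-Elim clause reproduces $\Delta_{u_1}\cup\Delta_{u_2}$ and the $\imply$-Intro clause reproduces $\Delta_u-\{\delta_1\}$, which are precisely the ND dependency-set updates. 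Evaluating the invariant at $r$ gives $M_{\vdash}^{\mathcal{C}}(r,0)=\Delta'\vdash\beta$ and $M_{\vdash}^{\mathcal{C}}(r,i)=\symbvdash$ for $i\neq 0$, i.e.\ $\mathcal{C}$ is a correct rDagProof certifying $\Delta'\vdash\beta$ with $\Delta'\subseteq\Delta$.

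I expect the main obstacle to be precisely this index-and-dependency bookkeeping in the induction, rather than the appeal to ND completeness. One must verify that the environment-index machinery of Definition~\ref{def:LocalEntailment}, designed to track the downward detours created by the collapse operation, genuinely degenerates to the single index $0$ on an uncollapsed tree, and that the set $\Delta_v$ maintained by $M_{\vdash}^{\mathcal{C}}$ agrees with the actual set of undischarged assumptions at each node even in the presence of vacuous (greedy) $\imply$-Intro applications, where $\delta_1\notin\Delta_u$ and the deletion is harmless. This is the converse bookkeeping to Lemma~\ref{lemma:soundness} and is conceptually routine, but it is where all the case analysis resides; care is also needed to confirm that $L$ as defined does satisfy the $E_d$-consistency conditions so that $\mathcal{C}$ qualifies as a pre r-DagProof in the first place.
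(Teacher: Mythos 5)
Your proof is correct and follows the same overall strategy as the paper's own proof: invoke completeness of Natural Deduction for \mil with respect to Kripke semantics, then observe that the resulting ND tree is itself a (pre) rDagProof whose root entailment is $\Delta^{\prime}\vdash\beta$. Where you diverge is in execution, and your version is more careful on two points. First, the paper concludes by citing Corollary~\ref{coro:CoroLemmaSoundness} (undischarged top-formulas survive into the antecedent), which by itself only yields that $\Delta^{\prime}$ is \emph{contained in} the antecedent of $M_{\vdash}^{\mathcal{C}}(r,0)$ and says nothing about the indices $i\neq 0$; you instead prove the full invariant $M_{\vdash}^{\mathcal{C}}(v,0)=\Delta_v\vdash\ell(v)$ and $M_{\vdash}^{\mathcal{C}}(v,j)=\symbvdash$ for $j\neq 0$ by induction on $rdh(v)$, which simultaneously pins down the antecedent exactly and establishes correctness in the sense of Definition~\ref{def:SoundrDag} --- something the paper's proof leaves implicit. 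Second, your insertion of normalization (and expansion) before the embedding is not cosmetic: Definition~\ref{def:Prer-DagProof} requires $\ell$ to map into $sub(\alpha)$ and $L$ to produce bitstrings over $\mathcal{O}_{\alpha}$, and only for a \emph{normal} derivation does the sub-formula principle (Corollary~\ref{coro:SubForProperty}) guarantee that every formula occurrence is a subformula of $\alpha$; the paper's proof, which takes an arbitrary derivation $\Pi$ and ``takes it as an rDagProof'', silently skips this requirement. The cost of your route is only the routine index-and-dependency bookkeeping you flag; the benefit is a self-contained argument that neither leans on the soundness lemma's corollary nor leaves the pre-r-DagProof conditions unverified.
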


\begin{proof}
The system of Natural Deduction for \mil is sound and complete regarded the usual Kripke semantics for \mil \cite{MinimalCompleteness}. Since $ND_{\mil}$ proofs are particular cases of rDagProofs we have completeness of rDagProofs. Thus, if $\Delta\models\beta$ then there is a derivation $\Pi$ having $\beta$ as conclusion and a set $\Delta^{\prime}$ of open assumptions, with $\Delta^{\prime}\subseteq\Delta$. Taking $\Pi$ as a rDagProof and by Corollary~\ref{coro:CoroLemmaSoundness} we have that $M_{\vdash}^{\mathcal{C}}(r_{\mathcal{C}},0)=\Delta^{\prime}\vdash\beta$.
\end{proof}

\begin{theorem}[Soundness of rDagProofs]
  If $\mathcal{C}$ is a correct rDagProof and $M_{\vdash}^{\mathcal{C}}(r,0)=\Delta\vdash\beta$ then $\Delta\models_{\vdash}\beta$.
  \end{theorem}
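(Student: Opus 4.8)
The plan is to recognise that this theorem is nothing more than the instantiation of Lemma~\ref{lemma:soundness} at the root node $r$ and environment index $0$, so that all the real inductive work has already been discharged in that lemma. First I would unpack the hypothesis that $\mathcal{C}$ is a \emph{correct} rDagProof: by Definition~\ref{def:SoundrDag} this means precisely that $M_{\vdash}^{\mathcal{C}}(r,0)\neq\symbvdash$ (and that $M_{\vdash}^{\mathcal{C}}(r,i)=\symbvdash$ for all $i\neq 0$, which we will not need). Combining $M_{\vdash}^{\mathcal{C}}(r,0)\neq\symbvdash$ with the assumed equality $M_{\vdash}^{\mathcal{C}}(r,0)=\Delta\vdash\beta$ forces the second component of this entailment to be the label of $r$, i.e.\ $\ell(r)=\beta$, which is exactly the shape required to feed into the lemma.

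The single remaining step is then to apply Lemma~\ref{lemma:soundness} with $v:=r$ and $i:=0$. Since $M_{\vdash}^{\mathcal{C}}(r,0)$ is a genuine element of $Emt(\alpha)$ of the form $\Delta\vdash\ell(r)$ rather than the placeholder $\symbvdash$, the lemma yields directly $\Delta\models_{\mil}\beta$, which is the conclusion of the theorem (reading the notation $\models_{\vdash}$ of the statement as the minimal entailment $\models_{\mil}$, following the convention fixed just before Definition~\ref{def:SoundrDag}). No further case analysis on the structure of $\mathcal{C}$ is needed at this level, because the propagation of soundness across deductive, divergent, ancestrality, and target clauses was already handled node-by-node inside the lemma.

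I do not expect a genuine obstacle here: the theorem is immediate once Lemma~\ref{lemma:soundness} is available. The place where care would actually be required is one layer down, inside that lemma's proof, which proceeds by induction on the well-founded reverse-deductive-height ordering $rdh$ of Definition~\ref{def:rdh} and matches each clause of the recursive Definition~\ref{def:LocalEntailment} against the two items of Fact~\ref{fact:NDsoundness} (the $\imply$-Elim item and the $\imply$-Intro item). The only delicate bookkeeping in that induction is the index $i$ tracking the various downward detours introduced by the collapse operation, so that distinct environments are never conflated when a divergent node or a target of an ancestrality edge is reached; but at the root with $i=0$ this bookkeeping collapses trivially, and the theorem follows as a plain corollary.
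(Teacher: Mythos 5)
Your proposal is correct and follows exactly the paper's route: the paper proves this theorem in one line as an immediate consequence of Lemma~\ref{lemma:soundness}, which is precisely your instantiation at $v:=r$, $i:=0$ (with the observation that correctness and the recursive definition of $M_{\vdash}^{\mathcal{C}}$ force the consequent of $M_{\vdash}^{\mathcal{C}}(r,0)$ to be $\ell(r)=\beta$). Your additional remarks about where the real inductive work lives (inside the lemma, via the $rdh$ ordering and Fact~\ref{fact:NDsoundness}) accurately reflect the paper's division of labor.
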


\begin{proof}
This theorem is an immediate consequence of Lemma~\ref{lemma:soundness}
\end{proof}

\begin{corollary}
  If $\mathcal{C}$ is a correct rDagProof of $\alpha$ then $\alpha$ is a \mil tautology
  \end{corollary}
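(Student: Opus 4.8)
The plan is to read this off the Soundness theorem for rDagProofs, once the antecedent at the root is pinned down. First I would note that, since $\mathcal{C}$ is a \emph{correct} rDagProof, the correctness definition (Definition~\ref{def:SoundrDag}) gives $M_{\vdash}^{\mathcal{C}}(r,0)\neq\symbvdash$. By the shape of the Local Entailment map in Definition~\ref{def:LocalEntailment} — every value is either $\symbvdash$ or an entailment whose succedent is the label of the node — this forces $M_{\vdash}^{\mathcal{C}}(r,0)=\Delta\vdash\ell(r)$ for some $\Delta\subseteq sub(\alpha)$. Because $\mathcal{C}$ is an rDagProof \emph{of} $\alpha$, its root carries $\ell(r)=\alpha$, so $M_{\vdash}^{\mathcal{C}}(r,0)=\Delta\vdash\alpha$. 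Applying the Soundness theorem for rDagProofs verbatim then yields $\Delta\models_{\mil}\alpha$, and it remains only to show $\Delta=\emptyset$, since $\emptyset\models_{\mil}\alpha$ is exactly the statement that $\alpha$ is a \mil tautology.

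The one point that needs care — and the step I expect to be the real obstacle — is establishing $\Delta=\emptyset$. This is where the hypothesis that $\mathcal{C}$ is a \emph{proof} of $\alpha$ (all top-formulas discharged), rather than a mere derivation with open assumptions, enters. I would make it precise by tracking the antecedent component of $M_{\vdash}^{\mathcal{C}}$ along deductive paths: the deductive-leaf clauses seed the antecedent of a top-formula $v$ with $\{\ell(v)\}$, the $\imply$-Elim clauses only take unions of antecedents, and the single-premise $\imply$-Intro clause is the \emph{only} clause that removes a formula, deleting precisely the discharged antecedent $\delta_1$. An induction on $rdh(r)$ (Definition~\ref{def:rdh}), mirroring the induction already used for Lemma~\ref{lemma:soundness} and its Corollary~\ref{coro:CoroLemmaSoundness}, shows that a formula survives into the root antecedent exactly when some top-formula occurrence carrying it is never discharged by an $\imply$-Intro on its way down to $r$. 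Since $\mathcal{C}$ is a proof, no such undischarged occurrence exists, whence $\Delta=\emptyset$.

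Finally I would assemble the pieces: $M_{\vdash}^{\mathcal{C}}(r,0)=\emptyset\vdash\alpha$ together with Soundness gives $\emptyset\models_{\mil}\alpha$, i.e.\ $\alpha$ holds in every Kripke model for \mil, which is by definition what it means for $\alpha$ to be a tautology. The bulk of the work is therefore the antecedent bookkeeping of the middle step. I would also flag the alternative reading: if ``rDagProof \emph{of} $\alpha$'' is already taken to encode that the certificate is closed — all assumptions discharged, so that the root antecedent is empty by construction — then the middle step is immediate and the corollary collapses to a one-line consequence of the Soundness theorem.
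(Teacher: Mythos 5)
Your proposal is correct, and its core is the same one-line route the paper intends: the corollary sits immediately after the Soundness theorem and the paper gives it no proof at all, treating it as an instant consequence. What you add beyond the paper is the middle step ($\Delta=\emptyset$), and that addition is warranted rather than pedantic. Correctness as literally defined (Definition~\ref{def:SoundrDag}) only requires $M_{\vdash}(r,0)\neq\symbvdash$ and $M_{\vdash}(r,i)=\symbvdash$ for $i\neq 0$; it does not force the root antecedent to be empty, and under that literal reading the corollary is false: a single node labelled by an atom $q$ is a correct rDagProof of $q$ with $M_{\vdash}(r,0)=\{q\}\vdash q$, yet $q$ is not a tautology. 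This is exactly the ``DERIVATION'' (as opposed to ``TAUTOLOGY'') verdict in Algorithm~\ref{algo:Check-rDagProof}, which tests whether $Reg(r)$ equals $\vdash\ell(r)$, i.e.\ has empty antecedent; so the paper itself distinguishes the two cases but never says which reading the corollary assumes. Your antecedent bookkeeping --- formulas enter only at deductive leaves, $\imply$-Elim nodes only take unions, and the single-premise clause deletes exactly the discharged $\delta_1$, so by induction on $rdh$ (mirroring Lemma~\ref{lemma:soundness} and Corollary~\ref{coro:CoroLemmaSoundness}) the root antecedent is precisely the set of undischarged top-formula labels --- is the right way to close this, and your fallback observation (if ``rDagProof of $\alpha$'' is by definition closed, the corollary collapses to a one-line consequence of Soundness) covers the charitable reading. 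In short: same key lemma as the paper, but your proof supplies a step the paper silently omits and that is genuinely needed for the statement to be true as written.
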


We have the following lemmata that help us to show that a correct r-DagProof, compressed by the technique that algorithm~\ref{algo:CompressEmNDProof} implements, is sound. Moreover, in the next section, we show an algorithm that checks whether a {\bf pre} rDagProof is correct or not. This verification is efficient (linear) on the size of the {\bf pre} rDagProof. 

In the next section, we show that for any Natural Deduction $\Pi$ that has its height linearly bounded by the size of its conclusion, $Compress(\Pi)$ is a correct rDagProof of $\alpha$. From this result, we can conclude that any \mil tautology has a succinct (polynomial) and correct rDagProof. In conclusion, we describe how to use this result to show that $NP=CoNP$.   

\section{The compression of linearly height-bounded rDagProofs preserves soundness} 

This section contains some lemmata that help us to prove that the correctness of rDagProofs are preserved by
the compression of rDagProofs.



\begin{definition}[A-consistent sub-rDagProof]\label{def:FullSubrDagProof}
Let $\mathcal{D}$ be a {\bf pre} rDagProof of $\alpha$ and $\mathcal{C}$ a sub-rDagProof of $\mathcal{D}$. We say that $\mathcal{C}$ is an A-consistent sub-rDagProof of $\mathcal{D}$, if and only if, $\mathcal{D}\uparrow$ is graph-isomorphic to $\mathcal{C}$ and for every $\langle u,v\rangle\in E_{A}$, we have that $v\in V_{\mathcal{C}}$ and $u\not\in V_{\mathcal{C}}$, if and only if, $v\in V_{\mathcal{D}\uparrow k}$ and $u\not\in V_{\mathcal{D}\uparrow k}$.
  
  \end{definition}
\begin{lemma}[Local Entailment preservation under DetachLink]\label{lemma:Detach}
  Let $\mathcal{D}$ be a {\bf pre} rDagProof and $k$ a node of $\mathcal{D}$ that is the root of an instance of a A-consistent sub-rDagProof, cf. definition~\ref{def:FullSubrDagProof}, $\mathcal{C}$ of $\mathcal{D}$. Let $i\in\mathbb{N}$ be such that $i$ does not label any of the $E_d$ edges going out of $k$. Moreover, consider  $\mathcal{D}^{\prime}=DetachLink(\mathcal{D},k,\mathcal{C},i)$ as defined in Definition~\ref{def:Detach}. We have that for every $v\in V_{\mathcal{D}^{\prime}}$ and $j\in\mathbb{N}$, $j\neq i$, the following conditions hold:
  \begin{itemize}
  \item If $v\not\in V(\mathcal{D})\uparrow k$ then $M_{\vdash}^{\mathcal{D}}(v,j)=M_{\vdash}^{\mathcal{D}^{\prime}}(v,j)$, and;
  \item If $v\in V(\mathcal{D})\uparrow k$ then  $M_{\vdash}^{\mathcal{C}}(h^{-1}(v),j)=M_{\vdash}^{\mathcal{D}^{\prime}}(v,j)$, and;
    \item If $v\in V(\mathcal{D})\uparrow k$ then  $M_{\vdash}^{\mathcal{C}}(h^{-1}(v),0)=M_{\vdash}^{\mathcal{D}^{\prime}}(v,i)$; 
  \end{itemize}
  Where $h$ is the (full) labeled graph-isomorphism from $\mathcal{C}$ into $\mathcal{D}\uparrow k$.
\end{lemma}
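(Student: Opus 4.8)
The plan is to establish the three equalities simultaneously by well-founded induction on the reverse-deductive-height $rdh(\cdot)$ of Definition~\ref{def:rdh}, which is exactly the ordering that makes $M_{\vdash}$ well-defined (the remark following Definition~\ref{def:LocalEntailment}). Since $DetachLink$ does not alter the $E_d$-structure \emph{above} the reconnection points, $rdh$ of a matrix node is the same in $\mathcal{D}'$ as in the standalone $\mathcal{C}$, so the induction can be run in $\mathcal{D}'$ and transported through $h$. The first task is to tabulate what $DetachLink(\mathcal{D},k,\mathcal{C},i)$ changes, reading off Definition~\ref{def:Detach}: (i) each deductive edge $\langle k,v\rangle$ is replaced by $\langle r(\mathcal{C}),v\rangle$ carrying the same $L$-label and $\rho$-value $i$; (ii) for each such target $v$ an ancestrality edge $\langle v,w\rangle$ into every initial $w\in I(\mathcal{C})$ (Definition~\ref{def:Initials}) is added, with $\delta$-value $i$; and (iii) the detached instance $\mathcal{D}\uparrow k$ is discarded in favour of the canonical copy $\mathcal{C}$. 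Because $h$ is a full labelled isomorphism it preserves $\ell$, $L$, $\rho$ and the edge-structure, so the clause of Definition~\ref{def:LocalEntailment} firing at a matrix node $v$ of $\mathcal{D}'$ is the very clause, with the same data, firing at $h^{-1}(v)$ in standalone $\mathcal{C}$ — up to the presence of the fresh index $i$.

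I would organise the induction around the structural fact that every clause of Definition~\ref{def:LocalEntailment} computes $M_{\vdash}(v,j)$ from the premise values $M_{\vdash}(u,j)$ at the \emph{same} index $j$, the divergent and target-of-divergent clauses being the only ones that re-route an index, and even they do so strictly along the $\rho/\delta$ labels. This ``index non-interference'' lets the argument split into two disjoint strands. For $j\neq i$, nothing added by $DetachLink$ carries the label $j$, so inside $\mathcal{C}$ the premises, their $L$-labels and their incoming ancestrality edges at index $j$ are exactly those of standalone $\mathcal{C}$ — here A-consistency (Definition~\ref{def:FullSubrDagProof}) certifies that the ancestrality edges crossing the boundary of $\mathcal{C}$ coincide with those crossing the boundary of $\mathcal{D}\uparrow k$ — and the identical clause yields condition~2; correspondingly, outside the matrix the only visible alteration is a premise redirected from $k$ to $r(\mathcal{C})$, giving condition~1 once that premise value is shown unchanged (the delicate point, treated last). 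For the index $i$, the edges of item (ii) seed the initials of $\mathcal{C}$ at index $i$ with precisely the entailments the ``Deductive Leaf'' clauses assign at index $0$ in standalone $\mathcal{C}$; propagating this upward through the identical clauses gives condition~3, $M_{\vdash}^{\mathcal{D}'}(v,i)=M_{\vdash}^{\mathcal{C}}(h^{-1}(v),0)$.

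Concretely, the base case ($rdh=1$) is the two ``Deductive Leaf'' clauses: an external leaf gains no incoming edge and is literally unchanged at every index (condition~1), whereas an initial of $\mathcal{C}$ gains one or more incoming ancestrality edges, all labelled $i$, which the ``Deductive Leaf with Ancestrality'' clause converts into the index-$i$ value $\{\ell(v)\}\vdash\ell(v)$ — matching the index-$0$ value of standalone $\mathcal{C}$ (condition~3) while, by freshness of $i$ and A-consistency, leaving the remaining indices at that node as in $\mathcal{C}$ (condition~2). The inductive step is then the uniform walk through the internal-node, divergent-node and target-of-divergent clauses: in each, the already-established equalities at the premises (of smaller $rdh$) are fed into the same clause and produce the claimed equality at $v$.

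The hard part will be the ``seam'': the nodes $v$ below $k$ that become targets of the now-\emph{divergent} root $r(\mathcal{C})$. At such a $v$ the governing clause genuinely changes form — from an ordinary $\imply$-elimination/introduction clause in $\mathcal{D}$ to the ``Target of Divergent Deductive Internal Node'' clause in $\mathcal{D}'$ — so condition~1 cannot be read off by syntactic identity of the rule. I must instead verify that the value routed in through $\langle r(\mathcal{C}),v\rangle$ at $\rho$-value $i$, namely $M_{\vdash}^{\mathcal{D}'}(r(\mathcal{C}),i)$, coincides with the value $M_{\vdash}^{\mathcal{D}}(k,\cdot)$ that $v$ previously received (using that $\mathcal{C}\cong\mathcal{D}\uparrow k$ together with conditions~2 and~3 already proved at $r(\mathcal{C})$), and that the index transported across is bookkept consistently on the deductive side ($\rho=i$) and the matching ancestrality side ($\delta=i$). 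This is exactly where the ${\bf E_A}$-source and ${\bf E_A}$-target consistency clauses of Definition~\ref{def:Prer-DagProof} and the freshness of $i$ must be combined; everywhere else the proof is a mechanical propagation through the recursion.
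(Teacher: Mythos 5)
Your proposal is correct and follows essentially the same route as the paper's own proof: $\mathcal{D}^{\prime}$ decomposes into the complement of the detached instance, where the recursion for $M_{\vdash}^{\mathcal{D}^{\prime}}$ agrees with $M_{\vdash}^{\mathcal{D}}$, and the canonical copy $\mathcal{C}$, where it agrees with the standalone $M_{\vdash}^{\mathcal{C}}$ (its index $0$ being re-routed to the fresh index $i$ through the newly added $\rho/\delta$-labelled edges), with the only delicate point being the new edges out of $r(\mathcal{C})$. Your write-up is in fact more rigorous than the paper's short inspection argument: the explicit induction on $rdh$, the ``index non-interference'' observation, the appeal to A-consistency and freshness of $i$, and the isolation of the seam at the targets of the now-divergent root $r(\mathcal{C})$ are precisely the details the paper's proof leaves implicit.
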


\begin{proof}

  By inspecting  Definition~\ref{def:Detach}, we observe that  
  $\mathcal{D}^{\prime}=(\mathcal{D}-\mathcal{C}^{\prime})\cup \mathcal{C}$, where $\mathcal{D}\uparrow k=\mathcal{C}^{\prime}=h(\mathcal{C})$. We note that $M_{\vdash}^{\mathcal{D}^{\prime}}$ definition is then either on the complement of $\mathcal{D}\uparrow k$, or on $\mathcal{C}$. The former agrees with $M_{\vdash}^{\mathcal{D}}$ and the later agrees with $M_{\vdash}^{\mathcal{D}\uparrow k}=M_{\vdash}^{h(\mathcal{C})}$. Finaly, the third item in the statement of the lemma is related to the new edge that links the root of $\mathcal{C}$ to the former target of the $E_d$ edge that linked $k$ with this target. The top-formulas that are related to the basis steps of $M_{\vdash}$ recursion are accordingly accordingly associated to each respective part of $\mathcal{D}^{\prime}$, the same can be said about the recursive steps. So we have the desired result.  

  \end{proof}

\begin{corollary}[Soundness of DetachLink]\label{coro:DetachLink}
  Consider the conditions of Lemma~\ref{lemma:Detach}, above. We have that if $\mathcal{C}$ is correct and $\mathcal{D}$ is also correct then $DetachLink(\mathcal{D},k,\mathcal{C},i)$ is correct. Moreover, the local entailment is preserved, modulo the the isomorphism between $\mathcal{D}\uparrow k$ and $\mathcal{C}$.
  \end{corollary}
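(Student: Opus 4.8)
The plan is to reduce everything to Lemma~\ref{lemma:Detach}, since correctness (Definition~\ref{def:SoundrDag}) is a statement about the root only, and the root is the one node that the detach-and-link operation leaves structurally untouched. First I would note that the root $r$ of $\mathcal{D}^{\prime}=DetachLink(\mathcal{D},k,\mathcal{C},i)$ equals the root of $\mathcal{D}$ and that $r\notin V(\mathcal{D})\uparrow k$: the node $k$ is the root of a matrix instance occurring at a positive level, so $k\neq r$ and the whole deductive path from $r$ down to the reconnection point lies in the complement of $\mathcal{D}\uparrow k$. This places $r$ under the first bullet of Lemma~\ref{lemma:Detach}.

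From that first bullet, for every index $j\neq i$ we have $M_{\vdash}^{\mathcal{D}^{\prime}}(r,j)=M_{\vdash}^{\mathcal{D}}(r,j)$. In particular, taking $j=0$ (recall $i$ is a fresh tag, so $0\neq i$) and using that $\mathcal{D}$ is correct, $M_{\vdash}^{\mathcal{D}^{\prime}}(r,0)=M_{\vdash}^{\mathcal{D}}(r,0)\neq\symbvdash$; and for every $j$ with $0\neq j\neq i$ we get $M_{\vdash}^{\mathcal{D}^{\prime}}(r,j)=M_{\vdash}^{\mathcal{D}}(r,j)=\symbvdash$. Hence the only index for which the root condition of Definition~\ref{def:SoundrDag} is not yet settled is $j=i$.

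The main obstacle is exactly this index $j=i$, and it is handled by tracing how the fresh tag $i$ travels. By construction (Definition~\ref{def:Detach}) the tag $i$ is seeded only at the initials $I(\mathcal{C})$ through the new ancestrality edges ($\delta=i$), is propagated upward through the kept copy $\mathcal{C}$, and reaches its root with $M_{\vdash}^{\mathcal{D}^{\prime}}(r(\mathcal{C}),i)=M_{\vdash}^{\mathcal{C}}(r(\mathcal{C}),0)$ by the third bullet of Lemma~\ref{lemma:Detach}; this value is $\neq\symbvdash$ because $\mathcal{C}$ is correct. At the reconnection node $v$, the former $E_d$-target of $k$, the edge $\langle r(\mathcal{C}),v\rangle$ carries $\rho=i$, so $v$ falls under the ``Target of Divergent Deductive Internal Node'' clause of Definition~\ref{def:LocalEntailment}: the index-$i$ value of $r(\mathcal{C})$ is re-installed at $v$ under index $0$ (or under $\delta$ of an incoming ancestrality edge when $v$ is itself such a target), while $M_{\vdash}^{\mathcal{D}^{\prime}}(v,i)=\symbvdash$. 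Since $v$ is the unique gateway joining the detour region to the ambient part that contains $r$, and no node on the path from $v$ up to $r$ re-seeds index $i$, the value $M_{\vdash}^{\mathcal{D}^{\prime}}(r,i)=\symbvdash$ follows by the recursion of $M_{\vdash}$ along that path. Combining with the previous paragraph, all root conditions of Definition~\ref{def:SoundrDag} hold and $\mathcal{D}^{\prime}$ is correct.

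Finally, the ``local entailment preserved modulo isomorphism'' clause is nothing more than a restatement of Lemma~\ref{lemma:Detach}: outside $\mathcal{D}\uparrow k$ the first bullet gives agreement with $\mathcal{D}$, and inside the copy the second and third bullets give $M_{\vdash}^{\mathcal{C}}(h^{-1}(v),j)=M_{\vdash}^{\mathcal{D}^{\prime}}(v,j)$ for $j\neq i$ together with $M_{\vdash}^{\mathcal{C}}(h^{-1}(v),0)=M_{\vdash}^{\mathcal{D}^{\prime}}(v,i)$, where $h$ is the labelled graph-isomorphism from $\mathcal{C}$ onto $\mathcal{D}\uparrow k$. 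I expect the bookkeeping of the tag $i$ at the reconnection node to be the only delicate point; everything else is a direct citation of Lemma~\ref{lemma:Detach} and of the correctness hypotheses on $\mathcal{C}$ and $\mathcal{D}$.
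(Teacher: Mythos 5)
Your proof is correct, and it follows the only route available, which is also the one the paper intends: read the corollary off Lemma~\ref{lemma:Detach}. The paper, however, supplies no explicit proof of this corollary at all; it is stated as if immediate from the lemma. You correctly observe that it is \emph{not} literally immediate, because Lemma~\ref{lemma:Detach} is silent precisely about the index $j=i$: its first bullet settles $M_{\vdash}^{\mathcal{D}^{\prime}}(r,j)$ for all $j\neq i$, while the root condition of Definition~\ref{def:SoundrDag} also demands $M_{\vdash}^{\mathcal{D}^{\prime}}(r,i)=\symbvdash$. Your tag-tracing argument --- the tag $i$ is seeded only at $I(\mathcal{C})$ by the new ancestrality edges of Definition~\ref{def:Detach}, climbs the kept copy, and is converted back to index $0$ (or to the $\delta$-index of an incoming ancestrality edge) at each reconnection node by the ``Target of Divergent Deductive Internal Node'' clause of Definition~\ref{def:LocalEntailment}, after which no edge of $\mathcal{D}^{\prime}$ can revive it --- is exactly the missing step, and it is sound; it is what makes the corollary true rather than merely asserted. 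Two caveats, neither fatal: (i) you tacitly need $i\neq 0$ and $i$ distinct from every $\delta$-value already occurring in $\mathcal{D}$, a freshness condition slightly stronger than the one literally stated in Lemma~\ref{lemma:Detach} (that $i$ not label the $E_d$ edges leaving $k$); this stronger freshness is what algorithm~\ref{algo:Collapse} actually provides through its counter, and it should be stated as a hypothesis. (ii) When $k$ has several outgoing deductive edges there are several reconnection nodes, not a ``unique gateway''; but each of them kills the tag $i$ in the same way, so your argument goes through unchanged once rephrased as ``every path from the copy back to $r$ passes through one of these reconnection nodes.''
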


An immediate consequence of Corollary~\ref{coro:DetachLink} is that it the DetackLink operation can be repeated many times, without disturbing the soundness of the yielded rDagProof. Due to this, we have the following lemma.

\begin{lemma}[Soundness of Collapse]\label{lemma:CollapseSoundness}
  Let $\mathcal{D}$ and $\mathcal{C}$ be r-DagProofs. Let $\mathcal{C}$ be A-consistent subgraph of $\mathcal{D}$. Let $\mathcal{Y}$ be a list containing the roots of the instances of $\mathcal{C}$ in a fixed level $\mu$. We have that if $\mathcal{D}$ and $\mathcal{C}$ are correct then $Collapse(\mathcal{D},\mathcal{Y},\mathcal{C})$ is correct too. Moreover the local entailment is preserved as stated in Lemma~\ref{lemma:Detach}.
\end{lemma}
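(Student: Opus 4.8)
The plan is to read off from Algorithm~\ref{algo:Collapse} that $Collapse(\mathcal{D},\mathcal{Y},\mathcal{C})$ is nothing more than a finite composition of $DetachLink$ operations, one for each element of $rest(\mathcal{Y})$, and then to lift the single-step result of Corollary~\ref{coro:DetachLink} to the whole composition by induction on the number of iterations of the for-loop. This is exactly the route the remark preceding the lemma advertises, so the substance of the argument is not in the iteration itself but in checking that the hypotheses of Corollary~\ref{coro:DetachLink} remain satisfied at every step.

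First I would fix notation for the chain. Writing $\mathcal{Y}=\langle k_0,k_1,\ldots,k_t\rangle$ for the roots of the instances of $\mathcal{C}$ in level $\mu$, the assignment $\mathcal{Y}\leftarrow rest(\mathcal{Y})$ leaves the loop ranging over $k_1,\ldots,k_t$, so the computation produces a chain $\mathcal{D}=\mathcal{D}_0,\mathcal{D}_1,\ldots,\mathcal{D}_t=Collapse(\mathcal{D},\mathcal{Y},\mathcal{C})$ with $\mathcal{D}_s=DetachLink(\mathcal{D}_{s-1},k_s,\mathcal{C},j@s)$ and $j=\ell_{\mathcal{C}}(root(\mathcal{C}))$; the instance rooted at $k_0$ is the one kept as the single surviving copy. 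The induction hypothesis is that $\mathcal{D}_{s-1}$ is a correct rDagProof and that its local entailment agrees with that of $\mathcal{D}$ on every node lying outside the already-detached instances, modulo the graph-isomorphisms supplied by Lemma~\ref{lemma:Detach}.

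The inductive step is a direct appeal to Corollary~\ref{coro:DetachLink}: $\mathcal{C}$ is correct by hypothesis and $\mathcal{D}_{s-1}$ is correct by the induction hypothesis, whence $\mathcal{D}_s$ is correct and its local entailment is preserved modulo the isomorphism between $\mathcal{D}_{s-1}\uparrow k_s$ and $\mathcal{C}$. Two side-conditions, inherited from Lemma~\ref{lemma:Detach}, must be verified to hold at each step. First, the label fed to the $s$-th $DetachLink$ must not already label any $E_d$-edge leaving $k_s$; the algorithm secures this because the labels $j@1,j@2,\ldots$ are pairwise distinct and are freshly minted by the collapse, so none of them labels an edge of the original $\mathcal{D}$ nor a label introduced by a previous detachment at a different node. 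Second, $k_s$ must still be the root of an \emph{A-consistent} instance of $\mathcal{C}$ in $\mathcal{D}_{s-1}$, in the sense of Definition~\ref{def:FullSubrDagProof}.

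The main obstacle is this second side-condition: one must argue that detaching the instances at $k_1,\ldots,k_{s-1}$ leaves the instance rooted at $k_s$ present and A-consistent. Since all roots in $\mathcal{Y}$ lie in the common level $\mu$, the (Global) level condition forbids any $k_s$ from being an ancestor of another $k_{s'}$, so no earlier detachment can delete the root of a later instance. Combined with the A-consistency hypothesis on $\mathcal{C}$—which pins down exactly how the ancestrality edges cross the boundary of each instance—this guarantees that each $DetachLink$ edits only the nodes and incident edges of its own instance and re-routes the edges incident to $k_s$, leaving the instances rooted at $k_{s+1},\ldots,k_t$, together with their ancestrality edges, untouched; hence A-consistency is preserved along the iteration. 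With both side-conditions in place the hypotheses of Corollary~\ref{coro:DetachLink} hold at every step, so the chain of local-entailment equalities from Lemma~\ref{lemma:Detach} composes, yielding the ``moreover'' clause, and correctness of $\mathcal{D}_t$ at its root follows from the final application of the corollary.
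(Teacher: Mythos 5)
Your proof is correct and follows essentially the same route as the paper, which presents the lemma as an immediate consequence of Corollary~\ref{coro:DetachLink} applied once per iteration of the loop in Algorithm~\ref{algo:Collapse}. The paper in fact gives no explicit proof beyond that remark, so your induction on the chain of $DetachLink$ applications, together with the verification of the label-freshness and A-consistency side-conditions at each step, is a more careful rendering of the paper's intended argument.
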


The above Lemma~\ref{lemma:CollapseSoundness} is the correctness proof of algorithm~\ref{algo:Collapse}.

The following theorem  proves the correctness of algorithm~\ref{algo:CompressEmNDProof}

\begin{theorem}[Soundness of Compressed rDagProofs]\label{lemma:SoundnessOfCompress}
  Let $\mathcal{D}=\langle V , E_{d}, E_{A}, r, l, L,\rho,\delta,\mathcal{O}_{\alpha}\rangle$ be a {\bf pre} rDagProof for a \mil formula $\alpha$. If $\mathcal{D}$ is correct then for every $3< p$ $Compress(\mathcal{D},p)$ is correct. Moreover, the local entailment is preserved, i.e., 
\end{theorem}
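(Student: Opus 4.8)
The plan is to argue by induction on the number of recursive calls made by $Compress$, an induction that is well-founded because Lemma~\ref{lemma:MaximalCallsCompressProofs} bounds that number by $\size{\mathcal{D}}$. I would strengthen the induction hypothesis to carry, in addition to correctness, the assertion that the root local entailment is unchanged, i.e. $M_{\vdash}^{Compress(\mathcal{D},p)}(r,0)=M_{\vdash}^{\mathcal{D}}(r,0)$; this is what the truncated ``Moreover'' clause of the statement is meant to record. The preservation of correctness under the two primitive graph operations is already in hand through Corollary~\ref{coro:DetachLink} for $DetachLink$ and Lemma~\ref{lemma:CollapseSoundness} for $Collapse$, so the substance of this proof is not a new semantic computation but rather threading those preservation results correctly through the control flow of Algorithm~\ref{algo:CompressEmNDProof}.

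For the base case I would observe that when $\size{\mathcal{D}}\le m^{p}$ the test in line~\ref{line:IF} fails, the \textbf{else} branch returns $\mathcal{D}$ verbatim, and both correctness and the root entailment are preserved trivially. For the inductive step, assume $\size{\mathcal{D}}>m^{p}$. The algorithm walks the local lowest levels supplied by Lemma~\ref{lemma:ListForCollapse} from the bottom upward, and at each level performs, for every pair $\langle\mathcal{Y},\mathcal{C}_{\mathcal{Y}}\rangle$, a recursive compression $\mathcal{D}_{\mathcal{Y}}=Compress(\mathcal{C}_{\mathcal{Y}},p)$ followed by $Collapse(\mathcal{D},\mathcal{Y},\mathcal{D}_{\mathcal{Y}})$. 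I would maintain the loop invariant that the running value of $\mathcal{D}$ stays a correct rDagProof whose root entailment equals that of the input. To see the invariant holds initially, note that each matrix $\mathcal{C}_{\mathcal{Y}}$ is a sub-rDagProof of the current $\mathcal{D}$; by the recursive definition of $M_{\vdash}$ (Definition~\ref{def:LocalEntailment}) together with the well-definedness argument based on reverse-deductive height, the root of $\mathcal{C}_{\mathcal{Y}}$ carries exactly the defined local entailment it carries inside $\mathcal{D}$. Hence $\mathcal{C}_{\mathcal{Y}}$ is itself correct, the induction hypothesis applies, and $\mathcal{D}_{\mathcal{Y}}$ is correct with identical root entailment. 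Invoking Lemma~\ref{lemma:CollapseSoundness} then yields that the post-$Collapse$ value of $\mathcal{D}$ is again correct with unchanged root entailment, closing the invariant; when every loop terminates the returned $\mathcal{D}$ is correct and $M_{\vdash}^{Compress(\mathcal{D},p)}(r,0)=M_{\vdash}^{\mathcal{D}}(r,0)$.

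The hard part, and the place where the argument must be made with care, is discharging the hypothesis of Lemma~\ref{lemma:CollapseSoundness} that the matrix being collapsed is an \emph{A-consistent} sub-rDagProof in the sense of Definition~\ref{def:FullSubrDagProof}. This is exactly where the bottom-up order of the outer loop and the independence of the matrix set returned by Lemma~\ref{lemma:ListForCollapse} do their work. Because the matrix set is independent — no instance of one matrix is a sub-derivation of an instance of another — and because levels are processed from the lowest upward, at the moment a matrix $\mathcal{C}_{\mathcal{Y}}$ is collapsed every ancestrality edge already introduced by an earlier $DetachLink$ lies either entirely inside a single instance $\mathcal{D}\uparrow k$ or entirely outside it, and no such edge straddles the boundary of an instance. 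I would make this precise by verifying that the $E_{A}$-target and $E_{A}$-source consistency conditions of Definition~\ref{def:Prer-DagProof} stay compatible with the graph-isomorphism between $\mathcal{D}\uparrow k$ and $\mathcal{C}_{\mathcal{Y}}$, which in turn requires explicitly tracking the detour labels $\rho$ and $\delta$ that successive collapses attach, just as in the statement and proof of Lemma~\ref{lemma:Detach}. Once the A-consistency invariant is shown to be maintained across iterations, the application of Lemma~\ref{lemma:CollapseSoundness} at every step is licensed and the theorem follows.
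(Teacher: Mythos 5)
Your proposal follows essentially the same route as the paper's own proof: induction on the number of recursive calls to $Compress$ (well-founded by Lemma~\ref{lemma:MaximalCallsCompressProofs}), a trivial base case when the \textbf{else} branch returns the input unchanged, and an inductive step that combines the induction hypothesis with Lemma~\ref{lemma:CollapseSoundness}. Your additions --- the strengthened hypothesis on the root entailment, the explicit loop invariant, and the discharge of the A-consistency hypothesis via the independence of the matrix set and the bottom-up processing order --- are careful elaborations of exactly the skeleton the paper states in three sentences, not a different argument.
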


\begin{proof}

This proof proceeds by induction on the number of, recursive, calls to $Compress$. In the proof of the termination of algorithm~\ref{algo:CompressEmNDProof},  we have seen that it halts for every $\mathcal{D}$ and $p$, $3< p$. The basis step is trivial, and we use the inductive hypothesis together with Lemma~\ref{lemma:CollapseSoundness} to prove the inductive (recursive) step. 
  
\end{proof}

\section{On the complexity of verifying that a {\bf pre} rDagProof is correct or not}

The following algorithm performs a top-down sweeping in any {\bf pre} rDagProof to check whether it is correct or not. It is an iterative implementation of $M_{\vdash}^{\mathcal{C}}$ that prints can check whether the {\bf pre} rDagProof is correct, and certifies a tautology or not. In the case, it is not a tautology it prints ``DERIVATION''. Finally, it prints ``INCORRECT'' if the rDagProof is not correct. The definition of $M_{\vdash}$ points out the correctness of the algorithm. We analyse its computational complexity in the sequel. We have to note that the $update-and-check$, inside the iteration structures in lines~\ref{line:RegisterUpdate1} and~\ref{line:RegisterUpdate2} is responsible by updating the local entailment data-structure (the $Reg$ indexed structure) with $\symbvdash$ to indicate that the checking algorithm detected an incorrect {\bf pre} rDagProof.

 \begin{algorithm}[H]
    \scriptsize
  \caption{Verifies whether a r-DagProof is valid}\label{algo:Check-rDagProof}
   \begin{algorithmic}[1]
     \STATE{{\bf Function}\;Check-rDagProof($\mathcal{C}$)}
     \FOR{$k=height(\mathcal{C}) downto 0$}\label{line:LevelSweep}
     \STATE{$L\leftarrow TopFormulas(k)$}
     \FOR{$top\in L$}\label{line:TopFormulas}
     \STATE{$Reg(top)\leftarrow\emptyset$}
     \FOR{$edge\in AncestorEdges(top)$}\label{line:AncestorTopFormulas}
     \STATE{$Reg(top)\leftarrow Reg(top)\cup\{\delta(edge)\mapsto \ell(top)\vdash\ell(top)\}$}
     \ENDFOR
     \STATE{$Reg(top)\leftarrow Reg(top)\cup\{0\mapsto \ell(top)\vdash\ell(top)\}$}\label{line:ZeroTopFormula}
     \ENDFOR
     \STATE{$I\leftarrow InternalNodes(k)$}
     \FOR{$v\in I$}\label{line:InternalNodes}
     \STATE{$q\leftarrow Premisses(v)$}
     \IF{$Divergent(v)$}
     \STATE{$Lv\leftarrow \{\langle v,w\rangle:\langle v,w\rangle\in E_A\}$}
     \IF{$\{i:Defined(Reg(q,i\}=\{\rho(e):e\in Lv\}$}
     \FOR{$i\in Reg(q)$}\label{line:RegisterUpdate1}
     \STATE{$Reg(v,i) \leftarrow update-and-check(v,q,i)$}
     \ENDFOR
     \ELSE
     \STATE{$Reg(v,0)\leftarrow\;\;\symbvdash$}
     \ENDIF
     \ELSE
     \IF{$TargetDivergent(v)$}\label{line:TargetDivergent}
     \STATE{$u\leftarrow\;\iota\{u:\langle u,v\rangle\in E_d\}$}
     \IF{$Defined(\rho(\langle u,v\rangle)\;\land\;\neg\exists\langle w,v\rangle\in E_A$}
     \STATE{$Reg(v,0)\leftarrow Reg(u,\rho(\langle u,v\rangle))$}
     \STATE{$Reg(v,j)\leftarrow\;\;\symbvdash\;\;\forall j\neq 0$}
     \ENDIF
     \IF{$Defined(\rho(\langle u,v\rangle)\;\land\;\exists\langle w,v\rangle\in E_A$}
     \STATE{$Reg(v,\delta(\langle w,v\rangle))\leftarrow\;Reg(u,\rho(\langle u,v\rangle))$}
     \STATE{$Reg(v,j)\leftarrow\;\;\symbvdash\;\;\forall j\neq\delta(\langle w,v\rangle)) $}
     \ENDIF
     \ELSE
     \FOR{$i\in Reg(q)$}\label{line:RegisterUpdate2}
     \STATE{$Reg(v,i) \leftarrow update-and-check(v,q,i)$}
     \ENDFOR
     \ENDIF
     \ENDIF
     \ENDFOR
     \ENDFOR
     \IF{$Reg(r)\neq\;\;\symbvdash$}
     \STATE{$ CORRECT $}
     \IF{$ Reg(r) ==\;\; \vdash\ell(r) $}
     \STATE{$TAUTOLOGY$}
     \ELSE
     \STATE{$ DERIVATION $}
     \ENDIF
     \ELSE
     \STATE{$ INCORRECT $}
     \ENDIF
   \end{algorithmic}
\end{algorithm} 

 Let $\mathcal{C}=\langle V , E_{d}, E_{A}, r, l, L,\rho,\delta,\mathcal{O}_{\alpha}\rangle$ be a {\bf pre} rDagProof. We set $n_{v}=\size{V}$, $n_{A}=\size{E_{A}}$, $m=\size{\mathcal{O}_{\alpha}}=\size{T_{\alpha}}\leq len(\alpha)$ and $h=height(\mathcal{C})=height(\langle V,E_d\rangle)$.  In the sequel, all the line references are in algorithm~\ref{algo:Check-rDagProof}. We proceed to a worst case analysis to find an upper-bounded for the number of steps to check whether $\mathcal{C}$ is correct or not. The loop that starts in line~\ref{line:LevelSweep} consumes $h$ steps, for each of these steps, we have at most $n_v$ possible nodes that are top-formulas, this is what line~\ref{line:TopFormulas} sweeps using the ``for'' statement. Inside this ``for'' there is other nested iteration on the set of ancestor edges that target the top-formulas, and a consequent updating in the list of local entailments stored in the $Reg$ data-structure. Line~\ref{line:ZeroTopFormula} is responsible by the update of the main local-entailment (indexed by 0). After that, the loop that starts in line~\ref{line:InternalNodes} takes care of the internal nodes, and for each internal node, we have two possible cases, either it is a target of a divergent node and the needed udpate on the local indexed by the index 0 is made, or, the update of all indexes, including the 0, of the local entailment structure is made. The choice is made by the ``if'' statement in line~\ref{line:TargetDivergent}. The cost of with steps that takes care of the top-formulas is $h\times n_V\times n_A$. To this we have to add the cost of processing the internal nodes that is $h\times n_v\times n_A$ too, due to the cost of updating the indexes related to each Ancestor edge in $E_A$. However, as top-formulas and internal nodes are disjoint then we have that algorithm~\ref{algo:Check-rDagProof} when applied on $\mathcal{C}$ performs  the number of steps upper-bounded by disequality~\ref{steps:um}: 
 \begin{align}
 Steps(\mathcal{C})\leq h\times n_v\times n_A\label{steps:um} \\
 Steps(\mathcal{C})\leq n_v\times n_{v}^{3}=n_v^{4}\label{steps:dois}
  \end{align}
 Observing that $n_A\leq n_{v}^2$ and $h\leq n_v$ we have the upper-bound in disequality~\ref{steps:dois}
 Since we are counting steps, we can say that the time complexity to check whether a {\bf pre} rDagProof $\mathcal{C}$ is correct or not is polynomial, 4th power indeed,  on the size of $\mathcal{C}$.

 \section{A brief argument towards $CoNP=NP$}

 We have already discussed in the introduction of this text that when considering the complexity class $CoNP$, we are naturally limited to linearly height-bounded proofs. The proofs, in \mil,  of the non-hamiltonianicity of graphs, are linearly height bounded. See the appendix in ~\cite{Exponential} or ~\cite{Addendum} for a detailed explanation on this. If $NP\neq CoNP$ then the set of non-hamiltonian graphs there is no polynomially sized and verifiable in polynomial time certificate for each of its elements. Thus, the set $S$ of all formulas that have Normal Natural Deduction proofs linear height-bounded contains the valid for the non-hamiltonian graphs. Hence, by assuming that $NP\neq CoNP$, we have to conclude that $S$ is a family of normal super-polynomial proofs with linear height. If we consider any proof in $S$, either it is polynomially sized, and we have nothing to prove, or it is bigger than 
 $m^p$, for some $p>3$, where $m$ is the size of the proof's conclusion. We observe that the case $p\leq 3$ is subsumed by $p>3$, anyway. Thus, we can apply  Theorem~\ref{main} to show that this big proof is redundant, so we can apply the compression algorithm~\ref{algo:CompressEmNDProof} to obtain a correct rDagProof of size smaller than $m^p$, according to Lemma~\ref{lemma:SizeCompressed}. Finally, the algorithm~\ref{algo:Check-rDagProof} can check the correctness of this polynomially sized rDagProof in time upper-bounded by $m^{4p}$.

 The last paragraph provided a precise argumentation showing polynomial certificates for each non-hamiltonicity of each non-hamiltonian graph. We can check each of them is a (correct) certificate in polynomial time too. We can conclude that $CoNP\subseteq NP$, since non-hamiltonicity of graphs is a $CoNP$-complete problem. Having proved that $CoNP\subseteq NP$ we have proof that $NP=CoNP$, as the following reasoning shows, where $\overline{L}$ is the set-theoretical complement of $L$. We have used the logically simplest definition of the class  $CoNP$ class as $\{\overline{L}:L\in NP\}$.

  \[
  \begin{array}{c}
    \Rightarrow CoNP\subset NP \\
    \Downarrow \\
\mbox{$L\in NP$, iff, $\overline{L}\in CoNP$, ${\color{red}CoNP\subseteq NP}$ so $\overline{L}\in NP$, iff, $L\in CoNP$}\\
    \Downarrow \\
CoNP=NP
  \end{array}
  \]
 
More details and mathematical precision on this argument contain an alternative proof to the conjecture $CoNP=NP$ and are a matter for a further article.

\section{Conclusion}\label{sec:Conclusion}

This article shows that for any huge proof of a tautology in \mil we obtain a succinct certificate for its validity. Moreover,  we offer an algorithm able to check this validity in polynomial time on the certificate's size. We can use this result to provide a compression method to propositional proofs. Moreover, we can efficiently check the compressed proof without uncompressing it. Thus, we have many advantages over traditional compression methods based on strings. The compression ratio of techniques based on collapsing redundancies seems to be bigger, as shown in \cite{FlavioHaeuslerEBL} that reports some experiments with a variation of the Horizontal Compression method compared with Huffman compression. The second and more important advantage is the possibility to check for the validity of the compressed proof without having to uncompress it. In general, the original proof is huge, super-polynomial and hard check computationally.

Another application of the results in this article is to provide an alternative proof of $NP=CoNP$. In \cite{BoSL} we have a proof that $NP=NPSPACE$. An immediate consequence of this equality is that $NP=CoNP$. The approach that arises from the results we have shown here does not need Hudelmaier~\cite{Hudelmaier} linearly bounded sequent calculus for \mil logic. The proof reported in~\cite{BoSL}, on the other hand, needs Hudelmaier Sequent Calculus and a translation to Natural Deduction proofs that preserves the linear upper-bound. However, the resulted translation is not normal, and it is well-known that normalization does not preserve upper-bounds in general. Thus, we cannot apply our approach to the whole class of \mil tautologies to prove that $NPSPACE\subseteq NP$, for the use of normal proofs is essential to obtain the redundancy lemma, i.e., Lemma~\ref{main}. However, the compression method reported in this article, due to the redundancy lemma, provides knowledge to prove \mil short tautologies automatically. It seems easier than the use of the double certificate approach in~\cite{BoSL}.

\section{Acknowledgement}

We would like very much to thank professor Lew Gordeev for the work we have done together and the inspiration to follow this alternative approach. Thank Professor Luiz Carlos Pereira for his support, lessons and ideas on Proof Theory since the first course I have taken with him as a student. Thank the proof-theory group at Tuebingen-University, led by prof. Peter Schroeder-Heister. Many thanks to profs Gilles Dowek (INRIA) and Jean-Baptiste Joinet (univ. Lyon) for the intense interaction during this work's elaboration. Finally, we want to thank all students, former students, and colleagues who discussed with us in many stages during this work. We must have forgotten to mention someone, and we hope we can mend this memory failure in a nearer future.

\end{document}